\documentclass[11pt]{article}
\usepackage{amsmath,amssymb,amsthm,mathtools}
\usepackage[margin=1in]{geometry}
\usepackage[numbers,sort&compress]{natbib}
\usepackage[colorlinks=true,linkcolor=blue,citecolor=blue,urlcolor=blue]{hyperref}
\usepackage{microtype}

\newtheorem{theorem}{Theorem}[section]
\newtheorem{lemma}[theorem]{Lemma}
\newtheorem{proposition}[theorem]{Proposition}
\newtheorem{corollary}[theorem]{Corollary}

\theoremstyle{definition}

\theoremstyle{remark}
\newtheorem{remark}[theorem]{Remark}

\newcommand{\cH}{\mathcal H}
\newcommand{\cF}{\mathcal F}
\newcommand{\Ralg}{\mathcal R}
\newcommand{\diam}{\operatorname{diam}}
\newcommand{\Var}{\operatorname{Var}}
\newcommand{\dom}{\operatorname{dom}}
\newcommand{\Tr}{\operatorname{Tr}}
\newcommand{\id}{\operatorname{id}}
\newcommand{\spn}{\operatorname{span}}

\title{Energy-Constrained Commutator Variance for Weyl Pairs}
\author{Hassan Nasreddine\thanks{\texttt{hassan.nasreddine@fortisec.net}}}
\date{}

\begin{document}
\maketitle

\begin{abstract}
Let $W$ be a representation of the Weyl relations on a separable Hilbert space and write
$\mathcal R_W(K)=\{W(f):f\in K\}''$.  If the symplectic pairing of $K_1$ and $K_2$ is nonzero,
there is a self-adjoint unitary $B\in\mathcal R_W(K_2)$ such that, for every normal state $\rho$,
one can choose a self-adjoint unitary $A_\rho\in\mathcal R_W(K_1)$ with
$\Var_\rho(-i[A_\rho,B])=4$.  Thus the optimized mean-input-energy-constrained coefficient is
$4$ at every admissible energy threshold.  For the fixed trigonometric witness $h$, the identity
for $4-h^2$ reduces the deficit to a phase-fixed problem for commuting squared Weyl translations.
For the one-mode quadratic energy
$G_M=\tfrac12R^TMR-\tfrac12\sqrt{\det M}$ with $M>0$ and $u^T\Omega v=\pi$, the resulting all-state
variational problem has the asymptotic law
\[
4-\gamma_{G_M,E}(h)
=\frac{1}{4E}\bigl(u^TMu+v^TMv+2\pi\sqrt{\det M}\bigr)+O(E^{-2}).
\]
The lower bound holds over all normal states satisfying the mean-energy constraint, and a localized
Zak construction attains the same coefficient.  For a general second-quantized energy
$G=d\Gamma(H_1)$, the fixed-witness deficit is $\Theta(E^{-1})$ whenever at least one witness
direction has a nonzero positive-energy component; if both directions are zero modes, saturation is
exact at every positive threshold.
\end{abstract}

\section{Introduction and overview}

Energy constraints are useful in infinite-dimensional quantum systems because they restrict the
states used to probe a system without truncating its operator algebra.  This distinction is
central in the energy-constrained norm framework of Shirokov and related work
\cite{Shirokov2018,BeckerDattaLamiRouze2021}.  A different, complementary condition concerns
the operation itself: an energy-limited channel or dynamics maps finite-energy inputs to outputs
whose energy can be controlled quantitatively \cite{vanLuijk2025EnergyLimited}.  The two
requirements need not agree.

The present paper studies this distinction for an order defect attached to two von Neumann
algebras.  For bounded self-adjoint $h$ and a positive self-adjoint reference observable $G$, put
\begin{equation}
\gamma_{G,E}(h)
:=2\sup_{\rho\in\mathfrak S_{G,E}}\sqrt{\Var_\rho(h)},
\qquad
\mathfrak S_{G,E}:=\{\rho\ge0:\Tr\rho=1,\ \Tr(G\rho)\le E\},
\label{eq:gamma-intro}
\end{equation}
where $E>\inf\sigma(G)$.  For von Neumann algebras $M,N\subset B(\cH)$ define
\begin{equation}
\Gamma_{G,E}(M,N)
:=\sup_{\substack{a=a^*\in M,\ \|a\|\le1\\ b=b^*\in N,\ \|b\|\le1}}
\gamma_{G,E}(-i[a,b]).
\label{eq:Gamma-intro}
\end{equation}
The unconstrained analogue is
\begin{equation}
\Gamma(M,N)
:=\sup_{\substack{a=a^*\in M,\ \|a\|\le1\\ b=b^*\in N,\ \|b\|\le1}}
\diam\sigma(-i[a,b]).
\label{eq:Gamma-unconstrained}
\end{equation}
These quantities arise as mixed second-order coefficients of channel-order defects; the
operator-algebraic and channel-theoretic formulation is developed in
\cite{Nasreddine2026DualityArxiv}.  That work proves
\[
0\le \Gamma_{G,E}(M,N)\le \Gamma(M,N)\le4,
\qquad
\Gamma_{G,E}(M,N)\uparrow\Gamma(M,N)
\quad(E\to\infty),
\]
see \cite[Thm.~4.14]{Nasreddine2026DualityArxiv}, and identifies the unrestricted value $4$ for a
nonzero symplectic pairing in the symmetric Fock setting
\cite[Prop.~5.4 and Prop.~5.7]{Nasreddine2026DualityArxiv}.  The questions considered here
are more specific: whether the endpoint is already attained at a fixed energy threshold, and what
energy law remains when the bounded witness is fixed in advance.  The first question has an exact
statewise answer.  The second leads to an all-state variational problem for commuting squared Weyl
translations.  In one mode with a positive quadratic reference energy, its leading coefficient can
be computed explicitly; for a general $d\Gamma(H_1)$ one obtains an inverse-energy/zero-mode
dichotomy.

Related extremal phenomena occur in the operator-algebraic theory of Bell correlations
\cite{SummersWerner1987Generic,Landau1987Bell}.  There the algebras commute and the extremal
quantity is a correlation functional.  Here the nonzero symplectic pairing makes the Weyl
algebras noncommuting, and the quantity of interest is the variance of their commutator.

\subsection{Main results}

\begin{theorem}[Statewise Weyl saturation]
\label{thm:intro-saturation}
Let $W$ be a representation of the Weyl relations of $\mathfrak h$ on a separable Hilbert
space $\cH_W$, and set $\mathcal R_W(K):=\{W(f):f\in K\}''$.  If
$K_1,K_2\subset\mathfrak h$ are closed real subspaces with $\omega(K_1,K_2)\ne0$, then there is
a self-adjoint unitary $B\in\mathcal R_W(K_2)$ such that for every normal state $\rho$ on
$B(\cH_W)$ one can choose a self-adjoint unitary $A_\rho\in\mathcal R_W(K_1)$ with
\begin{equation}
\Var_\rho\!\left(-i[A_\rho,B]\right)=4.
\label{eq:intro-statewise}
\end{equation}
Consequently, for every positive self-adjoint operator $G$ on $\cH_W$ and every
$E>\inf\sigma(G)$,
\begin{equation}
\Gamma_{G,E}(\mathcal R_W(K_1),\mathcal R_W(K_2))=4.
\label{eq:intro-saturation}
\end{equation}
\end{theorem}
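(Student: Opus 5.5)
The plan is to build, from one suitably normalized pair of Weyl operators, a self-adjoint unitary $B\in\mathcal R_W(K_2)$ and a family of self-adjoint unitaries in $\mathcal R_W(K_1)$ that \emph{anticommute} with $B$. The variance requirement then becomes a linear condition on $\rho$, which I would solve by adjusting a sign function on a central spectral variable.

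\emph{Step 1 (normalization).} Since $\omega(K_1,K_2)\neq0$ and both spaces are real linear, pick $f\in K_1$, $g\in K_2$ with $\omega(f,g)=\pi$. Put $U=W(f)$ and $V=W(g)$. The Weyl relations give $UV=e^{\pm i\pi}VU=-VU$, so $U$ and $V$ anticommute. Consequently $U^2=W(2f)$ commutes with $V$ and $V^2=W(2g)$ commutes with $U$.

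\emph{Step 2 (self-adjoint unitaries).} Let $r$ be a Borel square-root branch on the circle, $r(z)^2=z$, and set
\[
B:=V\,\overline{r}(V^2),\qquad A_1:=U\,\overline{r}(U^2).
\]
Then $B$ is unitary and lies in $\mathcal R_W(K_2)$, since $V^2$ lies in that algebra. Moreover $B^2=V^2\,\overline{r}(V^2)^2=1$, so $B$ is a self-adjoint unitary; the same argument applies to $A_1\in\mathcal R_W(K_1)$. Because $U$ anticommutes with $V$ and commutes with $V^2$, we get $A_1B=-BA_1$. Hence $-i[A_1,B]=-2iA_1B=2C$, where $C:=-iA_1B$ is again a self-adjoint unitary. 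For any other self-adjoint unitary $A$ anticommuting with $B$ the same computation gives
\[
\Var_\rho(-i[A,B])=4\bigl(1-\langle -iAB\rangle_\rho^2\bigr).
\]
So the statewise claim is equivalent to finding, for each $\rho$, such an $A$ with $\Tr(\rho\,(-iAB))=0$.

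\emph{Step 3 (tuning to the given state).} $B$ stays fixed and I vary $A_\sigma:=A_1\sigma(U^2)$, where $\sigma$ is any Borel function on the circle with values in $\{\pm1\}$. Each $A_\sigma$ is a self-adjoint unitary in $\mathcal R_W(K_1)$, and $\sigma(U^2)$ commutes with $A_1$ and with $B$, so $A_\sigma$ still anticommutes with $B$. Its expectation is $\langle -iA_\sigma B\rangle_\rho=\int\sigma\,d\nu_\rho$, where $\nu_\rho(\Delta):=\Tr\bigl(\rho\, C\,1_\Delta(U^2)\bigr)$ is a finite real measure; it is real because $C$ commutes with $U^2$. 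If $\nu_\rho$ has no atoms, I take $\sigma_t=1_{[0,t)}-1_{[t,2\pi)}$ in the angle variable. Then $t\mapsto\int\sigma_t\,d\nu_\rho$ is continuous, and its values at $t=0$ and $t=2\pi$ are negatives of each other, so the intermediate value theorem gives a zero. This produces $A_\rho$ and $\Var_\rho=4$. The consequence \eqref{eq:intro-saturation} then follows at once. The set $\mathfrak S_{G,E}$ is nonempty for $E>\inf\sigma(G)$, so $\gamma_{G,E}(-i[A_\rho,B])\ge 2\sqrt4=4$ for any $\rho$ in it. Combined with the upper bound $\Gamma_{G,E}\le4$ from \cite[Thm.~4.14]{Nasreddine2026DualityArxiv}, this gives equality.

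\emph{Main obstacle.} The hard case is when $\nu_\rho$ has atoms, i.e.\ $U^2$ has eigenvalues that carry weight under $\rho$. This cannot be excluded for non-regular representations, and then the sign freedom in $\sigma$ alone may not reach $0$. My first attempt would be to enlarge the family of admissible $A$ using the remaining freedom inside $\mathcal R_W(K_1)$. Concretely, I would replace $f$ by odd multiples $(2k+1)f$, which still anticommute with $V$, and multiply by $\{\pm1\}$-valued functions of $W(2tf)$ for general real $t$, which also commute with $B$. This should split an atom of $U^2$ into finer spectral pieces of the abelian algebra generated by the $W(tf)$.

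If that is insufficient, my fallback is to analyze a single eigenspace of $U^2$. On such an eigenspace $A_1$ and $B$ restrict to an anticommuting pair with $A_1^2=B^2=1$, so the restriction decomposes into two-dimensional Pauli blocks, possibly with multiplicity. I would then check whether $B$ can be redefined once and for all, independently of $\rho$, for instance by using a branch $r$ that makes the atoms harmless. The aim is to show that $C$ has expectation zero on each such block after a suitable choice of sign. This is the step where I expect the real work, and where the argument would likely invoke the later ``commuting squared Weyl translations'' reduction.
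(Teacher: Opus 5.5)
There is a genuine gap, and it is exactly the case you flag. Your Step~3 is correct whenever $\nu_\rho$ is atomless, but your diagnosis of the atomic case is wrong. Atoms never occur, and what excludes them is the one hypothesis your argument never uses: separability of $\cH_W$.

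An atom of $\nu_\rho$ at $\lambda\in\mathbb T$ requires $1_{\{\lambda\}}(U^2)\neq0$, i.e.\ a unit vector $\psi$ with $W(2f)\psi=\lambda\psi$. The Weyl relations give
\[
W(2f)W(tg)=e^{-i\omega(2f,tg)}\,W(tg)W(2f)=e^{-2\pi it}\,W(tg)W(2f),\qquad t\in\mathbb R,
\]
so $W(tg)\psi$ is a unit eigenvector of the unitary $W(2f)$ with eigenvalue $e^{-2\pi it}\lambda$. For $t\in[0,1)$ these eigenvalues are distinct. You would therefore have uncountably many mutually orthogonal unit vectors, which is impossible in a separable space. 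No continuity of $t\mapsto W(tg)$ is used, so this also covers non-regular representations. Hence $U^2$ has no point spectrum, every $\nu_\rho$ is atomless, and your intermediate-value argument completes the proof.

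This is the paper's Lemma~\ref{lem:no-eigenvalues}, and with it your argument is essentially the paper's. The function $\chi_0(z):=z\,\overline r(z^2)$ is an antipodal $\{\pm1\}$-valued Borel function with $A_1=\chi_0(U)$ and $B=\chi_0(V)$. Your family $A_\sigma=\chi_0(U)\sigma(U^2)$ is exactly the set of antipodal sign functions of $U$. Your sweep $t\mapsto\sigma_t$ replaces the paper's rotation $\alpha\mapsto\chi(e^{-i\alpha}U)$, whose strong continuity rests on the same absence of atoms.

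The fallback you outline would not have worked as stated. For non-integer $t$, $W(2tf)$ need not commute with $B$: $W(2tf)BW(2tf)^*=\chi_0(e^{-2\pi it}V)$. For $t=\tfrac12$ this is $-B$, since $W(f)=U$ anticommutes with $B$. So multiplying $A_1$ by sign functions of $W(2tf)$ can destroy the anticommutation you need. The odd multiples only give $W((2k+1)f)=U^{2k+1}$, which generate nothing beyond Borel functions of $U$. They cannot split an eigenspace of $U^2$ beyond its two $U$-eigenspaces, so they could not have handled atoms either. A minor point on the consequence: you should say why $\mathfrak S_{G,E}\neq\emptyset$. Take any unit vector in $1_{[0,E']}(G)\cH_W$ with $\inf\sigma(G)<E'<E$.
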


The proof uses a phase-adapted Clifford family.  A nonzero Weyl pairing can be scaled to
phase $\pi$, so that two Weyl unitaries anticommute.  An antipodal Borel sign function produces
self-adjoint unitary Clifford generators, and a scalar phase rotation gives a strongly
continuous family of them.  For any normal state, the commutator expectation is a continuous
real antiperiodic function of the phase and hence vanishes for some choice of phase.  The square
of the commutator is identically $4$, so the variance is then maximal.  The energy-constrained
statement follows by choosing any state in the nonempty set $\mathfrak S_{G,E}$.

The same family yields the exact finite-parameter constrained Clifford slice in
Corollary~\ref{cor:constrained-clifford-slice}.  By contrast, a fixed energy-regular Weyl
generator retains a genuine dependence on the relation between the reference energy and the
one-particle dynamics.  We therefore consider the trigonometric Weyl contractions
\begin{equation}
A=\frac{W(u)-W(-u)}{2i},
\qquad
B=\frac{W(v)-W(-v)}{2i},
\qquad
\omega(u,v)=\pi,
\label{eq:trig-intro}
\end{equation}
and $h=-i[A,B]$.  These are obtained by continuous functional calculus.  If
$u,v\in\dom H_1^{1/2}$, these generators are bounded on the $G^{1/2}$-graph space for
$G=d\Gamma(H_1)$.

The fixed witness also has an exact algebraic structure.  In one canonical mode write
$R=(Q,P)^T$, $[Q,P]=i$, and
\[
\Omega=\begin{pmatrix}0&1\\-1&0\end{pmatrix},
\qquad
W(z)=e^{-iz^T\Omega R}.
\]
For $M>0$ set
\[
G_M=\frac12R^TMR-\frac12\sqrt{\det M}.
\]
If $u^T\Omega v=\pi$, let $h$ be the trigonometric witness in
\eqref{eq:trig-intro}.

\begin{theorem}[Sharp fixed-witness asymptotic for a quadratic energy]
\label{thm:intro-quadratic-bridge}
With the preceding notation,
\begin{equation}
4-\gamma_{G_M,E}(h)
=\frac{\kappa_M(u,v)}{E}+O(E^{-2}),
\qquad E\to\infty,
\label{eq:intro-quadratic-bridge}
\end{equation}
where
\begin{equation}
\kappa_M(u,v)
=\frac14\Bigl(u^TMu+v^TMv+2\pi\sqrt{\det M}\Bigr).
\label{eq:intro-kappa}
\end{equation}
For $M=I_2$ and the square pair
$u=(\sqrt\pi,0)^T$, $v=(0,\sqrt\pi)^T$, the coefficient is $\pi$.
\end{theorem}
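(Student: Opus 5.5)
The plan is to reduce the fixed-witness deficit to a variational problem for one explicit positive operator, and then solve that problem asymptotically by a Lagrangian (ground-state) argument.

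\textbf{Step 1: algebraic reduction.} Since $u^T\Omega v=\pi$, the Weyl relations give $W(\pm u)W(\pm v)=-W(\pm v)W(\pm u)$, so $AB=-BA$ and $h=-2iAB$. Hence $h^2=-4ABAB=4A^2B^2$. The squares are
\[
A^2=\tfrac14\bigl(2-W(2u)-W(-2u)\bigr),\qquad B^2=\tfrac14\bigl(2-W(2v)-W(-2v)\bigr),
\]
and they commute, because $\omega(2u,2v)=4\pi$. This is the identity for $4-h^2$: $4-h^2=4(1-A^2B^2)$. For a state $\rho$,
\[
\Var_\rho(h)=4-\langle 4-h^2\rangle_\rho-\langle h\rangle_\rho^2 .
\]
The phase-fixing argument of Theorem~\ref{thm:intro-saturation} (or a symmetry of $G_M$ that reverses the sign of $h$, such as the parity $R\mapsto -R$ combined with a suitable reflection) lets one restrict to states with $\langle h\rangle_\rho=0$ at no energy cost. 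Expanding $2\sqrt{4-x}=4-x/2+O(x^2)$ then gives
\[
4-\gamma_{G_M,E}(h)=2\,\delta(E)+O(\delta(E)^2),\qquad \delta(E):=\inf\{\langle 1-A^2B^2\rangle_\rho:\ \Tr(G_M\rho)\le E\}.
\]
It therefore suffices to show $\delta(E)=\kappa_M/(2E)+O(E^{-2})$.

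\textbf{Step 2: Zak picture and a Lagrangian.} The commuting unitaries $W(2u),W(2v)$ have joint spectrum on a torus. In the Zak representation attached to the lattice generated by $u,v$ (covolume $\pi$), the operator $1-A^2B^2$ is multiplication by
\[
1-\sin^2\theta_1\,\sin^2\theta_2,
\]
with $\theta_1,\theta_2$ the Zak quasi-angles. It vanishes exactly at the points $\theta_j\equiv\pi/2 \pmod \pi$ and is quadratic there. I would pass to the dual problem
\[
e(\lambda):=\inf\sigma\bigl(G_M+\lambda(1-A^2B^2)\bigr).
\]
Because $\delta(E)$ is convex and decreasing in $E$, it is recovered from $e(\lambda)$ by Legendre duality. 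If $e(\lambda)=2\sqrt{c\lambda}+O(1)$, then $\delta(E)=c/E+O(E^{-2})$, so the goal is $c=\kappa_M/2$.

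\textbf{Step 3: lower bound.} For large $\lambda$ the low-energy states concentrate near the zero set of the potential. Near the minima the potential behaves like $\lambda$ times a positive quadratic form in the deviations of $u^T\Omega R$ and $v^T\Omega R$ modulo the lattice, while $G_M$ supplies the conjugate kinetic term. The effective operator is therefore a harmonic oscillator whose frequency is $\sqrt{\lambda}$ times a symplectic invariant. Its ground energy should produce $u^TMu+v^TMv$ from the quadratic form of $M$ evaluated in the witness directions. The additional $2\pi\sqrt{\det M}$ comes from the zero-point offset: the oscillator's zero-point energy is not cancelled by the $-\tfrac12\sqrt{\det M}$ in $G_M$ once the state is forced onto a comb of spacing dictated by $\omega(u,v)=\pi$. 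To make this rigorous I would prove an operator inequality
\[
G_M+\lambda(1-A^2B^2)\ \ge\ 2\sqrt{c\lambda}-C,
\]
via an IMS-type localization on the Zak torus together with a completing-the-square bound. The basic ingredient is
\[
(X-t\,\sin X')^*(X-t\,\sin X')\ge 0
\]
for conjugate pairs, which yields $\tfrac12 X^2+\tfrac{t^2}{2}\sin^2 X'\ge \tfrac t2\cos X'$ and hence a bound by $2\sqrt{c\lambda}$ up to an $O(1)$ error.

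\textbf{Step 4: upper bound and the square pair.} For the matching upper bound I would use the localized Zak construction. This is a comb of squeezed Gaussians centred at the zeros of $1-A^2B^2$, with width $\sim\lambda^{-1/4}$ chosen along the optimal symplectic directions, cut off by a broad Gaussian envelope of width $\sim\lambda^{1/4}$. Computing its energy and its potential expectation to second order should give $2\sqrt{c\lambda}+O(1)$ with the same $c$. Finally, for $M=I_2$ and $u=(\sqrt\pi,0)^T$, $v=(0,\sqrt\pi)^T$, one has
\[
\kappa=\tfrac14(\pi+\pi+2\pi)=\pi .
\]

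\textbf{Main obstacle.} I expect the hard part to be the sharp lower bound in Step 3 with the exact constant, including the $2\pi\sqrt{\det M}$ term, over all normal states rather than just Gaussian or comb states. This requires controlling the periodic, non-quadratic potential globally, together with the $O(1)$ localization errors that become the $O(E^{-2})$ remainder. The first thing I would try is an explicit ground-state transformation, conjugating by $e^{\phi}$ with $\phi$ a periodic function of the Zak variables, to exhibit an exact positive remainder.
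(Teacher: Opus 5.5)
\textbf{Verdict: there is a genuine gap.} Several parts of your plan are sound: the reduction $4-h^2=4(1-A^2B^2)$, the target constant $c=\kappa_M/2$, and the duality framework. Weak duality $\delta(E)\ge(e(\lambda)-E)/\lambda$ would indeed turn an operator bound $G_M+\lambda(1-A^2B^2)\ge2\sqrt{c\lambda}-C$ into the converse.

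\textbf{Step~3 does not prove that bound, and it is the whole content of the theorem.}

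\emph{The square-completion is misstated.} $(X-t\sin X')^*(X-t\sin X')\ge0$ yields $X^2+t^2\sin^2X'\ge t\{X,\sin X'\}$. To produce $\cos X'$ you need $X\pm it\sin X'$.

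\emph{A one-variable completion cannot reach the constant.} The potential depends on two noncommuting modular variables. The Zak picture diagonalizes $W(2u),W(2v)$ (covolume $4\pi$, multiplicity two), not the anticommuting pair $W(u),W(v)$. In that picture $G_M$ is a magnetic-type second-order operator coupling both directions through $H=B_0^TMB_0$, with $B_0=(2u\ 2v)$. The leading term of $e(\lambda)$ is the zero-point energy $\frac{\sqrt\lambda}{2\sqrt2}\Tr\sqrt H$ of a two-dimensional oscillator. Getting it with the $O(1)$ error needed for the $O(E^{-2})$ remainder requires a refined harmonic approximation that controls quartic and magnetic corrections, and you do not supply one.

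\emph{Your account of the $2\pi\sqrt{\det M}$ term is wrong.} The shift $-\frac12\sqrt{\det M}$ is $O(1)$ and affects only the remainder. The term is in fact the cross term in $(\Tr\sqrt H)^2=\Tr H+2\sqrt{\det H}$, where $\Tr H=4(u^TMu+v^TMv)$ and $\det H=16\pi^2\det M$.

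\emph{The missing idea is the paper's elementary all-state converse.}
First write $2(4-h^2)=\sum_{z\in\mathcal Z_h}\bigl(1-\operatorname{Re}(\pm W(z))\bigr)$ over the frame $2u,2u,2v,2v,2u\pm2v$.
Then bound each term by $1-|\Tr\rho W(z)|\ge g(z^T\Sigma_\rho^{-1}z)$, where $g(t)=t/(2(t+4))$ and $\Sigma_\rho$ is the covariance of $\rho$. This is Breitenberger--Robertson applied to $W(z)^*(a^TR)W(z)=a^TR+a^Tz$, optimized over $a$.
Next sum using concavity of $g$.
Finally close with $\Tr(A_h\Sigma_\rho^{-1})\Tr(M\Sigma_\rho)\ge(\Tr\sqrt{M^{1/2}A_hM^{1/2}})^2$, where $A_h=4F_{u,v}$, together with $\Tr(M\Sigma_\rho)\le2E+\sqrt{\det M}$.
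This gives $\inf_\rho\Tr\rho(4-h^2)\ge\tau^2/(8E+O(1))$, with $\tau=\Tr\sqrt H$, for every normal state. It matches the Zak Gaussian only when that quasimode has the equality shape $X_*\propto H^{1/2}$; your Step~4 leaves this shape unspecified.

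\textbf{Step~1 also fails as justified.} The phase selection of Theorem~\ref{thm:intro-saturation} changes the \emph{witness}, through the Borel symmetry $\chi(e^{-i\alpha}U)$, not the state. So it says nothing about the fixed $h$. No symmetry of $G_M$ reverses $h$ either:
\begin{itemize}
\item One checks $h=\operatorname{Re}W(u+v)+\operatorname{Re}W(u-v)$, so parity fixes $h$.
\item More generally, $G_M$ has simple spectrum. Any unitary or antiunitary commuting with it therefore preserves the diagonal of $h$ in its eigenbasis.
\item Yet the ground state, like every centered Gaussian, has $\langle h\rangle=\Tr\rho W(u+v)+\Tr\rho W(u-v)>0$.
\end{itemize}
The repair is the paper's. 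Average with the displacement $W(v)$, which reverses $h$ and fixes $h^2$, in the form $\frac12\rho+\frac14W(v)\rho W(v)^*+\frac14W(v)^*\rho W(v)$. The linear shifts $\pm m_\rho^TMv$ cancel, so the energy cost is exactly $\frac14v^TMv$, and $\delta(E-C)-\delta(E)=O(E^{-2})$. A two-term average can cost $O(\sqrt E)$ and would spoil the remainder.
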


The proof starts from the identity for $4-h^2$, which produces a phase-fixed stabilizer
Hamiltonian for the squared Weyl pair and its two product directions.  A translated-unitary
uncertainty estimate gives a converse over all normal states satisfying the mean quadratic-energy
bound, while a localized Zak Gaussian reaches the same coefficient.  A finite-energy
symmetrization removes the mean of $h$ at a constant energy cost and transfers the stabilizer
asymptotic to the commutator variance.

The argument uses three established ingredients in a different variational setting.
Breitenberger-type unitary uncertainty estimates control translations by a conjugate observable
\cite{Breitenberger1985}; finite-energy GKP constructions exhibit inverse-energy stabilizer
scaling \cite{GottesmanKitaevPreskill2001,MatsuuraYamasakiKoashi2020}; and the lattice structure
of multimode GKP stabilizers is naturally described in symplectic terms
\cite{ConradEisertArzani2022}.  Zak coordinates are well suited to the periodic Weyl pair used in
the matching construction \cite{EnglertZak2006,PantaleoniZak2023}.  Here the additional step is an all-state converse for the fixed commutator witness, matched by a
construction attaining the same coefficient.  The coefficient concerns this stabilizer/commutator diagnostic; stabilizer
expectations alone do not constitute a fidelity guarantee for a GKP code state
\cite{Goldberg2025Stabilizers}.

For general second-quantized energies the same mechanism gives a two-sided inverse-energy rate, although there is
no universal finite-dimensional coefficient of the form \eqref{eq:intro-kappa}.

\begin{theorem}[General second-quantized fixed-witness dichotomy]
\label{thm:intro-regular}
Let $H_1\ge0$ be self-adjoint on $\mathfrak h$, let $G=d\Gamma(H_1)$, and let
$u,v\in\dom H_1^{1/2}$ satisfy $\omega(u,v)=\pi$.  For the witness $h$ in
\eqref{eq:trig-intro} there are $C,E_0<\infty$ such that
\begin{equation}
0\le4-\gamma_{G,E}(h)\le\frac{C}{E},
\qquad E\ge E_0.
\label{eq:intro-upper-fixed}
\end{equation}
If at least one of $u,v$ lies outside $\ker H_1$, then there are $c,C,E_0>0$ such that
\begin{equation}
\frac{c}{E+1}\le4-\gamma_{G,E}(h)\le\frac{C}{E},
\qquad E\ge E_0.
\label{eq:intro-sharp-fixed}
\end{equation}
If instead $u,v\in\ker H_1$, then
\begin{equation}
\gamma_{G,E}(h)=4
\qquad\text{for every }E>0.
\label{eq:intro-zero-mode}
\end{equation}
Thus the fixed witness has two energy regimes: an inverse-energy deficit when at least one direction
carries positive one-particle energy, and exact saturation when both directions are zero modes.
Injectivity of $H_1$ is a simple sufficient condition for the first regime.
\end{theorem}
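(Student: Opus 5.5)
We would follow the route sketched after Theorem~\ref{thm:intro-quadratic-bridge}, now in the second-quantized setting, and reduce everything to the identity for $4-h^2$.

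\textbf{Step 1: algebraic reduction.} Write $U=W(u)$, $V=W(v)$. Since $\omega(u,v)=\pi$, these anticommute, so $W(\pm u)$ anticommutes with $W(\pm v)$. Expanding $h=-i[A,B]$ gives $h=-\tfrac{i}{2}\cdot\tfrac{1}{-4}\cdot 2\,(U-U^*)(V-V^*)$ up to sign. This is a real combination of the four product Weyl operators $W(\pm u\pm v)$, up to phases. Squaring then gives
\[
4-h^2=\tfrac12\bigl(2-\mathrm{Re}\,W(2u)\bigr)+\tfrac12\bigl(2-\mathrm{Re}\,W(2v)\bigr)+\dots,
\]
i.e.\ a positive combination of terms $1-\mathrm{Re}\,W(s)$, where $s$ ranges over $2u$, $2v$ and $2(u\pm v)$ with fixed phases. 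These are commuting squared translations. We would take the exact form from the identity the paper already uses. Since $\gamma=2\sqrt{\Var h}$, we have
\[
4-\gamma_{G,E}(h)\asymp \inf_\rho\bigl(\Tr\rho(4-h^2)+(\Tr\rho h)^2\bigr),
\]
because $4-2\sqrt{x}\asymp 4-x$ near $x=4$.

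\textbf{Step 2: upper bound, valid in both cases.} Choose a Zak-type finite-energy GKP state in the single mode spanned by $u,v$. When $u,v$ are not in $\ker H_1$, take a Gaussian-envelope comb with width parameter $\Delta$. Then $1-\mathrm{Re}\langle W(s)\rangle=O(\Delta^2)$. The energy is $\langle d\Gamma(H_1)\rangle=O(\Delta^{-2})\,(\|H_1^{1/2}u\|^2+\|H_1^{1/2}v\|^2)$, because the state lives in the Fock space over $\spn\{u,v\}$ (after symplectically completing to a mode) and $d\Gamma$ of the compression is controlled by the graph norms. Next, symmetrize to kill $\Tr\rho h$: mix with the image under a Weyl unitary $W(w)$ that flips the sign of $A$ but not of $B$ (take $w$ with $\omega(w,u)=\pi$ and $\omega(w,v)\in2\pi\mathbb Z$, in the span). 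This adds only constant energy, which is harmless, and gives the bound $C/E$.

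\textbf{Step 3: zero modes.} If $u,v\in\ker H_1$, then $G$ commutes with $W(tu)$ and $W(tv)$, and the vacuum-sector mode generated by $u,v$ carries zero energy. We would then work with exact codewords of the commuting group generated by $W(2u),W(2v)$. These are non-normalizable, but we can use the $d\Gamma(H_1)$-neutral spectral projections of the commuting unitaries $W(2u)$ and $W(2v)$. Applying these to a fixed vector of energy below $E$ (e.g.\ the vacuum, with energy $0<E$) gives states that are exact $+1$ eigenvectors, with energy unchanged because the projections commute with $G$. Alternatively, we could invoke Theorem~\ref{thm:intro-saturation}'s phase argument, restricted to functions of $W(u),W(v)$ that commute with $G$. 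After symmetrizing the mean as in Step 2, $\Var h=4$.

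\textbf{Step 4: lower bound, the main obstacle.} Suppose $H_1^{1/2}u\ne0$, say. The terms $1-\mathrm{Re}\langle W(2u)\rangle$ and $1-\mathrm{Re}\langle W(2v)\rangle$ cannot both be small at low energy. We would use a Breitenberger-type inequality: for the translation $W(2v)$ and the generator $\phi(u')$, with $u'=H_1$-dependent,
\[
|\langle[\phi(x),W(s)]\rangle|\le 2\sqrt{\Var\phi(x)}\,\sqrt{1-|\langle W(s)\rangle|^2}.
\]
Here $[\phi(x),W(s)]=\omega(x,s)W(s)$. We would choose $x=H_1u/\|\cdot\|$-type directions, or more simply the direction $\sigma$ with $\omega(\sigma,2v)\ne0$ and $\|\sigma\|_{H_1}$-controlled. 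The bound $\Var\phi(x)\lesssim\langle d\Gamma(H_1)\rangle+1$ is the standard $d\Gamma$ field estimate, requiring $x\in\dom H_1^{-1/2}$. Obtaining such an $x$ that pairs nontrivially with the relevant $s$ is the delicate point. We expect it to come from $\omega(u,v)\ne0$ together with $u\notin\ker H_1$, possibly after approximating. This yields a bound of the form $\delta\ge c/(E+1)$, and with Step 1 it gives $4-\gamma\ge c/(E+1)$.
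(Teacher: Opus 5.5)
\textbf{Upper bound.} Your Steps~1--2 follow the paper's route for the upper bound. That route is: the squared-stabilizer identity, a finite-energy comb transported to the pair by a finite-dimensional symplectic map, and physical energy controlled by $\sup_{\xi\in\mathfrak k,\|\xi\|=1}\|H_1^{1/2}\xi\|^2$ with $\mathfrak k=\spn_{\mathbb C}\{u,v\}$, followed by symmetrization with $V=W(v)$. Two small corrections: the stabilizers carry a sign, $S_j=-W(z_j)$; and by Lemma~\ref{lem:weyl-energy} the symmetrization costs $O(\sqrt E)$ energy rather than a constant, which is still harmless.

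\textbf{Lower bound (Step~4).} The genuine gap is here, at exactly the point you call delicate, and the mechanism you suggest does not work. Having assumed $u\notin\ker H_1$, you try to control the translation $W(2v)$ by a field $\Phi(x)$ with $x\in\dom H_1^{-1/2}$ and $\omega(x,2v)\ne0$, expecting such $x$ to come from $\omega(u,v)\ne0$.
\begin{itemize}
\item The hypothesis allows $v\in\ker H_1$. Then every $x\in\dom H_1^{-1/2}\subset(\ker H_1)^\perp$ has $\omega(x,2v)=2\operatorname{Im}\langle x,v\rangle=0$, so no such $x$ exists.
\item Your candidate $x\propto H_1u$ need not be defined for $u\in\dom H_1^{1/2}$.
\item Even for injective $H_1$, $\operatorname{Im}\langle H_1u,v\rangle$ can vanish, e.g.\ $H_1=\operatorname{diag}(1,2)$ on $\mathbb C^2$ with $u=(1,1)$, $v=(2\pi i,-\pi i)$.
\end{itemize}
The missing idea is to pair the field with the translation along the positive-energy direction itself; the pairing $\omega(u,v)$ plays no role.
\begin{itemize}
\item With $P_+$ the projection onto $(\ker H_1)^\perp$, one has $\omega(2u,iP_+u)=2\|P_+u\|^2>0$.
\item Since $\mathbf 1_{[1/n,n]}(H_1)\mathfrak h\subset\dom H_1^{-1/2}$ and these projections converge strongly to $P_+$, continuity of $\omega$ gives $w\in\dom H_1^{-1/2}$ with $\ell=\omega(2u,w)\ne0$.
\item $X=\Phi(w)$ then obeys $W(2u)^*XW(2u)=X+\ell$ by uniqueness of the self-adjoint generator.
\item Lemma~\ref{lem:inverse-field-bound} gives $\Tr(\rho X^2)\le2\|H_1^{-1/2}w\|^2E+\|w\|^2$ on $\mathfrak S_{G,E}$.
\item Lemma~\ref{lem:translation-variance} applied to $S=-W(2u)$, together with $h^2\le4A^2=2(1+\operatorname{Re}S)$, yields the rate $c/(E+1)$.
\end{itemize}

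\textbf{Zero-mode case (Step~3).} This step contains a false claim. By Lemma~\ref{lem:no-eigenvalues}, $W(2u)$ and $W(2v)$ have no eigenvalues. Their spectral projections onto single points therefore vanish, and there are no exact joint eigenvectors. In fact $\Var_\rho(h)=4$ is never attained for this fixed $h$: it would force $h^2=4$, hence $S_1=S_2=1$, on the range of $\rho$.

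The statement only needs the supremum, so the repair is as follows.
\begin{itemize}
\item Take joint spectral projections of $(W(2u),W(2v))$ on shrinking neighbourhoods of $(-1,-1)$. These are nonzero because $(-1,-1)$ lies in the joint spectrum (the Zak argument of Lemma~\ref{lem:h-spectral-diameter}).
\item Pick vectors of the form $\psi\otimes\Omega_{\mathfrak k^\perp}$ in $\cF(\mathfrak k)\otimes\cF(\mathfrak k^\perp)$, on which $G=1\otimes d\Gamma(H_1|_{\mathfrak k^\perp})$ vanishes.
\item Symmetrize under $V$, which acts on the first factor and so preserves zero energy.
\end{itemize}
This is in substance the paper's spectral-diameter argument. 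Your fallback via the phase selection of Theorem~\ref{thm:intro-saturation} does not apply. That argument replaces the witness by a state-dependent Borel symmetry $A_\rho$, not the fixed trigonometric $h$.
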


For a general $d\Gamma(H_1)$, the upper bound is obtained from a compact finite comb that
approximates the two commuting squared Weyl translations with error $O(R^{-1})$ at number-energy
budget $R$.  A finite-dimensional symplectic transformation transports the construction to the
given pair, and the one-particle form of $H_1$ converts number energy into physical energy.  The
matching lower rate follows from the translated-unitary uncertainty estimate together with an
inverse-one-particle-energy bound for a conjugate field quadrature.  Since
$\dom H_1^{-1/2}$ is dense in $(\ker H_1)^\perp$, such a quadrature can be chosen whenever the
witness is not supported entirely on zero modes.

The contrast is between adapting the bounded witness to the state and keeping an
energy-regular witness fixed.  The Clifford symmetries in
Theorem~\ref{thm:intro-saturation} come from discontinuous Borel functional calculus and can
fail to preserve the number form domain in the oscillator representation.  The trigonometric
generators in Theorem~\ref{thm:intro-regular}, by contrast, obey explicit graph-energy
estimates.

Section~2 fixes the Weyl and constrained-variance notation.  Section~3 proves statewise
saturation and records the regularity distinction between the phase-selected Borel symmetries and
the fixed trigonometric generators.  Section~4 derives the squared-stabilizer identity and the
general inverse-energy upper bound.  Section~5 determines the sharp coefficient for one-mode
quadratic energies.  Section~6 proves the lower rate for general second-quantized energies and the
zero-mode alternative.

\section{Weyl representations and constrained variance}

Let $\mathfrak h$ be a separable complex Hilbert space with symplectic form
\[
\omega(f,g)=\operatorname{Im}\langle f,g\rangle.
\]
Let $W:\mathfrak h\to\mathcal U(\cH_W)$ be a representation of the Weyl relations on a
separable Hilbert space $\cH_W$, with convention
\begin{equation}
W(f)W(g)=e^{-i\omega(f,g)/2}W(f+g).
\label{eq:weyl}
\end{equation}
For a closed real linear subspace $K\subset\mathfrak h$, set
\[
\mathcal R_W(K):=\{W(f):f\in K\}'',
\qquad
K':=\{g\in\mathfrak h:\omega(f,g)=0\ \text{for every }f\in K\}.
\]

For a positive self-adjoint $G$ on a Hilbert space $\cH$ and
$E>\inf\sigma(G)$, let $\mathfrak S_{G,E}$ be as in \eqref{eq:gamma-intro},
with $\Tr(G\rho)$ understood in the usual extended-positive sense.  The quantity
$\gamma_{G,E}(h)$ is well-defined for every bounded self-adjoint $h$, and
\begin{equation}
\gamma_{G,E}(h)\le2\|h\|.
\label{eq:gamma-bound}
\end{equation}
Indeed $\Var_\rho(h)\le\Tr(\rho h^2)\le\|h\|^2$.

For later use, if $\Delta$ is the difference of two channels on the trace class of $\cH_W$,
define its mean-input-energy-constrained diamond norm by
\begin{equation}
\|\Delta\|_{\diamond,G,E}
:=
\sup_{\substack{\mathcal K,\ \rho\ge0,\ \Tr\rho=1\\
\rho\in\mathcal T(\cH_W\otimes\mathcal K),\
\Tr[(G\otimes1)\rho]\le E}}
\bigl\|(\Delta\otimes\id_{\mathcal K})(\rho)\bigr\|_1,
\label{eq:EC-diamond}
\end{equation}
where the supremum is over separable reference Hilbert spaces $\mathcal K$ and the energy is understood
in the extended-positive sense.
This is the standard energy-constrained diamond norm
\cite{Shirokov2018,BeckerDattaLamiRouze2021}.

We shall use the following immediate consequence of the Weyl relations.

\begin{lemma}[Anticommuting Weyl pair]
\label{lem:anti-weyl}
If $u,v\in\mathfrak h$ satisfy $\omega(u,v)=\pi$ and
$U=W(u)$, $V=W(v)$, then
\[
UV=-VU,
\qquad
VUV^*=-U.
\]
\end{lemma}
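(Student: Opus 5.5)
The plan is to apply the Weyl relation \eqref{eq:weyl} in both orders and compare the phases. Taking $f=u$, $g=v$ gives
\[
UV=W(u)W(v)=e^{-i\omega(u,v)/2}W(u+v)=e^{-i\pi/2}W(u+v).
\]
Taking $f=v$, $g=u$ and using the antisymmetry $\omega(v,u)=-\omega(u,v)=-\pi$ gives
\[
VU=W(v)W(u)=e^{i\pi/2}W(u+v).
\]
Since $W(u+v)$ is unitary, hence invertible, dividing the two expressions yields
$UV=e^{-i\pi}VU=-VU$.

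For the second identity, I multiply $UV=-VU$ on the right by $V^*$. Because $V$ is unitary, this gives $U=-VUV^*$, which is the same as $VUV^*=-U$.

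No step here is a real obstacle. The only points that need checking are the antisymmetry of $\omega(f,g)=\operatorname{Im}\langle f,g\rangle$, which follows from $\operatorname{Im}\overline{z}=-\operatorname{Im}z$, and the sign convention in \eqref{eq:weyl}. The sign convention does not matter, since the phase ratio $e^{\mp i\pi}=-1$ comes out the same either way.
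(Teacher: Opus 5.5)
Your proposal is correct and follows the paper's proof: the paper likewise obtains $UV=-VU$ as the exchange form of the Weyl relation \eqref{eq:weyl}, using $\omega(u,v)=\pi$ and the antisymmetry of $\omega$. It then gets $VUV^*=-U$ by conjugating with the unitary $V$, exactly as you do.
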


\begin{proof}
The first identity is the exchange form of \eqref{eq:weyl}; the second follows by conjugation
with $V$.
\end{proof}

\section{Statewise and energy-constrained saturation}

The statewise result rests on a phase-selection argument that is independent of the particular Weyl realization.

\begin{proposition}[Phase-selection criterion]
\label{prop:phase-selection}
Let $M,N\subset B(\cH)$ contain self-adjoint unitaries $B\in N$ and
$A_\alpha\in M$, $\alpha\in\mathbb R$, such that
\begin{equation}
A_\alpha B=-BA_\alpha,
\qquad
A_{\alpha+\pi}=-A_\alpha,
\label{eq:abstract-clifford-family}
\end{equation}
and suppose $\alpha\mapsto A_\alpha$ is strongly continuous.  Then for every normal state
$\rho$ on $B(\cH)$ there is $\alpha_\rho\in\mathbb R$ such that
\begin{equation}
\Var_\rho\!\left(-i[A_{\alpha_\rho},B]\right)=4.
\label{eq:phase-selection}
\end{equation}
\end{proposition}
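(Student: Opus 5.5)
The plan is to compute the commutator and its square explicitly, and then select the phase by an intermediate-value argument. Since $A_\alpha$ and $B$ anticommute, we have
\[
-i[A_\alpha,B]=-i(A_\alpha B-BA_\alpha)=-2iA_\alpha B=:h_\alpha .
\]
Let us check that $h_\alpha$ is self-adjoint. We have $h_\alpha^*=2iBA_\alpha=-2iA_\alpha B=h_\alpha$, using anticommutation again. Next we compute its square, using $A_\alpha^2=B^2=1$ and anticommutation:
\[
h_\alpha^2=-4A_\alpha BA_\alpha B=4A_\alpha^2B^2=4.
\]
Consequently, for every normal state $\rho$,
\[
\Var_\rho(h_\alpha)=\Tr(\rho h_\alpha^2)-(\Tr\rho h_\alpha)^2=4-f(\alpha)^2,
\qquad f(\alpha):=\Tr(\rho h_\alpha).
\]
The claim therefore reduces to finding $\alpha_\rho$ with $f(\alpha_\rho)=0$.

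Next we record the properties of $f$. It is real because $h_\alpha$ is self-adjoint. It is antiperiodic: $A_{\alpha+\pi}=-A_\alpha$ gives $h_{\alpha+\pi}=-h_\alpha$, hence $f(\alpha+\pi)=-f(\alpha)$. It is continuous, which we check as follows. Write $\rho=\sum_k p_k|\xi_k\rangle\langle\xi_k|$ with $\sum_k p_k=1$. Then
\[
f(\alpha)=-2i\sum_k p_k\langle \xi_k, A_\alpha B\xi_k\rangle .
\]
Each term is continuous in $\alpha$, since strong continuity gives $A_\alpha B\xi_k\to A_{\alpha_0}B\xi_k$ in norm. Each term is also bounded by $2p_k$, so dominated convergence (equivalently, uniform convergence of the series) makes $f$ continuous.

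Finally, we apply the intermediate value theorem on $[0,\pi]$. Since $f(0)$ and $f(\pi)=-f(0)$ are real numbers of opposite sign (or both zero), there is $\alpha_\rho\in[0,\pi]$ with $f(\alpha_\rho)=0$, and then $\Var_\rho(h_{\alpha_\rho})=4$.

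The only step needing any care is the continuity of $f$ for a general normal state, as opposed to a vector state. This is handled by the spectral decomposition and the uniform bound $|\langle\xi_k,A_\alpha B\xi_k\rangle|\le1$, which lets us interchange the limit and the sum.
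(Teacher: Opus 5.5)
Your proposal is correct and follows the paper's proof: both reduce the claim to $h_\alpha=-2iA_\alpha B$ with $h_\alpha^2=4$, then find a zero of the real, continuous, antiperiodic function $\alpha\mapsto\Tr(\rho h_\alpha)$. The only difference is in the continuity step, where you use the spectral decomposition of $\rho$ with dominated convergence and the paper uses trace-norm approximation by finite-rank operators; these are equivalent, and your explicit check that $h_\alpha$ is self-adjoint covers a point the paper leaves implicit.
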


\begin{proof}
Set
\[
h_\alpha=-i[A_\alpha,B]=-2iA_\alpha B.
\]
Because $A_\alpha$ and $B$ are anticommuting symmetries, $h_\alpha^2=4\,1$.
For a normal state $\rho$, let
\[
F_\rho(\alpha)=\Tr(\rho h_\alpha).
\]
The function is real and antiperiodic.  It is continuous as well: strong continuity of
$A_\alpha$, together with the uniform bound $\|h_\alpha\|\le2$, gives continuity of the trace
pairing, for example by approximating the density operator of $\rho$ in trace norm by
finite-rank operators.  Hence $F_\rho(\alpha_\rho)=0$ for some $\alpha_\rho$.  Therefore
\[
\Var_\rho(h_{\alpha_\rho})
=
\Tr(\rho h_{\alpha_\rho}^2)-F_\rho(\alpha_\rho)^2
=4.
\]
\end{proof}

It remains to construct such a family from a nonzero Weyl pairing.  The first step excludes
point spectrum.

\begin{lemma}[Diffuse spectrum of a nontrivial Weyl unitary]
\label{lem:no-eigenvalues}
If $u\ne0$, then $W(u)$ has no eigenvalues on $\cH_W$.
\end{lemma}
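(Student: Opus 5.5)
The plan is to use the Weyl relations to conjugate $W(u)$ by Weyl unitaries with a nonzero pairing against $u$. Conjugation rotates $W(u)$ by an arbitrary scalar phase, so any eigenvalue would come with a continuum of orthogonal eigenvectors. This is incompatible with separability of $\cH_W$.

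Fix $u\ne0$. Since $\omega(f,g)=\operatorname{Im}\langle f,g\rangle$ is nondegenerate on $\mathfrak h$, there is $g$ with $\omega(g,u)\ne0$; take for instance $g=iu$, so that $\omega(iu,u)=\operatorname{Im}\langle iu,u\rangle=\pm\|u\|^2\ne0$ depending on the inner-product convention. For $t\in\mathbb R$, applying \eqref{eq:weyl} twice gives the exchange relation
\[
W(tg)W(u)W(tg)^*=e^{-it\omega(g,u)}W(u),
\]
with the sign in the exponent fixed by the convention in \eqref{eq:weyl}; only the fact that the phase is $e^{ict}$ with $c\ne0$ matters. Suppose now $W(u)\xi=\lambda\xi$ with $\xi\neq0$. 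Then $\xi_t:=W(tg)\xi$ satisfies
\[
W(u)\xi_t=W(tg)\,e^{\pm ict}W(u)\xi=\lambda e^{\pm ict}\xi_t ,
\]
so $\lambda e^{\pm ict}$ is an eigenvalue for every $t$, with $\|\xi_t\|=\|\xi\|$.

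Letting $t$ range over $[0,2\pi/|c|)$ produces uncountably many distinct eigenvalues of the unitary $W(u)$. Eigenvectors of a unitary (indeed of any normal operator) for distinct eigenvalues are orthogonal, so $\{\xi_t/\|\xi\|\}$ is an uncountable orthonormal family. This contradicts separability of $\cH_W$, so $W(u)$ has no eigenvalues.

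The only delicate points are bookkeeping: the sign of the phase in the exchange relation and the choice of $g$ with a nonzero pairing. Neither affects the argument, since any nonzero pairing yields a full circle of phases. No regularity of $t\mapsto W(tg)$ is needed, because only the algebraic Weyl relations and separability enter.
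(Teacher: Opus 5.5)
Your proof is correct and is essentially the paper's argument. The paper also takes $w=iu$, uses the Weyl exchange relation to show that $W(tw)\psi$ is an eigenvector of $W(u)$ with eigenvalue $e^{-it\omega(u,w)}\lambda$, and obtains an uncountable orthonormal family that contradicts separability of $\cH_W$.
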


\begin{proof}
Since $u\ne0$, one may take $w=iu$, for which $\omega(u,w)=\|u\|^2\ne0$.  If $W(u)\psi=\lambda\psi$ for a nonzero
vector $\psi$, then the Weyl relations imply
\[
W(u)W(tw)\psi
=e^{-it\omega(u,w)}\lambda\,W(tw)\psi,
\qquad t\in\mathbb R.
\]
As $t$ varies over an interval of length less than
$2\pi/|\omega(u,w)|$, the eigenvalues on the right are uncountably many and distinct.  The
corresponding vectors are mutually orthogonal because they belong to distinct eigenspaces of the
same unitary.  This contradicts separability of $\cH_W$.
\end{proof}

Choose once and for all a Borel function $\chi:\mathbb T\to\{-1,1\}$ with
\begin{equation}
\chi(-z)=-\chi(z)
\qquad(z\in\mathbb T)
\label{eq:chi-odd}
\end{equation}
and only two discontinuities.  For example, let
\[
B_+:=\{z\in\mathbb T:\operatorname{Re}z>0\}\cup\{i\},
\qquad
\chi=1_{B_+}-1_{-B_+}.
\]

\begin{lemma}[Strongly continuous Clifford family]
\label{lem:strong-clifford}
Let $u,v\in\mathfrak h$ satisfy $\omega(u,v)=\pi$, put $U=W(u)$, $V=W(v)$, and define
\begin{equation}
A_\alpha:=\chi(e^{-i\alpha}U),
\qquad
B:=\chi(V).
\label{eq:Aalpha}
\end{equation}
Then $A_\alpha$ and $B$ are self-adjoint unitaries,
$\alpha\mapsto A_\alpha$ is strongly continuous, and
\begin{equation}
A_\alpha B=-BA_\alpha,
\qquad
A_{\alpha+\pi}=-A_\alpha.
\label{eq:weyl-clifford-family}
\end{equation}
\end{lemma}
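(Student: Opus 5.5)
The plan is to derive every assertion of Lemma~\ref{lem:strong-clifford} from the Borel functional calculus for the unitaries $e^{-i\alpha}U$ and $V$, together with the oddness \eqref{eq:chi-odd} of $\chi$ and Lemmas~\ref{lem:anti-weyl} and~\ref{lem:no-eigenvalues}.

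\emph{Self-adjoint unitaries and antiperiodicity.} Since $\chi$ is real-valued with $\chi^2=1$, the operators $A_\alpha$ and $B$ are self-adjoint and square to $1$. Hence they are self-adjoint unitaries. Because $\chi(e^{-i\alpha}U)$ lies in the von Neumann algebra generated by $U$, and that algebra is contained in $\mathcal R_W(\{\mathbb R u\})$, we get $A_\alpha\in\mathcal R_W(K_1)$ whenever $u\in K_1$; the same holds for $B$ and $K_2$. For antiperiodicity, $e^{-i(\alpha+\pi)}U=-e^{-i\alpha}U$. Applying the calculus to $z\mapsto\chi(-z)=-\chi(z)$ then gives $A_{\alpha+\pi}=-A_\alpha$.

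\emph{Anticommutation.} By Lemma~\ref{lem:anti-weyl}, $VUV^*=-U$. Unitary conjugation commutes with the Borel functional calculus, so
\[
VA_\alpha V^*=\chi(-e^{-i\alpha}U)=-A_\alpha .
\]
Thus $A_\alpha$ anticommutes with $V$. It then anticommutes with every odd function of $V$, which needs a little care. Set $P_\pm=1_{\pm B_+}(V)$ and $J=P_+-P_-=B$. Conjugation by $U$ gives $UVU^*=-V$, so $UP_+U^*=1_{B_+}(-V)=P_-$. Hence $UBU^*=-B$.

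The same relation should pass to $A_\alpha$. The neat route is symmetric. Since $-BA_\alpha B^{-1}$ is again of the form $\chi(\cdot)$, I compute instead
\[
BA_\alpha B=\chi(Be^{-i\alpha}UB).
\]
From $UB=-BU$ and $B^2=1$ we get $BUB=-U$. Therefore
\[
BA_\alpha B=\chi(-e^{-i\alpha}U)=-A_\alpha,
\]
and this is exactly $A_\alpha B=-BA_\alpha$. The real content is therefore $UBU^*=-B$, which follows from $UVU^*=-V$ (Lemma~\ref{lem:anti-weyl} with the roles of $u,v$ exchanged) by the calculus argument just given.

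\emph{Strong continuity.} This is the main obstacle, because $\chi$ is discontinuous. Let $\mu_\psi$ be the spectral measure of $U$ in the vector $\psi$. Then
\[
\|(A_\alpha-A_\beta)\psi\|^2=\int_{\mathbb T}|\chi(e^{-i\alpha}z)-\chi(e^{-i\beta}z)|^2\,d\mu_\psi(z).
\]
As $\beta\to\alpha$ the integrand tends to $0$ pointwise except at $z$ with $e^{-i\alpha}z\in\{\pm i\}$, the two discontinuity points of $\chi$, and it is bounded by $4$. Lemma~\ref{lem:no-eigenvalues} says $U$ has no eigenvalues, so $\mu_\psi$ has no atoms. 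The exceptional set of two points therefore has measure zero, and dominated convergence gives $\|(A_\alpha-A_\beta)\psi\|\to0$. This is exactly where the diffuse-spectrum lemma is needed.
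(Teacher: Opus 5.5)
Your proposal is correct and essentially follows the paper's proof: self-adjoint unitarity and antiperiodicity come from Borel calculus and the oddness of $\chi$, and strong continuity comes from the same dominated-convergence argument with the atomless spectral measure supplied by Lemma~\ref{lem:no-eigenvalues}. For anticommutation you run the two-step calculus argument in mirror image, first $UBU^*=-B$ and then $BA_\alpha B=\chi(-e^{-i\alpha}U)=-A_\alpha$; the paper instead builds on the identity $VA_\alpha V^*=-A_\alpha$, which you derive first but then do not use, and concludes with $A_\alpha BA_\alpha=\chi(A_\alpha VA_\alpha)=\chi(-V)=-B$.
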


\begin{proof}
Self-adjoint unitarity and the second relation in \eqref{eq:weyl-clifford-family} follow from
Borel functional calculus and \eqref{eq:chi-odd}.  To prove strong continuity, let $E_U$ be the
spectral measure of $U$ and
$\mu_\psi(\cdot)=\langle\psi,E_U(\cdot)\psi\rangle$.  By
Lemma~\ref{lem:no-eigenvalues}, $\mu_\psi$ has no atoms.  Therefore, if $\alpha_n\to\alpha$,
\[
\|(A_{\alpha_n}-A_\alpha)\psi\|^2
=
\int_{\mathbb T}
|\chi(e^{-i\alpha_n}z)-\chi(e^{-i\alpha}z)|^2\,d\mu_\psi(z)
\longrightarrow0
\]
by dominated convergence: pointwise convergence fails only at the two rotated discontinuities
of $\chi$.

By Lemma~\ref{lem:anti-weyl}, $VUV^*=-U$.  Hence
\[
VA_\alpha V^*=\chi(-e^{-i\alpha}U)=-A_\alpha,
\]
so $A_\alpha V A_\alpha=-V$.  Applying Borel functional calculus once more,
\[
A_\alpha B A_\alpha
=\chi(A_\alpha V A_\alpha)
=\chi(-V)
=-B,
\]
which proves the first relation in \eqref{eq:weyl-clifford-family}.
\end{proof}

\begin{proof}[Proof of Theorem~\ref{thm:intro-saturation}]
Choose $f\in K_1$ and $g\in K_2$ with $\omega(f,g)\ne0$, and rescale by real numbers so that
$u\in K_1$, $v\in K_2$ satisfy $\omega(u,v)=\pi$.  Lemma~\ref{lem:strong-clifford} supplies
the family required by Proposition~\ref{prop:phase-selection}, with
$M=\mathcal R_W(K_1)$ and $N=\mathcal R_W(K_2)$, proving
\eqref{eq:intro-statewise}.

For the energy-constrained statement, choose $E'$ with
$\inf\sigma(G)<E'<E$.  The spectral projection
$P_{E'}:=1_{(-\infty,E']}(G)$ is nonzero, so any unit vector in its range defines an admissible
state in $\mathfrak S_{G,E}$.  Apply the statewise result to such a state.  The resulting witness satisfies
$\gamma_{G,E}(-i[A_\rho,B])=4$ by \eqref{eq:gamma-bound}, and the universal bound
$\Gamma_{G,E}\le4$ gives
\eqref{eq:intro-saturation}.
\end{proof}

\begin{corollary}[Exact constrained Clifford slice]
\label{cor:constrained-clifford-slice}
Let $K_1,K_2\subset\mathfrak h$ satisfy the hypotheses of
Theorem~\ref{thm:intro-saturation}, let $G$ be positive self-adjoint on $\cH_W$, and let
$E>\inf\sigma(G)$.  Then there exist self-adjoint unitaries
$A_E\in\mathcal R_W(K_1)$ and $B\in\mathcal R_W(K_2)$ and a unit vector $\psi_E$ with
\[
\langle\psi_E,G\psi_E\rangle<E
\]
such that the following holds.  Let
\[
\widehat\alpha_s(\rho)=e^{isA_E}\rho e^{-isA_E},
\qquad
\widehat\beta_t(\rho)=e^{itB}\rho e^{-itB},
\]
and let
\[
C^{G,E}_{A_E,B}(s,t)
:=
\bigl\|
\widehat\beta_t\widehat\alpha_s-\widehat\alpha_s\widehat\beta_t
\bigr\|_{\diamond,G,E},
\]
with the norm defined in \eqref{eq:EC-diamond}.  Then, for every $s\in\mathbb R$,
\begin{equation}
C^{G,E}_{A_E,B}\left(s,\frac{\pi}{2}\right)
=
2|\sin 2s|.
\label{eq:constrained-Clifford-slice}
\end{equation}
In particular,
\begin{equation}
C^{G,E}_{A_E,B}\left(\frac{\pi}{4},\frac{\pi}{2}\right)=2,
\label{eq:perfect-order-discrimination}
\end{equation}
so, for equal prior probabilities, the two resulting output states are perfectly
distinguishable in one use.  The constraint here is on the input state; no energetic cost is
imposed on the implementation of the Borel symmetry $A_E$.
\end{corollary}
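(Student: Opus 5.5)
The plan is to reduce the constrained diamond norm at $t=\pi/2$ to a trace distance between two unitarily rotated pure states, and then pick the phase of the Clifford family so that this distance is maximal. First fix the data. Choose $E'$ with $\inf\sigma(G)<E'<E$ and a unit vector $\psi_E$ in the range of $1_{(-\infty,E']}(G)$, exactly as in the proof of Theorem~\ref{thm:intro-saturation}; then $\langle\psi_E,G\psi_E\rangle\le E'<E$. Take $u,v$ with $\omega(u,v)=\pi$ as there, and let $B=\chi(V)$ and $A_\alpha=\chi(e^{-i\alpha}U)$ be the family of Lemma~\ref{lem:strong-clifford}.

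The phase selection uses the function $\alpha\mapsto\langle\psi_E,A_\alpha\psi_E\rangle$, not the commutator expectation used in Proposition~\ref{prop:phase-selection}. This function is real, continuous by strong continuity of $\alpha\mapsto A_\alpha$, and antiperiodic because $A_{\alpha+\pi}=-A_\alpha$. Hence it vanishes at some $\alpha_E$, and we set $A_E:=A_{\alpha_E}$.

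Next comes the algebraic reduction. Since $B$ and $A_E$ are symmetries, $e^{i\pi B/2}=iB$ and $e^{isA_E}=\cos s+i\sin s\,A_E$. Anticommutation gives $Be^{isA_E}=e^{-isA_E}B$. Hence, with $\sigma=(B\otimes1)\rho(B\otimes1)$, the difference of the two compositions applied to $\rho$ equals
\[
e^{-isA_E}\sigma e^{isA_E}-e^{isA_E}\sigma e^{-isA_E},
\]
where all operators are tensored with $1_{\mathcal K}$. The relations $A_E^2=1$ and $A_EB=-BA_E$ survive the tensoring.

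For the upper bound, the energy functional is affine and the trace norm is convex, so it suffices to treat pure inputs $\Phi$ with finite energy. Purification with a larger reference space is harmless because the supremum in \eqref{eq:EC-diamond} is over all separable references. For a pure $\Phi$, the difference is the difference of the pure states $\phi_\mp=e^{\mp isA_E}(B\otimes1)\Phi$, so its trace norm is $2\sqrt{1-|\langle\phi_-,\phi_+\rangle|^2}$. Moreover
\[
\langle\phi_-,\phi_+\rangle=\langle B\Phi,e^{2isA_E}B\Phi\rangle=\cos2s+i\sin2s\,\langle B\Phi,A_EB\Phi\rangle .
\]
The last inner product is real because $A_E$ is self-adjoint. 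Therefore $|\langle\phi_-,\phi_+\rangle|\ge|\cos2s|$, which gives the bound $2|\sin 2s|$.

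For attainment, take $\Phi=\psi_E$ with trivial reference. Then $\langle B\psi_E,A_EB\psi_E\rangle=\langle\psi_E,BA_EB\psi_E\rangle=-\langle\psi_E,A_E\psi_E\rangle=0$ by the choice of $\alpha_E$. So the overlap equals $\cos 2s$ and the bound is reached, which proves \eqref{eq:constrained-Clifford-slice}. Setting $s=\pi/4$ gives the value $2$ in \eqref{eq:perfect-order-discrimination}. Helstrom's theorem then turns trace distance $2$ into perfect one-shot discrimination.

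I expect the main obstacle to be the identification in the first step: the relevant phase condition is $\langle A_E\rangle_{\psi_E}=0$, not vanishing of the commutator mean as in Theorem~\ref{thm:intro-saturation}. Once this is seen, the same continuity and antiperiodicity argument applies directly to the scalar function $\langle\psi_E,A_\alpha\psi_E\rangle$. The remaining care is in justifying the reduction from mixed to pure inputs under the energy constraint.
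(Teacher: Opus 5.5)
Your proof is correct and follows the paper's route: the same $\psi_E$ below $E'$, the same phase selection making $\langle\psi_E,A_{\alpha}\psi_E\rangle=0$, the bound $|\cos2s|$ on the overlap for an arbitrary pure input, and attainment at $\psi_E$. The differences are cosmetic. You compute the overlap after conjugating by $B$, getting $\langle B\Phi,e^{2isA_E}B\Phi\rangle$ instead of $\langle\Xi,(e^{-2isA_E}\otimes1)\Xi\rangle$, and you pass from mixed to pure inputs by convexity instead of by purification and partial-trace contractivity; the convexity step is valid because your pure-state bound never uses the energy constraint.
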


\begin{proof}
Choose $E'$ with $\inf\sigma(G)<E'<E$ and a unit vector
$\psi_E\in1_{(-\infty,E']}(G)\cH_W$.  After scaling a nonzero pairing to
$\omega(u,v)=\pi$, Lemma~\ref{lem:strong-clifford} gives a fixed symmetry
$B\in\mathcal R_W(K_2)$ and a strongly continuous family of symmetries
$A_\alpha\in\mathcal R_W(K_1)$ satisfying
\[
A_\alpha B=-BA_\alpha,
\qquad
A_{\alpha+\pi}=-A_\alpha.
\]
The function
\[
f(\alpha)=\langle\psi_E,A_\alpha\psi_E\rangle
\]
is real, continuous, and antiperiodic.  Hence there is $\alpha_E$ with
$f(\alpha_E)=0$.  Put $A_E=A_{\alpha_E}$.

For the two pure output vectors at $t=\pi/2$,
\[
\psi_{BA}(s)=e^{i\pi B/2}e^{isA_E}\psi_E,
\qquad
\psi_{AB}(s)=e^{isA_E}e^{i\pi B/2}\psi_E,
\]
anticommutation gives
\[
\langle\psi_{BA}(s),\psi_{AB}(s)\rangle
=
\left\langle\psi_E,
e^{-isA_E}Be^{isA_E}B\psi_E
\right\rangle
=
\langle\psi_E,e^{-2isA_E}\psi_E\rangle
=
\cos 2s.
\]
Therefore
\[
\left\|
|\psi_{BA}(s)\rangle\langle\psi_{BA}(s)|
-
|\psi_{AB}(s)\rangle\langle\psi_{AB}(s)|
\right\|_1
=
2|\sin 2s|.
\]
Since $\psi_E$ satisfies the input-energy constraint, this gives the lower bound
\[
C^{G,E}_{A_E,B}\left(s,\frac{\pi}{2}\right)\ge2|\sin2s|.
\]
For the reverse inequality, let $\varrho$ be any admissible system--reference state and, after
enlarging the reference system if necessary, purify it to a unit vector $\Xi$ without changing
the system marginal.  The two purified outputs differ by
the relative unitary $e^{-2isA_E}\otimes1$, so with
\[
a_\Xi:=\langle\Xi,(A_E\otimes1)\Xi\rangle\in[-1,1]
\]
their overlap is
\[
\langle\Xi,(e^{-2isA_E}\otimes1)\Xi\rangle
=
\cos2s-i\,a_\Xi\sin2s.
\]
Its modulus is at least $|\cos2s|$.  Hence the trace distance of the purified outputs is at most
$2|\sin2s|$, and trace-norm contractivity under the partial trace gives the same bound for the
original outputs.  Taking the supremum over admissible inputs proves
\eqref{eq:constrained-Clifford-slice}.  At $s=\pi/4$ the two output states built from $\psi_E$
are orthogonal, and the final statement follows from the Helstrom formula.
\end{proof}

The finite-dimensional unconstrained unitary-discrimination formula is classical
\cite[Thm.~3.55]{Watrous2018TQI}, while energy-constrained discrimination of unitary channels is
studied in \cite{BeckerDattaLamiRouze2021}.  Here the Clifford structure makes the optimum explicit
for an input below any prescribed admissible energy threshold.

\begin{corollary}[Energy-independent Weyl dichotomy]
\label{cor:weyl-dichotomy}
For closed real subspaces $K_1,K_2\subset\mathfrak h$, any positive self-adjoint $G$ on
$\cH_W$, and $E>\inf\sigma(G)$,
\begin{equation}
\Gamma_{G,E}(\mathcal R_W(K_1),\mathcal R_W(K_2))
=
\begin{cases}
0,&\omega(K_1,K_2)=0,\\
4,&\omega(K_1,K_2)\ne0.
\end{cases}
\label{eq:weyl-dichotomy}
\end{equation}
\end{corollary}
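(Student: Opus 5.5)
The case $\omega(K_1,K_2)\ne0$ is exactly Theorem~\ref{thm:intro-saturation}, specifically \eqref{eq:intro-saturation}, so nothing new is needed there. The content of the corollary is the vanishing case. The plan is to show that when $\omega(K_1,K_2)=0$ the two von Neumann algebras commute. Then every commutator $-i[a,b]$ with $a\in\mathcal R_W(K_1)$, $b\in\mathcal R_W(K_2)$ is the zero operator, so its variance in every state is $0$. Hence $\gamma_{G,E}(-i[a,b])=0$ for all admissible $a,b$, and the supremum in \eqref{eq:Gamma-intro} is $0$.

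For the commutation, I would start at the level of generators. If $f\in K_1$ and $g\in K_2$, then $\omega(f,g)=0$. Applying the Weyl relation \eqref{eq:weyl} in both orders gives
\[
W(f)W(g)=e^{-i\omega(f,g)/2}W(f+g)=W(f+g)=e^{-i\omega(g,f)/2}W(g+f)=W(g)W(f),
\]
using $\omega(g,f)=-\omega(f,g)=0$. So the set $S_2=\{W(g):g\in K_2\}$ is contained in $\{W(f):f\in K_1\}'$.

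I would then pass to the generated algebras through the bicommutant. The set $S_1=\{W(f):f\in K_1\}$ is self-adjoint, since $W(f)^*=W(-f)$ and $K_1$ is a real subspace. From $S_2\subset S_1'$ we get $\mathcal R_W(K_2)=S_2''\subset S_1'''=S_1'$. Consequently every element of $\mathcal R_W(K_2)$ commutes with $S_1$, which means $S_1\subset\mathcal R_W(K_2)'$. Taking commutants twice once more gives $\mathcal R_W(K_1)=S_1''\subset\mathcal R_W(K_2)'$. So $[a,b]=0$ for all $a\in\mathcal R_W(K_1)$ and $b\in\mathcal R_W(K_2)$.

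There is no real obstacle here. The only thing to check is that the definition of $\gamma_{G,E}$ requires $E>\inf\sigma(G)$, so that $\mathfrak S_{G,E}$ is nonempty. This matters because the supremum in \eqref{eq:gamma-intro} must be $0$, not $-\infty$, and the hypothesis supplies it. The problem also reads "$\omega(K_1,K_2)=0$" as meaning that $\omega$ vanishes on all of $K_1\times K_2$, which is equivalent to $K_2\subset K_1'$. Under that reading the two cases are exhaustive, and the value does not depend on $G$ or $E$.
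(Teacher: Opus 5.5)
Your proposal is correct and follows the paper's own proof: the nonzero case is Theorem~\ref{thm:intro-saturation}, and in the vanishing case the Weyl relations make the generators commute, so the generated von Neumann algebras commute and every commutator witness is zero. Your bicommutant argument and your remark that $\mathfrak S_{G,E}\neq\emptyset$ when $E>\inf\sigma(G)$ just spell out steps the paper leaves implicit.
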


\begin{proof}
If $\omega(K_1,K_2)=0$, the Weyl relations show that $W(f)$ and $W(g)$ commute for every
$f\in K_1$ and $g\in K_2$.  Hence
$[\mathcal R_W(K_1),\mathcal R_W(K_2)]=0$, which gives the first line.  The second is
Theorem~\ref{thm:intro-saturation}.
\end{proof}

\begin{remark}[State dependence and energy regularity]
\label{rem:state-dependence}
The phase in Theorem~\ref{thm:intro-saturation} cannot be fixed independently of the state.  If
$k=-i[a,b]$ with $\|a\|,\|b\|\le1$ and $\Var_\rho(k)=4$, then $\|k\|=2$, the state $\rho$
is supported on the spectral subspaces of $k$ at $\pm2$, and its mean-zero condition puts
nonzero weight on both.  A normal state supported on either endpoint subspace has variance zero.
Thus a single admissible pair cannot have variance $4$ in every normal state.  The phase-selected
Borel symmetries may also fail to preserve the energy form domain, as the following oscillator
example shows.
\end{remark}

\begin{proposition}[A phase-selected symmetry outside the number form domain]
\label{prop:singular-symmetry}
Let $\cH=L^2(\mathbb R)$ with oscillator number operator
$N=(Q^2+P^2-1)/2$, and let $U=e^{iaQ}$ with $a\ne0$.  For the antipodal sign function $\chi$
used above and any $\alpha\in\mathbb R$, set $A_\alpha=\chi(e^{-i\alpha}U)$.  If $\Omega$ is
the Gaussian oscillator vacuum, then
\begin{equation}
A_\alpha\Omega\notin\dom N^{1/2}.
\label{eq:singular-symmetry}
\end{equation}
Consequently, if $t\notin\pi\mathbb Z$, the vector $e^{itA_\alpha}\Omega$ has infinite number
form energy.
\end{proposition}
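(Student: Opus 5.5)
Since $U=e^{iaQ}$ is a function of $Q$, the operator $A_\alpha=\chi(e^{-i\alpha}e^{iaQ})$ is multiplication by the function
\[
s_\alpha(x):=\chi\bigl(e^{i(ax-\alpha)}\bigr).
\]
This is a $\{\pm1\}$-valued step function. It is periodic with period $2\pi/|a|$, and because of the antipodal property \eqref{eq:chi-odd} it changes sign across a discrete set of jump points, two per period. The plan is to show that $s_\alpha\Omega$ lies outside the form domain of $N$. Recall that $\dom N^{1/2}=\dom Q\cap\dom P$, with
\[
\|N^{1/2}\psi\|^2=\tfrac12(\|Q\psi\|^2+\|P\psi\|^2-\|\psi\|^2),
\]
and that $\dom P$ is the Sobolev space $H^1(\mathbb R)$.

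The key step is to show that $\phi:=s_\alpha\Omega\notin H^1$. The Gaussian $\Omega(x)=\pi^{-1/4}e^{-x^2/2}$ is smooth and strictly positive. Pick a jump point $x_0$ of $s_\alpha$. Near $x_0$, the function $\phi$ has one-sided limits $\pm\Omega(x_0)$, and these differ by $2\Omega(x_0)\neq0$. Every $H^1$ function on $\mathbb R$ has a continuous representative. Modifying $\phi$ on a null set cannot remove a jump between essential one-sided limits, so $\phi\notin H^1$. Hence $A_\alpha\Omega\notin\dom P\supseteq\dom N^{1/2}$. If one prefers a Fourier-side argument, one can compute $\int|k|^2|\hat\phi(k)|^2\,dk=\infty$, since a jump discontinuity forces $\hat\phi(k)$ to decay only like $1/k$. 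The jump argument is cleaner, and I would use it.

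For the consequence, use that $A_\alpha^2=1$, which gives
\[
e^{itA_\alpha}=\cos t\,1+i\sin t\,A_\alpha .
\]
Thus $e^{itA_\alpha}\Omega=\cos t\,\Omega+i\sin t\,A_\alpha\Omega$. Now $\Omega\in\dom N^{1/2}$, $\dom N^{1/2}$ is a linear subspace, and $\sin t\neq0$ when $t\notin\pi\mathbb Z$. If $e^{itA_\alpha}\Omega$ were in $\dom N^{1/2}$, then so would be $A_\alpha\Omega$, which is a contradiction. Therefore the vector has infinite number-form energy in the extended-positive sense.

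The only real obstacle is a bookkeeping point: one must confirm that $s_\alpha$ actually has jumps for every $\alpha$, so that the rotation by $\alpha$ never makes the discontinuities degenerate. This holds because the map $x\mapsto e^{i(ax-\alpha)}$ sweeps the whole circle once per period. It therefore passes through the two discontinuity points $\pm i$ of $\chi$, and across each of them $\chi$ switches between $1$ and $-1$.
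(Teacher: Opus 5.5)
Your proof is correct and follows the paper's argument essentially step for step: $A_\alpha$ is multiplication by a periodic $\{\pm1\}$-valued step function that jumps where $e^{i(ax-\alpha)}=\pm i$, this jump against the smooth, nonvanishing Gaussian rules out membership in $H^1(\mathbb R)\supseteq\dom N^{1/2}$, and the identity $e^{itA_\alpha}=\cos t\,1+i\sin t\,A_\alpha$ transfers the conclusion to $e^{itA_\alpha}\Omega$ whenever $\sin t\ne0$. Your linear-subspace phrasing of the last step and your explicit check that the jumps persist for every $\alpha$ are slightly more detailed than the paper's, but the route is the same.
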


\begin{proof}
The operator $A_\alpha$ is multiplication by a $\{\pm1\}$-valued periodic step function whose
jumps occur where $e^{i(ax-\alpha)}=\pm i$.  The oscillator vacuum is smooth and nonzero at
those points, so $A_\alpha\Omega$ has jump discontinuities and does not belong to $H^1(\mathbb R)$.
The form domain of $N+1/2$ is
\[
\{\psi\in L^2(\mathbb R):\psi'\in L^2(\mathbb R),\ x\psi\in L^2(\mathbb R)\},
\]
which proves \eqref{eq:singular-symmetry}.  Since
$e^{itA_\alpha}\Omega=\cos t\,\Omega+i\sin t\,A_\alpha\Omega$, the same jump discontinuities
remain when $\sin t\ne0$.
\end{proof}

\section{Fixed trigonometric witnesses and finite-energy approximation}

From this point onward we specialize to the symmetric Fock representation on
$\cF(\mathfrak h)$ and write
\[
\Ralg(K):=\{W(f):f\in K\}''.
\]
For $w\in\mathfrak h$, let $\Phi(w)$ denote the Segal field normalized by
$W(tw)=e^{it\Phi(w)}$.

We record the following graph-energy estimate from \cite[Lem.~5.11]{Nasreddine2026DualityArxiv}, including the form-domain argument because it will be used below.

\begin{lemma}[Weyl graph-energy bound]
\label{lem:weyl-energy}
Let $H_1\ge0$ be self-adjoint on $\mathfrak h$, put $G=d\Gamma(H_1)$, and let
$z\in\dom H_1^{1/2}$.  Then, for $\Psi\in\dom G^{1/2}$,
\begin{equation}
\|G^{1/2}W(z)\Psi\|
\le
\|G^{1/2}\Psi\|+\frac{1}{\sqrt2}\|H_1^{1/2}z\|\,\|\Psi\|.
\label{eq:weyl-energy}
\end{equation}
\end{lemma}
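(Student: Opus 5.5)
The plan is to reduce the estimate to the covariance of the second-quantized energy under Weyl translations. Write $W(z)^*\,G\,W(z)$ as $G$ plus a field term plus a constant. Formally, $W(z)^* a(f) W(z)=a(f)+c\langle f,z\rangle$ with $|c|=1/\sqrt2$ in the Segal normalization $\Phi(w)=(a(w)+a^*(w))/\sqrt2$. Hence, as quadratic forms,
\[
W(z)^*\,d\Gamma(H_1)\,W(z)=d\Gamma(H_1)+\Phi(i' H_1 z)+\tfrac12\langle z,H_1z\rangle,
\]
for a suitable fixed real-linear rotation $i'$ of $H_1z$. The sign and phase conventions do not matter for the norm bound. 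This requires $H_1z\in\mathfrak h$. For $z$ only in $\dom H_1^{1/2}$, I will instead treat the middle term as the form $\Psi\mapsto 2\operatorname{Re}\langle a(H_1^{1/2}\cdot)\ldots\rangle$, or more simply prove the bound first for $z\in\dom H_1$ and finish by approximation.

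The cleanest route avoids the identity and works directly with $G^{1/2}$ as an $\ell^2$-sum over "one-particle energy". First take $\Psi$ in the finite-particle subspace with components in $\dom H_1$, and take $z\in\dom H_1$. Then $\|G^{1/2}\Phi\|^2=\sum_k\|(H_1^{1/2})_k\Phi\|^2$ summed over particle slots, i.e. $\|G^{1/2}\Phi\|^2=\int\|a(e)\Phi\|^2$-type expressions. The key operator identity is
\[
[a(f),W(z)]=\tfrac{i}{\sqrt2}\langle f,z\rangle W(z),
\]
up to conventions. It gives, for an orthonormal basis $(e_k)$ diagonalizing $H_1$ (or via the spectral measure in general), $a(H_1^{1/2}e_k)W(z)\Psi=W(z)\bigl(a(H_1^{1/2}e_k)\Psi+\tfrac{i}{\sqrt2}\langle H_1^{1/2}e_k,z\rangle\Psi\bigr)$. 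Summing squares over $k$ and using that $W(z)$ is unitary, the vector $(a(H_1^{1/2}e_k)W(z)\Psi)_k$ in $\ell^2(\cF)$ is $W(z)$ applied componentwise to a sum of two vectors. The first has norm $\|G^{1/2}\Psi\|$. The second has norm $\tfrac1{\sqrt2}\bigl(\sum_k|\langle e_k,H_1^{1/2}z\rangle|^2\bigr)^{1/2}\|\Psi\|=\tfrac1{\sqrt2}\|H_1^{1/2}z\|\,\|\Psi\|$. The triangle inequality in $\ell^2(\cF)$ then yields the bound exactly. To make this rigorous without assuming discrete spectrum, I will use the standard identity $\|d\Gamma(H_1)^{1/2}\Phi\|^2=\sum_k\|a(H_1^{1/2}e_k)\Phi\|^2$ for any orthonormal basis $(e_k)$ of $\overline{\operatorname{ran}}H_1^{1/2}$ with $e_k\in\dom H_1^{1/2}$. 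This holds on the finite-particle core, since on the $n$-particle space it is $\sum_{j}\|(H_1^{1/2})_j\Phi\|^2$ via Parseval in each slot.

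The main obstacle is the domain and closure bookkeeping. First, $W(z)$ need not preserve the finite-particle core, so the identity must be applied to $W(z)\Psi$, which lies only in $\dom G^{1/2}$ a posteriori. Second, $z$ is only in $\dom H_1^{1/2}$. I will handle both by the same device. First prove the inequality for $\Psi$ in a core $\mathcal D$ of $G^{1/2}$ consisting of finite-particle vectors with components in $\dom H_1^{1/2}$, where the commutation relation holds on exponential/coherent vectors, which form a core. The computation above shows that the sequence $(a(H_1^{1/2}e_k)W(z)\Psi)_k$ is square-summable. Closedness of $G^{1/2}$, in the form that the quadratic form $\sum_k\|a(H_1^{1/2}e_k)\cdot\|^2$ is closed and equals $\|G^{1/2}\cdot\|^2$, then gives $W(z)\Psi\in\dom G^{1/2}$ with the stated bound. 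Finally, for general $\Psi\in\dom G^{1/2}$, take $\Psi_n\to\Psi$ in graph norm with $\Psi_n\in\mathcal D$. Then $W(z)\Psi_n\to W(z)\Psi$, and the bound shows that $G^{1/2}W(z)\Psi_n$ is Cauchy. Closedness of $G^{1/2}$ again transfers the inequality to the limit.
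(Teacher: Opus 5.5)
Your core computation is exactly the paper's: expand $\|G^{1/2}\cdot\|^2$ as $\sum_k\|a(H_1^{1/2}e_k)\cdot\|^2$, shift each annihilator under $W(z)$ by a scalar of modulus $|\langle e_k,H_1^{1/2}z\rangle|/\sqrt2$, and combine the triangle inequality in $\ell^2(\mathcal F)$ with Parseval. The gap is in the step you flag as the main obstacle, showing $W(z)\Psi\in\dom G^{1/2}$.

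The form $q(\Phi)=\sum_k\|a(H_1^{1/2}e_k)\Phi\|^2$ on its maximal domain is closed, and it agrees with $\|G^{1/2}\Phi\|^2$ on the finite-particle core. That only makes $q$ a closed \emph{extension} of the form of $G$; it does not show the two are equal. For an arbitrary orthonormal basis in $\dom H_1^{1/2}$, which is what you allow, the extension can be strict. Take $H_1$ injective and unbounded for definiteness. On one-particle vectors, $q(\phi)=\sum_k|\langle H_1^{1/2}e_k,\phi\rangle|^2$. Its maximal domain is $\dom T_0^*$, where $T_0$ is the restriction of $H_1^{1/2}$ to $\spn\{e_k\}$. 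This domain is strictly larger than $\dom H_1^{1/2}$ whenever that span is not a core. Such bases exist: Gram--Schmidt inside the kernel of a functional that is graph-norm continuous but $\mathfrak h$-norm discontinuous produces one. So finiteness of $q(W(z)\Psi)$ does not by itself give $W(z)\Psi\in\dom G^{1/2}$, and your closure step does not go through as stated.

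There are two ways to repair it. First, you can build $(e_k)$ by Gram--Schmidt from a sequence that is dense in $\dom H_1^{1/2}$ for the graph norm. Then $\spn\{e_k\}\odot\mathfrak h^{\otimes(n-1)}$ is a core for $H_1^{1/2}\otimes1$. A sectorwise adjoint argument then shows that $q(\Phi_n)<\infty$ forces $\Phi_n\in\dom(H_1^{1/2}\otimes1)$, so the maximal form is the form of $G$.

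Second, you can follow the paper, which avoids the issue by regularizing to the bounded operator $H_{1,\varepsilon}=H_1(1+\varepsilon H_1)^{-1}$. For bounded one-particle energy, the basis identity holds on all of $\dom N^{1/2}$, whatever basis is used. Since $W(z)$ preserves $\dom N^{1/2}\subset\dom G_\varepsilon^{1/2}$, the computation applies to $W(z)\Psi$ directly. The resulting bound for $G_\varepsilon$ uses $\|H_{1,\varepsilon}^{1/2}z\|\le\|H_1^{1/2}z\|$ and $\|G_\varepsilon^{1/2}\Psi\|\le\|G^{1/2}\Psi\|$. Monotone convergence of the forms as $\varepsilon\downarrow0$ then gives both $W(z)\Psi\in\dom G^{1/2}$ and the inequality.

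Your final graph-norm density step is correct. Once the form identification is in place it is unnecessary, because every $\Psi\in\dom G^{1/2}$ already lies in $\dom a(H_1^{1/2}e_k)$ for all $k$ and the computation applies to it directly.
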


\begin{proof}
Choose an orthonormal basis $(e_j)$ contained in $\dom H_1$.  On the usual finite-particle
form core for $G$,
\[
\|G^{1/2}\Psi\|^2
=\sum_j\|a(H_1^{1/2}e_j)\Psi\|^2.
\]
The Weyl displacement relation shifts each annihilation operator by a scalar whose modulus is
$|\langle e_j,H_1^{1/2}z\rangle|/\sqrt2$.  Minkowski's inequality in the Hilbert direct
sum and Parseval's identity give \eqref{eq:weyl-energy} on the form core.

For $\varepsilon>0$, set
\[
H_{1,\varepsilon}:=H_1(1+\varepsilon H_1)^{-1},
\qquad
G_\varepsilon:=d\Gamma(H_{1,\varepsilon}).
\]
Since $H_{1,\varepsilon}$ is bounded, $G_\varepsilon\le\|H_{1,\varepsilon}\|N$ as quadratic
forms, and the standard Weyl number estimate shows that $W(z)$ preserves $\dom N^{1/2}$.
Thus the preceding computation is legitimate on the finite-particle form core, and graph-norm
closure for $G_\varepsilon^{1/2}$ gives
\[
\|G_\varepsilon^{1/2}W(z)\Psi\|
\le
\|G_\varepsilon^{1/2}\Psi\|
+\frac1{\sqrt2}\|H_{1,\varepsilon}^{1/2}z\|\,\|\Psi\|
\]
for $\Psi\in\dom G^{1/2}\subset\dom G_\varepsilon^{1/2}$.  As
$\varepsilon\downarrow0$, the forms of $G_\varepsilon$ increase to that of $G$, while
$H_{1,\varepsilon}^{1/2}z\to H_1^{1/2}z$.  Monotone convergence of the quadratic forms
therefore gives $W(z)\Psi\in\dom G^{1/2}$ and \eqref{eq:weyl-energy}.
\end{proof}

Averaging \eqref{eq:weyl-energy} for $W(\pm u)$ and $W(\pm v)$ shows that the trigonometric
generators below preserve $\dom G^{1/2}$ and are bounded for the $G^{1/2}$ graph norm whenever
$u,v\in\dom H_1^{1/2}$.

Fix $u,v\in\mathfrak h$ with
$\omega(u,v)=\pi$ and put
\begin{equation}
U=W(u),\qquad V=W(v),\qquad
A=\frac{U-U^*}{2i},\qquad
B=\frac{V-V^*}{2i},\qquad
h=-i[A,B].
\label{eq:trig-witness}
\end{equation}
Then $A$ and $B$ are self-adjoint contractions, $AB=-BA$, and
\begin{equation}
h=-2iAB,
\qquad
h^2=4A^2B^2.
\label{eq:h-square}
\end{equation}
Set
\begin{equation}
z_1=2u,\qquad z_2=2v,\qquad
S_1=-W(z_1),\qquad S_2=-W(z_2).
\label{eq:squared-stabilizers}
\end{equation}
Since $\omega(z_1,z_2)=4\pi$, the unitaries $S_1$ and $S_2$ commute.

\begin{proposition}[Squared-stabilizer identity]
\label{prop:squared-stabilizer}
For the fixed witness \eqref{eq:trig-witness},
\begin{align}
4-h^2
={}&(1-\operatorname{Re}S_1)+(1-\operatorname{Re}S_2) \notag\\
&+\frac12(1-\operatorname{Re}S_1S_2)
+\frac12(1-\operatorname{Re}S_1S_2^*).
\label{eq:squared-stabilizer-identity}
\end{align}
Equivalently, $2(4-h^2)$ is the phase-fixed stabilizer Hamiltonian associated with the
multiset
\begin{equation}
\mathcal Z_h=(z_1,z_1,z_2,z_2,z_1+z_2,z_1-z_2)
\label{eq:bridge-frame}
\end{equation}
and target phases $(-1,-1,-1,-1,+1,+1)$.
\end{proposition}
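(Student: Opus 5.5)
The plan is a direct algebraic computation in the bounded operators generated by $U=W(u)$ and $V=W(v)$. We use only the Weyl relation \eqref{eq:weyl}, the anticommutation $UV=-VU$ from Lemma~\ref{lem:anti-weyl}, and the identity $h^2=4A^2B^2$ of \eqref{eq:h-square}. Since all operators involved are bounded, no domain issues arise.

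\emph{Step 1: the squares $A^2$ and $B^2$.} Expanding gives
\[
A^2=-\tfrac14(U-U^*)^2=\tfrac14(2-U^2-U^{*2}).
\]
By \eqref{eq:weyl} with $f=g=u$ we have $\omega(u,u)=0$, so $U^2=W(2u)=W(z_1)=-S_1$. Hence $A^2=\tfrac12(1+\operatorname{Re}S_1)$, and likewise $B^2=\tfrac12(1+\operatorname{Re}S_2)$, where $\operatorname{Re}X=(X+X^*)/2$. This gives
\[
h^2=(1+\operatorname{Re}S_1)(1+\operatorname{Re}S_2).
\]
This product is self-adjoint because $S_1$ and $S_2$ commute, as noted after \eqref{eq:squared-stabilizers}. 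Indeed, $\omega(z_1,z_2)=4\pi$, so the exchange phase $e^{-i\omega(z_1,z_2)}=1$. Since the $S_j$ are unitary, $S_1^*$ and $S_2^*$ also commute with $S_1$ and $S_2$.

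\emph{Step 2: expanding the cross term.} Using this commutativity,
\[
\operatorname{Re}S_1\operatorname{Re}S_2=\tfrac14\bigl(S_1S_2+S_1S_2^*+S_1^*S_2+S_1^*S_2^*\bigr)=\tfrac12\bigl(\operatorname{Re}S_1S_2+\operatorname{Re}S_1S_2^*\bigr),
\]
because $S_1^*S_2^*=(S_1S_2)^*$ and $S_1^*S_2=(S_1S_2^*)^*$. Therefore
\[
4-h^2=3-\operatorname{Re}S_1-\operatorname{Re}S_2-\tfrac12\operatorname{Re}S_1S_2-\tfrac12\operatorname{Re}S_1S_2^*.
\]
Regrouping the constant $3=1+1+\tfrac12+\tfrac12$ gives \eqref{eq:squared-stabilizer-identity}.

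\emph{Step 3: the stabilizer reformulation.} Here the only care needed is with the Weyl phases. By \eqref{eq:weyl},
\[
S_1S_2=W(z_1)W(z_2)=e^{-i\omega(z_1,z_2)/2}W(z_1+z_2)=e^{-2\pi i}W(z_1+z_2)=W(z_1+z_2).
\]
Similarly, $S_1S_2^*=W(z_1)W(-z_2)=e^{2\pi i}W(z_1-z_2)=W(z_1-z_2)$. So the product directions carry target phase $+1$, while $S_j=-W(z_j)$ carry phase $-1$.

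Multiplying the identity by $2$ writes $2(4-h^2)$ as $\sum_k\bigl(1-\operatorname{Re}(c_kW(\zeta_k))\bigr)$. The sum runs over the multiset \eqref{eq:bridge-frame}: $z_1$ and $z_2$ each appear twice with $c_k=-1$, and $z_1\pm z_2$ each appear once with $c_k=+1$. This is precisely the stated phase-fixed stabilizer Hamiltonian.

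The main obstacle, mild as it is, is keeping track of the factor $e^{\mp i\omega/2}$ in Step 3. The choice $\omega(u,v)=\pi$ makes these phases equal to $e^{\mp2\pi i}=1$ rather than $-1$, and this is what fixes the target phases $(-1,-1,-1,-1,+1,+1)$.
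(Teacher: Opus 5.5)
Your proposal is correct and follows essentially the same route as the paper: you write $A^2,B^2$ as $(1+\operatorname{Re}S_j)/2$ and expand $h^2=(1+\operatorname{Re}S_1)(1+\operatorname{Re}S_2)$, using commutativity of $S_1,S_2$ to split the cross term; you then use the trivial Weyl cocycle at pairing $4\pi$ to identify $S_1S_2=W(z_1+z_2)$ and $S_1S_2^*=W(z_1-z_2)$. You spell out a few steps the paper leaves implicit, namely the computation $U^2=W(2u)$ and the commutation of $S_1$ with $S_2^*$, but the argument is the same.
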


\begin{proof}
Write $X=\operatorname{Re}S_1$ and $Y=\operatorname{Re}S_2$.  From
\eqref{eq:squared-stabilizers},
\[
A^2=\frac{1+X}{2},\qquad B^2=\frac{1+Y}{2},
\]
so \eqref{eq:h-square} gives $h^2=(1+X)(1+Y)$.  Because $S_1$ and $S_2$ commute,
\[
XY=\frac12\operatorname{Re}(S_1S_2)+\frac12\operatorname{Re}(S_1S_2^*),
\]
which proves \eqref{eq:squared-stabilizer-identity}.  The Weyl cocycle is trivial on
$z_1,z_2$ because their symplectic pairing is $4\pi$, hence
$S_1S_2=W(z_1+z_2)$ and $S_1S_2^*=W(z_1-z_2)$.  This gives the stated phase assignment.
\end{proof}

The product directions in \eqref{eq:bridge-frame} contribute at the same order as the two
squared generators.  Thus the fixed-witness deficit is not determined by $S_1$ and $S_2$
separately.  Together with Remark~\ref{rem:state-dependence} and
Proposition~\ref{prop:singular-symmetry}, the identity also separates the two regimes considered
in the paper: exact saturation is obtained by adapting a Borel symmetry to the state, whereas the
fixed trigonometric witness remains energy regular and its approach to the endpoint is controlled
by the commuting frame in \eqref{eq:bridge-frame}.

\begin{lemma}[Spectral diameter of the fixed witness]
\label{lem:h-spectral-diameter}
For the trigonometric witness \eqref{eq:trig-witness},
\[
\diam\sigma(h)=4.
\]
\end{lemma}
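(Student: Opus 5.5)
Since $\|A\|,\|B\|\le1$, we have $\|h\|\le2$ and $\sigma(h)\subset[-2,2]$, so $\diam\sigma(h)\le4$. The plan is to prove the reverse inequality by showing that both $2$ and $-2$ lie in $\sigma(h)$. This uses approximate eigenvectors rather than eigenvectors: by Lemma~\ref{lem:no-eigenvalues}, $U$ and $V$ have no point spectrum.

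\emph{Step 1: reduce to symmetric spectrum.} Lemma~\ref{lem:anti-weyl} gives $VUV^*=-U$. Hence $VAV^*=-A$ and $VBV^*=B$, so $VhV^*=-h$. The spectrum of $h$ is therefore symmetric under $\lambda\mapsto-\lambda$, and it suffices to show $\sup\sigma(h^2)=4$. Indeed, that gives $2\in\sigma(|h|)$, hence $2$ or $-2$ lies in $\sigma(h)$, and by symmetry both do.

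\emph{Step 2: show $\|h^2\|=4$.} By Proposition~\ref{prop:squared-stabilizer}, $4-h^2$ is a sum of nonnegative terms $1-\operatorname{Re}S$ with $S\in\{S_1,S_2,S_1S_2,S_1S_2^*\}$. We therefore need unit vectors $\psi_n$ with $\|(S-1)\psi_n\|\to0$ for all four unitaries simultaneously. Since $S_1$ and $S_2$ commute, they have a joint spectral measure on $\mathbb T^2$, and it is enough to show that $(1,1)$ lies in their joint spectrum. Then spectral projections $1_{D_\varepsilon}(S_1,S_2)$ onto small neighbourhoods $D_\varepsilon$ of $(1,1)$ are nonzero, and any unit vector in their range makes all four terms $O(\varepsilon)$. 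An equivalent route is to use $h^2=4A^2B^2$ with $A^2$ and $B^2$ commuting, so that $\|h^2\|=4\sup\{st:(s,t)\in\sigma(A^2,B^2)\}$.

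\emph{Step 3: locate $(1,1)$ in the joint spectrum.} This is the main obstacle. The joint spectrum is a nonempty closed subset of $\mathbb T^2$. I would use Weyl translations to move it. For $w\in\mathfrak h$, the relation \eqref{eq:weyl} gives $W(w)S_jW(w)^*=e^{-i\omega(w,z_j)}S_j$ (up to the sign convention). So if $(\lambda_1,\lambda_2)$ is in the joint spectrum, so is $(e^{-i\omega(w,z_1)}\lambda_1,e^{-i\omega(w,z_2)}\lambda_2)$. The joint spectrum is then invariant under all rotations $(\theta_1,\theta_2)$ in the range of $w\mapsto(\omega(w,z_1),\omega(w,z_2))$. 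Since $z_1,z_2$ are real-linearly independent (their pairing is $4\pi\ne0$), this real-linear map is onto $\mathbb R^2$; for instance, take $w$ in the real span of $z_1,z_2$. The joint spectrum is therefore all of $\mathbb T^2$, and in particular contains $(1,1)$. This gives $\|4-h^2\|$ attaining $0$ at the bottom of the spectrum, so $4\in\sigma(h^2)$.

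Combining Steps 1--3 gives $\pm2\in\sigma(h)$ and hence $\diam\sigma(h)=4$. The only delicate point is justifying the covariance of the joint spectral measure under unitary conjugation in Step 3, which is a routine use of functional calculus.
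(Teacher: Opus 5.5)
Your proof is correct. Its outer structure matches the paper's: the bound $\|h\|\le2$, the symmetry $VhV^*=-h$, and the reduction to showing that $(1,1)$ lies in the joint spectrum of the commuting pair $S_1,S_2$.

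The genuine difference is in your Step~3.

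\emph{The paper's route.} It restricts to the plane $\spn_{\mathbb R}\{u,v\}$ and invokes the Stone--von Neumann theorem to reduce to a multiple of the Schr\"odinger representation. It then uses the Zak transform to exhibit $S_1,S_2$ as multiplication operators whose joint essential range is $\mathbb T^2$.

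\emph{Your route.} You use only the covariance $W(w)S_jW(w)^*=e^{-i\omega(w,z_j)}S_j$, which shows that the joint spectrum is invariant under the torus rotation by $(e^{-i\omega(w,z_1)},e^{-i\omega(w,z_2)})$. For $w=az_1+bz_2$ this rotation is $(e^{4\pi ib},e^{-4\pi ia})$, and these exhaust $\mathbb T^2$. A nonempty closed subset of $\mathbb T^2$ invariant under all rotations is the whole torus, so $(1,1)$ lies in it.

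\emph{Comparison.} Your argument is shorter and needs only the Weyl relations and $\omega(z_1,z_2)\ne0$. It therefore holds verbatim in any Weyl representation, with no appeal to regularity or Stone--von Neumann, in the same spirit as Lemma~\ref{lem:no-eigenvalues}. The paper's route uses more machinery but produces an explicit model: the Zak multipliers $e^{-iLp}$ and $e^{iLx}$. That model is reused to build the quasimodes in Lemma~\ref{lem:zak-bridge}.

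One wording slip: in Step~3 you mean that the bottom of $\sigma(4-h^2)$ is $0$, not that $\|4-h^2\|$ attains $0$.
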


\begin{proof}
The restriction of the regular Weyl representation to the real symplectic plane
$\operatorname{span}_{\mathbb R}\{u,v\}$ is, by the Stone--von Neumann theorem, a multiple of the
Schr\"odinger representation; the multiplicity does not change the spectrum.  In that realization
$(S_1,S_2)$ is, up to constant phases, a translation and a modulation with cell area $4\pi$.  In the
Zak representation these become two multiplication operators whose joint essential range is
$\mathbb T^2$.  Hence
$(1,1)$ belongs to the joint spectrum, and \eqref{eq:h-square} gives $\|h\|=2$.  Conjugation by
$V=W(v)$ sends $A$ to $-A$ and fixes $B$, so $VhV^*=-h$.  The spectrum of $h$ is therefore
symmetric about zero, and its diameter is $4$.
\end{proof}

The next estimates quantify how finite input energy approaches this spectral endpoint when
$u,v$ have finite one-particle form energy.

\subsection{A compact comb for commuting Weyl translations}

Let $Q$ and $P=-i\,d/dx$ be the canonical operators on $L^2(\mathbb R)$ and
\[
N=\frac{Q^2+P^2-1}{2}.
\]
For $L,K>0$ set
\[
(T_L\psi)(x)=\psi(x-L),
\qquad
(M_K\psi)(x)=e^{iKx}\psi(x).
\]

\begin{lemma}[Compact comb]
\label{lem:compact-comb}
Assume $KL=4\pi$.  Let $\zeta_1,\zeta_2\in\mathbb T$ and
$S_1=\zeta_1T_L$, $S_2=\zeta_2M_K$.  There are $C,R_0<\infty$ such that for every
$R\ge R_0$ one can find a unit vector $\psi_R\in\dom N^{1/2}$ satisfying
\begin{equation}
\langle\psi_R,N\psi_R\rangle\le R,
\qquad
1-\operatorname{Re}\langle\psi_R,S_j\psi_R\rangle\le\frac{C}{R},
\quad j=1,2.
\label{eq:comb-main}
\end{equation}
\end{lemma}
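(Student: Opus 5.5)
We build $\psi_R$ as a finite, phase-adapted comb of disjoint narrow bumps under a slowly varying envelope. Three scales matter. The bump width $\sigma$ controls the modulation deficit, of order $K^2\sigma^2$, and costs kinetic energy of order $\sigma^{-2}$. The envelope length $mL$ controls the translation deficit, of order $m^{-2}$, and costs potential energy of order $(mL)^2$. Taking $\sigma^2\asymp R^{-1}$ and $m\asymp R^{1/2}$ makes both energies $O(R)$ and both deficits $O(R^{-1})$.

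\textbf{Construction.} Fix $g\in C_c^\infty(\mathbb R)$, real, with $\|g\|_2=1$ and $\operatorname{supp}g\subset(-1,1)$. For $0<\sigma<L/2$ put $g_\sigma(y)=\sigma^{-1/2}g(y/\sigma)$, so translates of $g_\sigma$ by distinct multiples of $L$ have disjoint supports. Fix a real $w\in C_c^\infty(\mathbb R)$, not identically zero. Write $\zeta_2=e^{i\theta_2}$ and choose $x_0$ with $e^{iKx_0}\zeta_2=1$. Since $KL=4\pi$, we have $e^{iK(x_0+nL)}\zeta_2=1$ for every $n\in\mathbb Z$. For an integer $m\ge1$ set
\[
\psi=\mathcal N^{-1/2}\sum_{n\in\mathbb Z}\zeta_1^{\,n}\,w(n/m)\,g_\sigma(x-x_0-nL),
\qquad
\mathcal N=\sum_n w(n/m)^2 .
\]
The sum is finite and $\psi\in C_c^\infty\subset\dom N^{1/2}$.

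\textbf{Step 1: translation.} The operator $T_L$ moves bump $n$ onto bump $n+1$, and distinct bumps are orthogonal. Hence
\[
\langle\psi,\zeta_1T_L\psi\rangle=\mathcal N^{-1}\sum_n w(n/m)\,w((n+1)/m),
\]
which is real, because the factor $\zeta_1\cdot\overline{\zeta_1^{\,n+1}}\,\zeta_1^{\,n}$ equals $1$. Therefore
\[
1-\operatorname{Re}\langle\psi,S_1\psi\rangle=\frac{1}{2\mathcal N}\sum_n\bigl(w((n+1)/m)-w(n/m)\bigr)^2\le\frac{\|w'\|_\infty^2\,\#\{n:w(n/m)\ne0\}}{2m^2\mathcal N}.
\]
By Riemann sums, $\mathcal N\sim m\|w\|_2^2$ and the count is $O(m)$, so the translation deficit is $O(m^{-2})$.

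\textbf{Step 2: modulation.} The supports are disjoint and $e^{iK(x_0+nL)}\zeta_2=1$. Expanding $\langle\psi,\zeta_2M_K\psi\rangle$, each bump therefore contributes its weight times $\int g_\sigma(y)^2e^{iKy}\,dy$. The weights sum to one, so the expectation equals $\int g_\sigma(y)^2e^{iKy}\,dy$. Using $1-\cos t\le t^2/2$,
\[
1-\operatorname{Re}\langle\psi,S_2\psi\rangle=\int g_\sigma(y)^2\bigl(1-\cos Ky\bigr)\,dy\le\tfrac12K^2\sigma^2 .
\]

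\textbf{Step 3: energy.} We have $\langle\psi,N\psi\rangle\le\tfrac12\bigl(\|P\psi\|^2+\|Q\psi\|^2\bigr)$.
\begin{itemize}
\item Kinetic term: by disjointness of supports it splits bump by bump and equals $\sigma^{-2}\|g'\|_2^2$.
\item Potential term: $\psi$ is supported in $|x|\le|x_0|+L(m\rho_w+1)$, where $\operatorname{supp}w\subset[-\rho_w,\rho_w]$. Hence $\|Q\psi\|^2=O(m^2)$ with constants depending only on $w,g,L,x_0$.
\end{itemize}

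\textbf{Choice of parameters.} Take $\sigma=c_1R^{-1/2}$ and $m=\lfloor c_2R^{1/2}\rfloor$. Choose $c_1$ large and $c_2$ small (both fixed), so that $\sigma^{-2}\|g'\|_2^2\le R$ and $O(m^2)\le R$; this gives energy at most $R$. Then both deficits are $O(R^{-1})$. Take $R_0$ large enough that $\sigma<L/2$ and $m\ge1$, and that the Riemann-sum asymptotics give $\mathcal N\ge\tfrac12m\|w\|_2^2$.

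\textbf{Main obstacle.} The only delicate point is to make the Riemann-sum bounds in Step 1 uniform in $m\ge m_0$, namely the lower bound on $\mathcal N$ and the $O(m)$ bound on the number of nonzero terms. Both follow from smoothness and compact support of $w$, with explicit constants. The phase bookkeeping is the other place where care is needed. It is handled by the factor $\zeta_1^{\,n}$ in the coefficients, which absorbs $\zeta_1$, and by the offset $x_0$ together with $KL\in2\pi\mathbb Z$, which absorbs $\zeta_2$.
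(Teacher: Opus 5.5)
Your proof is correct and is essentially the paper's construction: disjoint bumps of width $\sigma\asymp R^{-1/2}$ on the lattice $x_0+L\mathbb Z$, coefficient phases $\zeta_1^{\,n}$ (the paper's $e^{in\theta}$) that make the $S_1$ overlap real and positive, an offset $x_0$ with $KL\in2\pi\mathbb Z$ that absorbs $\zeta_2$, and an envelope spread over $\asymp\sigma^{-1}$ sites so that the kinetic and position energies are both $O(R)$ while both deficits are $O(R^{-1})$. The only difference is cosmetic: the paper uses the exact sine envelope $d_n=\sqrt{2/(J+1)}\sin(\pi n/(J+1))$, for which the translation overlap is exactly $\cos(\pi/(J+1))$, whereas you use a general smooth envelope with Riemann-sum bounds; in your Step 1 the number of nonzero terms in $\sum_n(w((n+1)/m)-w(n/m))^2$ can be up to twice $\#\{n:w(n/m)\ne0\}$, which changes only a constant.
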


\begin{proof}
Choose a real even $\varphi\in C_c^\infty(\mathbb R)$ with $\|\varphi\|_2=1$ and
$\operatorname{supp}\varphi\subset(-r_0,r_0)$.  Choose $q_0$ so that
$\zeta_2e^{iKq_0}=1$.  For an integer $J\ge2$ define
\[
q_{n,J}=q_0+\left(n-\frac{J+1}{2}\right)L,
\qquad
\varphi_{\sigma,n}(x)
=\sigma^{-1/2}\varphi\left(\frac{x-q_{n,J}}{\sigma}\right),
\]
and
\[
d_n=\sqrt{\frac{2}{J+1}}\sin\frac{\pi n}{J+1}.
\]
For $2r_0\sigma<L$ the supports of the $\varphi_{\sigma,n}$ are disjoint.  Choose a phase
$\theta$ so that the nearest-neighbor contribution of $S_1$ below is positive and set
\[
\psi_{\sigma,J}=\sum_{n=1}^J e^{in\theta}d_n\varphi_{\sigma,n}.
\]
The vector is normalized.  Since $T_L\varphi_{\sigma,n}=\varphi_{\sigma,n+1}$,
\[
\operatorname{Re}\langle\psi_{\sigma,J},S_1\psi_{\sigma,J}\rangle
=\sum_{n=1}^{J-1}d_nd_{n+1}
=\cos\frac{\pi}{J+1}.
\]
Hence
\begin{equation}
1-\operatorname{Re}\langle S_1\rangle_{\psi_{\sigma,J}}
\le\frac{\pi^2}{2(J+1)^2}.
\label{eq:comb-S1}
\end{equation}
Because $KL=4\pi$, $e^{iKq_{n,J}}$ is independent of $n$ and of the parity of $J$; the
factor $4\pi$ removes the parity-dependent global phase of the centered lattice.  The choice of
$q_0$ and evenness
of $|\varphi|^2$ give
\[
\operatorname{Re}\langle S_2\rangle_{\psi_{\sigma,J}}
=\int_{\mathbb R}|\varphi(y)|^2\cos(K\sigma y)\,dy,
\]
so
\begin{equation}
1-\operatorname{Re}\langle S_2\rangle_{\psi_{\sigma,J}}
\le
\frac{K^2\sigma^2}{2}
\int_{\mathbb R}y^2|\varphi(y)|^2\,dy.
\label{eq:comb-S2}
\end{equation}
Disjointness of the supports gives
\[
\langle P^2\rangle_{\psi_{\sigma,J}}
=\sigma^{-2}\|\varphi'\|_2^2,
\]
and, writing $m_2=\int y^2|\varphi(y)|^2dy$,
\[
\langle Q^2\rangle_{\psi_{\sigma,J}}
=\sum_{n=1}^Jd_n^2q_{n,J}^2+\sigma^2m_2
\le\left(|q_0|+\frac{LJ}{2}\right)^2+\sigma^2m_2.
\]
The choice $J\asymp\sigma^{-1}$ balances the translation error $J^{-2}$ with the modulation
error $\sigma^2$, while both position and momentum contributions to the number energy are
$O(\sigma^{-2})$.  Take $J=\lceil\sigma^{-1}\rceil$.  For sufficiently small $\sigma$ there are constants
$a,b,c>0$, independent of $\sigma$, such that
\[
\langle N\rangle_{\psi_{\sigma,J}}\le a\sigma^{-2}+b,
\qquad
1-\operatorname{Re}\langle S_j\rangle_{\psi_{\sigma,J}}\le c\sigma^2.
\]
For $R$ large choose $\sigma^2=2a/R$.  Then $\langle N\rangle\le R$ after increasing the
threshold, while the stabilizer errors are at most $2ac/R$.
\end{proof}

The construction is related to the finite-energy lattice states used in continuous-variable
coding \cite{GottesmanKitaevPreskill2001,MatsuuraYamasakiKoashi2020}.  The compact support and
sine envelope are useful here because the overlap and number-energy estimates are both explicit,
which allows the construction to be inserted directly into the commutator-variance problem.

\subsection{Transfer to a finite-energy Weyl pair}

\begin{lemma}[Finite-dimensional symplectic transfer]
\label{lem:symplectic-transfer}
Let $z_1,z_2\in\mathfrak h$ satisfy $\omega(z_1,z_2)=4\pi$, and put
$\mathfrak k=\spn_{\mathbb C}\{z_1,z_2\}$.  There are $C,R_0<\infty$ such that for every
$R\ge R_0$ there is a unit vector $\Psi_R\in\cF(\mathfrak k)$ with
\begin{equation}
\langle\Psi_R,N_{\mathfrak k}\Psi_R\rangle\le R,
\qquad
1-\operatorname{Re}\langle\Psi_R,-W(z_j)\Psi_R\rangle\le\frac{C}{R},
\quad j=1,2,
\label{eq:finite-mode}
\end{equation}
where $N_{\mathfrak k}=d\Gamma(1_{\mathfrak k})$.
\end{lemma}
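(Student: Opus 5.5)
The plan is to reduce to the one-mode Schr\"odinger picture of Lemma~\ref{lem:compact-comb} by a finite-dimensional metaplectic change of coordinates. First, $\mathfrak k=\spn_{\mathbb C}\{z_1,z_2\}$ has complex dimension $m\in\{1,2\}$. It cannot be that $z_2$ is a real multiple of $z_1$, since $\omega(z_1,z_2)=4\pi\ne0$. If $m=1$, then $z_1,z_2$ span the real plane $\mathfrak k_{\mathbb R}=\mathfrak k$. If $m=2$, the real plane $\spn_{\mathbb R}\{z_1,z_2\}$ sits inside the four-dimensional real symplectic space $(\mathfrak k_{\mathbb R},\omega)$. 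In either case $\cF(\mathfrak k)$ is identified with $L^2(\mathbb R^m)$ by choosing a complex orthonormal basis of $\mathfrak k$. Under this identification the Fock Weyl operators become $W(z)=e^{-iz^T\Omega R}$ in the $2m$ real coordinates, and $N_{\mathfrak k}=\sum_j\tfrac12(Q_j^2+P_j^2-1)$.

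Next I choose a linear symplectic map $T$ of $(\mathfrak k_{\mathbb R},\omega)$ with
\[
Tz_1=L\,e_Q^{(1)},\qquad Tz_2=K\,e_P^{(1)},\qquad KL=4\pi .
\]
Here $e_Q^{(1)},e_P^{(1)}$ are the canonical directions of the first mode, and the constants $L,K$ are fixed by the convention so that $W$ of these vectors acts as a translation and a modulation. Existence follows from linear symplectic Witt extension: a symplectic pair $\{z_1/\sqrt{4\pi},z_2/\sqrt{4\pi}\}$ extends to a symplectic basis of $\mathfrak k_{\mathbb R}$ when $m=2$. Let $U_T$ be a metaplectic (Gaussian) unitary on $L^2(\mathbb R^m)$ implementing $T$, so that $U_TW(z)U_T^*=W(Tz)$. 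Then $U_T(-W(z_1))U_T^*=\zeta_1T_L\otimes1$ and $U_T(-W(z_2))U_T^*=\zeta_2M_K\otimes1$ for suitable constant phases $\zeta_j\in\mathbb T$. These phases are exactly what Lemma~\ref{lem:compact-comb} allows.

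Given $R$, I apply Lemma~\ref{lem:compact-comb} with budget $R'=R/c_1-c_2$, where the constants are chosen below, to get $\psi_{R'}$. I then set
\[
\Psi_R=U_T^*\bigl(\psi_{R'}\otimes\Omega_{m-1}\bigr),
\]
where $\Omega_{m-1}$ is the vacuum of the remaining mode (absent if $m=1$). The expectations of $-W(z_j)$ in $\Psi_R$ equal those of $\zeta_1T_L$ and $\zeta_2M_K$ in $\psi_{R'}$. The second estimate in \eqref{eq:finite-mode} therefore holds with constant $C c_1$ up to threshold adjustments.

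For the energy, $U_TN_{\mathfrak k}U_T^*$ is the quadratic operator $\tfrac12R^T(T^{-1})^{T}T^{-1}R-\tfrac m2$. As quadratic forms this is bounded by
\[
\|T^{-1}\|^2\bigl(N+\tfrac m2\bigr)+\tfrac m2 ,
\]
which gives $\langle N_{\mathfrak k}\rangle_{\Psi_R}\le c_1\langle N\rangle_{\psi_{R'}\otimes\Omega}+c_2\le R$.

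The main obstacle is making this form comparison and the metaplectic intertwining rigorous on form domains. I would handle it by noting that $\psi_{R'}$ is a $C_c^\infty$ function tensored with a Gaussian, so it lies in Schwartz space. Schwartz space is invariant under metaplectic operators, and on it the quadratic-form inequality and the Weyl covariance can be checked directly. A secondary bookkeeping point is to track the Weyl cocycle phases under the identification: they only contribute the constants $\zeta_j$, which the compact comb absorbs through its choice of $q_0$ and $\theta$.
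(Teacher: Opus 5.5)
Your proposal is correct and follows essentially the same route as the paper. Both arguments use a symplectic automorphism of the real space underlying $\mathfrak k$ that matches $z_1,z_2$ with a translation/modulation pair on one canonical mode, implement it metaplectically, apply the compact comb tensored with the vacuum, and convert number budgets via the quadratic-form bound $\frac12 R^TS^TSR\le\|S\|^2\,\frac12 R^TR$.

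The differences are cosmetic. You map the $z_j$ to the canonical pair rather than the reverse. You justify the form comparison on Schwartz vectors rather than through the factorization $\|S_T\|^2I-S_T^TS_T=C^TC$. Your budget choice $R'=R/c_1-c_2$ works because $c_1=\|T^{-1}\|^2\ge1$ for a symplectic $T$.
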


\begin{proof}
The real symplectic space underlying $\mathfrak k$ is finite dimensional.  Choose a canonical
symplectic pair $z_1^{(0)},z_2^{(0)}$ with pairing $4\pi$, supported on one canonical mode.
In its Schr\"odinger realization the corresponding Weyl operators are, up to scalar phases, a
translation and a modulation with $KL=4\pi$.  Lemma~\ref{lem:compact-comb}, tensored with the
vacuum on any remaining modes, therefore supplies canonical vectors with number budget $R'$
and stabilizer error $O((R')^{-1})$.

The symplectic isomorphism sending $z_j^{(0)}$ to $z_j$ extends, by finite-dimensional
symplectic Gram--Schmidt, to a real symplectic automorphism $T$ of $\mathfrak k$.  It is
unitarily implementable on $\cF(\mathfrak k)$; write
$\mathcal U_TW(z)\mathcal U_T^*=W(Tz)$ \cite{Shale1962}.
Let $\mathbf R=(Q_1,P_1,\ldots,Q_d,P_d)^T$ be canonical quadratures, so that
\[
N_{\mathfrak k}+\frac d2=\frac12\mathbf R^T\mathbf R.
\]
The metaplectic action is linear on these quadratures.  Thus there is a real invertible matrix
$S_T$, determined by the action of $T^{-1}$ in the chosen canonical coordinates, such that
$\mathcal U_T^*\mathbf R\mathcal U_T=S_T\mathbf R$.  As quadratic forms,
\[
\mathcal U_T^*\left(N_{\mathfrak k}+\frac d2\right)\mathcal U_T
=
\frac12(S_T\mathbf R)^T(S_T\mathbf R)
\le
\|S_T\|^2\left(N_{\mathfrak k}+\frac d2\right).
\]
The last inequality follows by factoring
$\|S_T\|^2I-S_T^TS_T=C^TC$ and observing that
$\mathbf R^TC^TC\mathbf R=\sum_k(\sum_j C_{kj}R_j)^2\ge0$ as a quadratic form.
Hence a canonical vector with number expectation $R'$ is sent to one with number expectation
at most $C_T(R'+1)$.  Choosing $R'$ proportional to $R$ gives
\eqref{eq:finite-mode}.
\end{proof}

\begin{proposition}[Physical-energy stabilizers]
\label{prop:physical-stabilizers}
Let $H_1\ge0$ be self-adjoint, let $G=d\Gamma(H_1)$, and let
$z_1,z_2\in\dom H_1^{1/2}$ satisfy $\omega(z_1,z_2)=4\pi$.  There are $C,E_0<\infty$ such
that for every $E\ge E_0$ one can find a unit vector $\Psi_E\in\dom G^{1/2}$ with
\begin{equation}
\langle\Psi_E,G\Psi_E\rangle\le E,
\qquad
1-\operatorname{Re}\langle\Psi_E,-W(z_j)\Psi_E\rangle\le\frac{C}{E},
\quad j=1,2.
\label{eq:physical-stabilizers}
\end{equation}
\end{proposition}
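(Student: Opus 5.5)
The plan is to push the finite-mode construction of Lemma~\ref{lem:symplectic-transfer} into the full Fock space $\cF(\mathfrak h)$. Padding with the vacuum on the orthogonal complement leaves the stabilizer expectations unchanged. The number energy on $\mathfrak k$ then has to be converted into physical energy through the one-particle form of $H_1$ restricted to $\mathfrak k$.

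Concretely, put $\mathfrak k=\spn_{\mathbb C}\{z_1,z_2\}$. Since $\dom H_1^{1/2}$ is a complex subspace containing $z_1,z_2$, we have $\mathfrak k\subset\dom H_1^{1/2}$. Because $\mathfrak k$ is finite dimensional, $H_1^{1/2}P_{\mathfrak k}$ is bounded. Set $c:=\|H_1^{1/2}P_{\mathfrak k}\|^2<\infty$, so that $\|H_1^{1/2}f\|^2\le c\|f\|^2$ for $f\in\mathfrak k$. Use the exponential law $\cF(\mathfrak h)\cong\cF(\mathfrak k)\otimes\cF(\mathfrak k^\perp)$, under which $W(z)=W_{\mathfrak k}(z)\otimes1$ for $z\in\mathfrak k$. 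Define $\Psi:=\Psi_R\otimes\Omega_{\mathfrak k^\perp}$, with $\Psi_R$ from Lemma~\ref{lem:symplectic-transfer}. This is a unit vector with
\[
\langle\Psi,-W(z_j)\Psi\rangle=\langle\Psi_R,-W(z_j)\Psi_R\rangle,
\]
so the stabilizer errors are still $\le C/R$.

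The energy comparison goes componentwise in particle number. $\Psi$ lies in $\bigoplus_n\mathfrak k^{\otimes_s n}$. On the $n$-particle component $\Psi^{(n)}\in\mathfrak k^{\otimes_s n}$, a finite-dimensional subspace of the form domain of $d\Gamma(H_1)\restriction_n=\sum_{i}1\otimes\cdots\otimes H_1\otimes\cdots\otimes1$, the form is computed factor by factor. Each factor contributes at most $c$ times the corresponding term of $d\Gamma(1_{\mathfrak k})$. This gives
\[
\|G^{1/2}\Psi^{(n)}\|^2\le c\,n\|\Psi^{(n)}\|^2.
\]
Summing over $n$ by monotone convergence of the quadratic forms on the direct sum shows $\Psi\in\dom G^{1/2}$ and
\[
\langle\Psi,G\Psi\rangle\le c\langle\Psi_R,N_{\mathfrak k}\Psi_R\rangle\le cR.
\]
It remains to choose the budget. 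If $c=0$, take $R$ arbitrary and large, so $G$-energy is $0\le E$, and the errors can be made $\le C/E$ by taking $R\ge E$. Otherwise take $R=E/c$ for $E\ge E_0:=cR_0$. Then $\langle G\rangle\le E$ and the errors are $\le Cc/E$, which is \eqref{eq:physical-stabilizers}.

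I expect the main obstacle to be the rigorous form-domain step. $\Psi_R$ need not have finitely many particles, and $\mathfrak k$ need not lie in $\dom H_1$. I would handle this exactly as above: work with the closed quadratic form of $d\Gamma(H_1)$ as the direct sum of the $n$-particle forms. On each $n$-particle sector the needed bound is a finite-dimensional statement about $\mathfrak k^{\otimes n}\subset(\dom H_1^{1/2})^{\otimes n}$. The form bound then passes to the infinite direct sum because $\sum_n n\|\Psi^{(n)}\|^2=\langle N_{\mathfrak k}\rangle<\infty$.
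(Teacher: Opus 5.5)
Your proposal is correct and follows the paper's own proof. You bound the form of $G$ by $c_{\mathfrak k}N_{\mathfrak k}$ sector by sector on $\cF(\mathfrak k)$, then apply Lemma~\ref{lem:symplectic-transfer} with number budget $E/c_{\mathfrak k}$ and treat $c_{\mathfrak k}=0$ separately. The paper leaves two points implicit: the embedding via the exponential law with the vacuum on $\mathfrak k^\perp$, and the passage to infinite particle number ("by closure"). Your direct-sum argument just makes both explicit.
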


\begin{proof}
Let $\mathfrak k=\spn_{\mathbb C}\{z_1,z_2\}$.  Since $\mathfrak k$ is finite dimensional and contained in the
form domain of $H_1$,
\[
c_{\mathfrak k}:=\sup_{\substack{\xi\in\mathfrak k\\\|\xi\|=1}}\|H_1^{1/2}\xi\|^2<\infty.
\]
On the finite-particle core over $\mathfrak k$, the quadratic form of $G$ satisfies
\begin{equation}
q_G[\Psi]\le c_{\mathfrak k}\langle\Psi,N_{\mathfrak k}\Psi\rangle.
\label{eq:G-form-number}
\end{equation}
Indeed, on each $n$-particle sector the form of $d\Gamma(H_1)$ is the sum of the one-particle
forms, each bounded by $c_{\mathfrak k}$ on $\mathfrak k$.  By closure,
\eqref{eq:G-form-number} extends to
$\dom N_{\mathfrak k}^{1/2}\subset\cF(\mathfrak k)$.  If $c_{\mathfrak k}=0$,
Lemma~\ref{lem:symplectic-transfer} already gives zero physical energy.  If
$c_{\mathfrak k}>0$, apply that lemma with number budget $E/c_{\mathfrak k}$ and absorb the
fixed constants.
\end{proof}

\subsection{The fixed-witness upper bound}

\begin{theorem}[Inverse-energy upper bound]
\label{thm:fixed-upper}
Let $H_1\ge0$, $G=d\Gamma(H_1)$, and let $u,v\in\dom H_1^{1/2}$ satisfy
$\omega(u,v)=\pi$.  For the trigonometric witness $h$ in \eqref{eq:trig-witness}, there are
$C,E_0<\infty$ such that
\begin{equation}
0\le4-\gamma_{G,E}(h)\le\frac{C}{E},
\qquad E\ge E_0.
\label{eq:fixed-upper}
\end{equation}
\end{theorem}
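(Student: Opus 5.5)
The lower inequality $0\le 4-\gamma_{G,E}(h)$ is immediate from \eqref{eq:gamma-bound}, because $\|h\|\le 2\|A\|\|B\|\le 2$. All the work is in the upper bound. The plan is to build, for each large $E$, an admissible state in which $\Var(h)$ is at least $(2-C/E)^2$.

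The first step uses Proposition~\ref{prop:physical-stabilizers} with $z_1=2u$, $z_2=2v$. These satisfy $\omega(z_1,z_2)=4\pi$ and lie in $\dom H_1^{1/2}$. For $E$ large this gives a unit vector $\Psi$ with $\langle G\rangle_\Psi\le E/2$ and $1-\operatorname{Re}\langle S_j\rangle_\Psi\le C/E$ for $j=1,2$.

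The product directions must also be controlled. For commuting unitaries,
\[
1-\operatorname{Re}S_1S_2=\tfrac12|1-S_1|^2\!\!\!\!\phantom{x}\ \text{-type bounds};
\]
concretely, $\|(1-S_1S_2)\Psi\|\le\|(1-S_1)\Psi\|+\|(1-S_2)\Psi\|$ and $\|(1-S)\Psi\|^2=2(1-\operatorname{Re}\langle S\rangle)$. Hence each product term is also $O(1/E)$, and the same holds for $S_1S_2^*$. Proposition~\ref{prop:squared-stabilizer} then gives $\langle 4-h^2\rangle_\Psi\le C'/E$.

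The second step handles the mean, which is the main obstacle. Since $\Var(h)=\langle h^2\rangle-\langle h\rangle^2$, we need $\langle h\rangle$ to vanish. I would symmetrize using the unitary $V=W(v)$, which satisfies $VhV^*=-h$ (Lemma~\ref{lem:h-spectral-diameter}). Consider the mixture
\[
\rho=\tfrac12\bigl(|\Psi\rangle\langle\Psi|+|V^*\Psi\rangle\langle V^*\Psi|\bigr).
\]
In $\rho$ we have $\langle h\rangle_\rho=0$. Also $h^2$ commutes with $V$, because $V$ commutes with $S_1$ and $S_2$ (their pairings are multiples of $2\pi$). So $\langle h^2\rangle_\rho=\langle h^2\rangle_\Psi\ge 4-C'/E$.

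The energy of $\rho$ is controlled by Lemma~\ref{lem:weyl-energy}:
\[
\|G^{1/2}V^*\Psi\|\le\|G^{1/2}\Psi\|+c_v,
\qquad\text{so}\qquad
\langle G\rangle_{V^*\Psi}\le(\sqrt{E/2}+c_v)^2\le E/2+c\sqrt E+c'.
\]
This gives $\Tr(G\rho)\le E/2+O(\sqrt E)\le E$ for $E\ge E_0$. The symmetrization therefore costs energy of order $\sqrt{E}$ at worst, which is absorbed by the factor-$2$ slack left in the first step.

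Conclusion: $\Var_\rho(h)\ge 4-C'/E$, so
\[
\gamma_{G,E}(h)\ge 2\sqrt{4-C'/E}\ge 4-C''/E.
\]
The only point needing care is the energy budget for the mixture. If a cleaner constant is wanted, one can instead use the graph bound directly, giving the constant-shift estimate $\langle G\rangle_{V^*\Psi}\le 2\langle G\rangle_\Psi+c_v^2$, which is also absorbable.
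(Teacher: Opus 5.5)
Your proof is correct and follows the paper's route. You take a seed vector from Proposition~\ref{prop:physical-stabilizers} for $2u,2v$ with an energy reserve, mix it equally with its $V$-conjugate to remove $\langle h\rangle$ while keeping $\langle h^2\rangle$, and use Lemma~\ref{lem:weyl-energy} to absorb the $O(\sqrt E)$ cost of the symmetrization. The only local difference is how you bound $\langle 4-h^2\rangle_\Psi$: you go through Proposition~\ref{prop:squared-stabilizer} plus a triangle inequality for the product directions, whereas the paper uses the operator inequality $1-A^2B^2\le(1-A^2)+(1-B^2)$ for the commuting positive contractions $A^2,B^2$, which avoids the product terms altogether. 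Separately, your optional closing bound should read $2\langle G\rangle_\Psi+2c_v^2$, which is still absorbable.
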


\begin{proof}
Set
\[
S_1=-U^2=-W(2u),
\qquad
S_2=-V^2=-W(2v).
\]
These unitaries commute because $\omega(2u,2v)=4\pi$.  Moreover
\begin{equation}
A^2=\frac{1+\operatorname{Re}S_1}{2},
\qquad
B^2=\frac{1+\operatorname{Re}S_2}{2}.
\label{eq:A2B2}
\end{equation}
Since $A^2$ and $B^2$ commute and are positive contractions,
\[
1-A^2B^2\le(1-A^2)+(1-B^2).
\]
Thus a unit vector $\Psi$ satisfying
$1-\operatorname{Re}\langle S_j\rangle_\Psi\le\eta_j$ obeys
\begin{equation}
\langle h^2\rangle_\Psi
\ge4\left(1-\frac{\eta_1+\eta_2}{2}\right)
\label{eq:h2-lower}
\end{equation}
by \eqref{eq:h-square} and \eqref{eq:A2B2}.

Conjugation by $V$ sends $h$ to $-h$ and fixes $h^2$.  Given a seed vector $\Psi$, set
\[
\rho_\Psi
=\frac12|\Psi\rangle\langle\Psi|
 +\frac12V|\Psi\rangle\langle\Psi|V^*.
\]
Then
\[
\Tr(\rho_\Psi h)=0,
\qquad
\Tr(\rho_\Psi h^2)=\langle\Psi,h^2\Psi\rangle.
\]
If the seed energy is at most $R$, Lemma~\ref{lem:weyl-energy} gives
\begin{equation}
\Tr(G\rho_\Psi)
\le R+c_v\sqrt R+\frac{c_v^2}{2},
\qquad
c_v=\frac{1}{\sqrt2}\|H_1^{1/2}v\|.
\label{eq:symmetrized-energy}
\end{equation}
Take $R=E/4$.  For $E$ sufficiently large, the right-hand side of
\eqref{eq:symmetrized-energy} is at most $E$.  Proposition~\ref{prop:physical-stabilizers},
applied to $2u,2v$ at energy $R$, gives $\eta_1+\eta_2\le C_1/E$.  Therefore
\[
\Var_{\rho_\Psi}(h)
\ge4\left(1-\frac{C_2}{E}\right),
\]
and hence
\[
\gamma_{G,E}(h)\ge4\sqrt{1-\frac{C_2}{E}}.
\]
The inequality $1-\sqrt{1-x}\le x$ for $0\le x\le1$ gives \eqref{eq:fixed-upper}.
\end{proof}

\section{Sharp one-mode quadratic asymptotics}
\label{sec:quadratic-bridge}

The general upper bound does not determine the leading coefficient.  In one canonical mode the
quadratic-energy problem can be solved to the next order.  Let
\[
R=(Q,P)^T,\qquad [Q,P]=i,\qquad
\Omega=\begin{pmatrix}0&1\\-1&0\end{pmatrix},
\]
and use the Weyl convention
\begin{equation}
W(z)=e^{-iz^T\Omega R},
\qquad
W(z)^*RW(z)=R+z.
\label{eq:canonical-weyl}
\end{equation}
For a real positive definite $2\times2$ matrix $M$, put
\begin{equation}
G_M=\frac12R^TMR-\frac12\nu,
\qquad
\nu:=\sqrt{\det M}.
\label{eq:quadratic-energy}
\end{equation}
The subtraction is the ground energy, so $G_M\ge0$.

We use the circular-variance uncertainty relation of Breitenberger
\cite{Breitenberger1985}, in the translated-unitary form obtained from Robertson's inequality
\cite{Robertson1929}.  The mixed-state proof is included because the domain point is used in
both the quadratic and second-quantized arguments.

\begin{lemma}[Breitenberger--Robertson estimate for a translated unitary]
\label{lem:translation-variance}
Let $X=X^*$ and let $S$ be unitary with $S\dom X=\dom X$ and
\begin{equation}
S^*XS=X+\ell1,
\qquad \ell\ne0.
\label{eq:shift}
\end{equation}
If $\rho$ is a normal state with $\Tr(\rho X^2)<\infty$ and
$z=\Tr(\rho S)$, then
\begin{equation}
|\ell|\,|z|
\le2\sqrt{\Var_\rho(X)}\sqrt{1-|z|^2},
\label{eq:translation-variance}
\end{equation}
and consequently
\begin{equation}
1-\operatorname{Re}z
\ge
\frac{\ell^2}{2\{\ell^2+4\Var_\rho(X)\}}.
\label{eq:translation-gap}
\end{equation}
\end{lemma}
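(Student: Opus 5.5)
The plan is to derive \eqref{eq:translation-variance} from two Robertson inequalities for $X$ and the real and imaginary parts of $S$, and then obtain \eqref{eq:translation-gap} by elementary algebra.

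\textbf{Commutator relations.} From \eqref{eq:shift}, $XS=SX+\ell S$ on $\dom X$, i.e.\ $[X,S]=\ell S$. Taking adjoints, and using $S^*\dom X=\dom X$, gives $[X,S^*]=-\ell S^*$. Write
\[
S=C+iD,\qquad C=\operatorname{Re}S,\qquad D=\operatorname{Im}S.
\]
Then on $\dom X$
\[
[X,C]=i\ell D,\qquad [X,D]=-i\ell C.
\]
Since $S$ is normal, $C$ and $D$ commute and $C^2+D^2=\tfrac12(S^*S+SS^*)=1$.

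\textbf{Mixed-state Robertson inequality.} Diagonalize $\rho=\sum_k p_k|\psi_k\rangle\langle\psi_k|$. The hypothesis $\Tr(\rho X^2)<\infty$ forces $\psi_k\in\dom X$ whenever $p_k>0$, and $C,D$ preserve $\dom X$ because $S,S^*$ do. Put $a=\Tr(\rho X)$, $c=\Tr(\rho C)$, $d=\Tr(\rho D)$. The form $(\xi,\eta)\mapsto\sum_k p_k\langle\xi_k,\eta_k\rangle$ on vector families is positive semidefinite, so the Cauchy--Schwarz inequality gives
\[
\Bigl|\operatorname{Im}\sum_k p_k\langle(X-a)\psi_k,(C-c)\psi_k\rangle\Bigr|^2\le\Var_\rho(X)\Var_\rho(C).
\]
The left-hand imaginary part equals $\tfrac1{2}\Tr(\rho\,(-i)[X,C])=\tfrac{\ell}{2}\,d$. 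Hence
\[
\Var_\rho(X)\Var_\rho(C)\ge\tfrac{\ell^2}{4}d^2,
\qquad
\Var_\rho(X)\Var_\rho(D)\ge\tfrac{\ell^2}{4}c^2,
\]
the second by the same argument applied to $D$. The only care needed here is that every expression is evaluated on vectors in $\dom X$; the domain invariance of $S$ is exactly what makes this legitimate.

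\textbf{Conclusion.} Adding the two inequalities and using
\[
\Var_\rho(C)+\Var_\rho(D)=\Tr\rho(C^2+D^2)-c^2-d^2=1-|z|^2
\]
yields $\Var_\rho(X)(1-|z|^2)\ge\tfrac{\ell^2}{4}|z|^2$. This is \eqref{eq:translation-variance}. Writing $r=|z|$ and $V=\Var_\rho(X)$, it gives $r^2\le 4V/(\ell^2+4V)$, i.e.\ $1-r^2\ge\ell^2/(\ell^2+4V)$. Finally,
\[
1-\operatorname{Re}z\ge1-r=\frac{1-r^2}{1+r}\ge\frac{1-r^2}{2},
\]
which gives \eqref{eq:translation-gap}. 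The main obstacle is only the domain bookkeeping in the second step; the algebra itself is routine.
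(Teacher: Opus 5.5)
Your proof is correct and follows the paper's argument essentially step for step. It uses the same decomposition $S=C+iD$ with $[X,C]=i\ell D$ and $[X,D]=-i\ell C$, adds two Robertson inequalities against $\Var_\rho(C)+\Var_\rho(D)=1-|z|^2$, and finishes with the same chain $1-\operatorname{Re}z\ge1-|z|\ge(1-|z|^2)/2$. The only cosmetic difference is that you apply Cauchy--Schwarz to the weighted form $\sum_k p_k\langle\xi_k,\eta_k\rangle$ on the eigenvectors of $\rho$, whereas the paper applies the pure-state Robertson inequality to $X\otimes1$ in the purification $\sum_n\sqrt{p_n}\,\psi_n\otimes e_n$; this is the same inner product written differently.
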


\begin{proof}
Write
\[
C:=\frac{S+S^*}{2},
\qquad
D:=\frac{S-S^*}{2i}.
\]
Since $S\dom X=\dom X$, both $S$ and $S^*$ preserve $\dom X$, and hence so do $C$ and
$D$.  They are commuting self-adjoint contractions,
$C^2+D^2=1$, and
\[
z=\Tr(\rho C)+i\Tr(\rho D).
\]
The covariance relation \eqref{eq:shift} gives, as identities on $\dom X$,
\[
[X,C]=i\ell D,
\qquad
[X,D]=-i\ell C.
\]
To justify the mixed-state Robertson inequalities with the unbounded operator $X$, take a
spectral decomposition
$\rho=\sum_n p_n|\psi_n\rangle\langle\psi_n|$.  Since
$\Tr(\rho X^2)<\infty$, every $\psi_n$ with $p_n>0$ lies in $\dom X$ and
$\sum_n p_n\|X\psi_n\|^2<\infty$.  Thus the purification
\[
\Omega:=\sum_n\sqrt{p_n}\,\psi_n\otimes e_n
\]
belongs to $\dom(X\otimes1)$.  Applying Robertson's inequality in this purification to
$X\otimes1$ and, respectively, $C\otimes1$ and $D\otimes1$ yields
\[
\Var_\rho(X)\Var_\rho(C)
\ge\frac{\ell^2}{4}\bigl(\Tr(\rho D)\bigr)^2,
\qquad
\Var_\rho(X)\Var_\rho(D)
\ge\frac{\ell^2}{4}\bigl(\Tr(\rho C)\bigr)^2.
\]
Adding the two inequalities and using
\[
\Var_\rho(C)+\Var_\rho(D)=1-|z|^2
\]
gives
\[
\Var_\rho(X)(1-|z|^2)\ge\frac{\ell^2}{4}|z|^2,
\]
which is \eqref{eq:translation-variance}.  Equivalently,
\[
|z|^2\le\frac{4V}{\ell^2+4V},
\qquad V=\Var_\rho(X).
\]
Finally,
$1-\operatorname{Re}z\ge1-|z|\ge(1-|z|^2)/2$, which gives
\eqref{eq:translation-gap}.
\end{proof}

For a normal state with finite second moments, write
\[
m_\rho=\Tr(\rho R),\qquad
(V_\rho)_{jk}=\frac12\Tr\rho\{R_j-(m_\rho)_j,R_k-(m_\rho)_k\}.
\]
The uncertainty relation gives $V_\rho+i\Omega/2\ge0$, hence $V_\rho>0$.

\begin{lemma}[Covariance form]
\label{lem:covariance-form}
Let $\rho$ have finite second moments and covariance $V$.  Then, for every
$z\in\mathbb R^2$,
\begin{equation}
1-|\Tr(\rho W(z))|
\ge g\!\left(z^TV^{-1}z\right),
\qquad
g(t):=\frac{t}{2(t+4)}.
\label{eq:covariance-form}
\end{equation}
\end{lemma}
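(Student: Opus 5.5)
Apply Lemma~\ref{lem:translation-variance} to $S=W(z)$ with $X$ the linear quadrature conjugate to the translation, choosing $X$ so that the variance of $X$ in $\rho$ is as small as possible given the required shift. The result is a bound of the form $\ell^2/\Var_\rho(X)$, optimized over $X$, which is exactly $z^TV^{-1}z$.

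First I would fix the setup. For $w\in\mathbb R^2$ put $X_w:=w^TR$. By \eqref{eq:canonical-weyl},
\[
W(z)^*X_wW(z)=X_w+(w^Tz)\,1 .
\]
So \eqref{eq:shift} holds with $\ell=w^Tz$, and $W(z)$ preserves $\dom X_w$, since the translation by a scalar maps the domain of a linear quadrature onto itself. Finite second moments give $\Tr(\rho X_w^2)<\infty$. Also $\Var_\rho(X_w)=w^TVw$, because the symmetrized covariance agrees with the ordinary variance for a single self-adjoint operator.

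The case $z=0$ is trivial, since then both sides vanish ($g(0)=0$). For $z\ne0$, Lemma~\ref{lem:translation-variance} gives, for each $w$ with $w^Tz\ne0$, a bound from \eqref{eq:translation-variance} rather than \eqref{eq:translation-gap}, because the target involves $|\Tr(\rho W(z))|$ rather than its real part. Writing $\lambda=\ell^2/(4\Var_\rho(X_w))$, that inequality rearranges to
\[
|\Tr\rho W(z)|^2\le\frac{1}{1+\lambda}.
\]
Then
\[
1-|z_\rho|\ge\tfrac12\bigl(1-|z_\rho|^2\bigr)\ge\frac{\lambda}{2(1+\lambda)}=\frac{t}{2(t+4)},
\qquad t=\frac{(w^Tz)^2}{w^TVw}.
\]
The function $g$ is increasing, so it remains to maximize $t$ over $w$. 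Since $V>0$, the Cauchy--Schwarz inequality in the $V$-inner product gives
\[
\sup_w \frac{(w^Tz)^2}{w^TVw}=z^TV^{-1}z,
\]
attained at $w=V^{-1}z$, and this $w$ satisfies $w^Tz>0$. This gives \eqref{eq:covariance-form}.

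The only delicate point is domain bookkeeping. One must check that the hypotheses of Lemma~\ref{lem:translation-variance} hold for the unbounded $X_w$ (invariance of the domain, and $\Tr(\rho X_w^2)<\infty$). One must also check that the mixed-state variance equals $w^TVw$ when the mean is nonzero. Both follow from linearity of $X_w$ in $R$ and the finiteness of the second moments. No other step looks like an obstacle.
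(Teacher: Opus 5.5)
Your proof is correct and follows the paper's argument: you apply Lemma~\ref{lem:translation-variance} to the quadratures $X_w=w^TR$, with shift $\ell=w^Tz$ and variance $w^TVw$, pass to a bound on $1-|\zeta|$ for $\zeta=\Tr(\rho W(z))$ via $1-|\zeta|\ge(1-|\zeta|^2)/2$, and then maximize the Rayleigh quotient $(w^Tz)^2/(w^TVw)$ to obtain $z^TV^{-1}z$. Two details that the paper's short proof leaves implicit are made explicit in yours: the modulus bound must come from \eqref{eq:translation-variance} rather than \eqref{eq:translation-gap}, and the choice $w=V^{-1}z$ guarantees $\ell\neq0$ when $z\neq0$.
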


\begin{proof}
For $a\in\mathbb R^2$ set $X_a=a^TR$.  Equation \eqref{eq:canonical-weyl} gives
$W(z)^*X_aW(z)=X_a+a^Tz$, while
$\Var_\rho(X_a)=a^TVa$.  Lemma~\ref{lem:translation-variance} therefore yields
\[
1-|\Tr(\rho W(z))|
\ge
\frac{(a^Tz)^2}{2\{(a^Tz)^2+4a^TVa\}}.
\]
The right-hand side is increasing in the Rayleigh quotient
$(a^Tz)^2/(a^TVa)$, whose supremum over $a\ne0$ is $z^TV^{-1}z$.
\end{proof}

Let $u,v\in\mathbb R^2$ satisfy $u^T\Omega v=\pi$, and define the fixed witness by
\eqref{eq:trig-witness}.  Put $z_1=2u$, $z_2=2v$ and
\begin{equation}
F_{u,v}:=z_1z_1^T+z_2z_2^T,
\qquad
\tau_M(u,v):=
\Tr\sqrt{M^{1/2}F_{u,v}M^{1/2}}.
\label{eq:tau-bridge}
\end{equation}
Finally, set
\begin{equation}
q_M(E):=
\inf_{\rho\in\mathfrak S_{G_M,E}}
\Tr\rho(4-h^2).
\label{eq:qM}
\end{equation}

The matrix quantity in the final coefficient is fixed by a simple balance that appears in both
the converse and the construction.

\begin{lemma}[Trace balance]
\label{lem:trace-balance}
Let $H>0$ be a real $2\times2$ matrix and $c>0$.  Then
\begin{equation}
\inf_{\substack{\Sigma>0\\ \Tr(H\Sigma^{-1})\le c}}
\Tr\Sigma
=
\frac{(\Tr\sqrt H)^2}{c}.
\label{eq:trace-balance}
\end{equation}
Equality holds for
$\Sigma=(\Tr\sqrt H/c)\,H^{1/2}$.
\end{lemma}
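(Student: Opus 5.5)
We reduce to the case $H=I$ by a congruence, then apply an arithmetic--harmonic mean inequality on eigenvalues. Set $\Lambda:=H^{-1/4}\Sigma H^{-1/4}>0$, which is a bijection of positive definite matrices. Then
\[
\Tr(H\Sigma^{-1})=\Tr(H^{1/2}\Lambda^{-1}H^{1/2})\cdot\text{(after cycling)}=\Tr(H^{1/2}\Lambda^{-1}),
\qquad
\Tr\Sigma=\Tr(H^{1/2}\Lambda).
\]
These follow from $\Sigma=H^{1/4}\Lambda H^{1/4}$ and $\Sigma^{-1}=H^{-1/4}\Lambda^{-1}H^{-1/4}$ together with cyclicity of the trace. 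So, with $K:=H^{1/2}>0$, the problem becomes: minimize $\Tr(K\Lambda)$ subject to $\Tr(K\Lambda^{-1})\le c$.

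The key inequality is the Cauchy--Schwarz bound
\[
(\Tr K)^2=\bigl(\Tr(K^{1/2}\Lambda^{1/2}\cdot\Lambda^{-1/2}K^{1/2})\bigr)^2\le\Tr(K\Lambda)\,\Tr(K\Lambda^{-1}).
\]
This is Cauchy--Schwarz for the Hilbert--Schmidt inner product $\langle X,Y\rangle=\Tr(X^*Y)$, applied to $X=\Lambda^{1/2}K^{1/2}$ and $Y=\Lambda^{-1/2}K^{1/2}$. One checks that $\Tr(X^*X)=\Tr(K^{1/2}\Lambda K^{1/2})=\Tr(K\Lambda)$, that $\Tr(Y^*Y)=\Tr(K\Lambda^{-1})$, and that $\Tr(X^*Y)=\Tr(K^{1/2}\Lambda^{1/2}\Lambda^{-1/2}K^{1/2})=\Tr K$. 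Hence every admissible $\Sigma$ satisfies $\Tr\Sigma\ge(\Tr K)^2/c=(\Tr\sqrt H)^2/c$. In fact the reduction is unnecessary: the same bound can be applied directly with $X=\Sigma^{1/2}$ and $Y=\Sigma^{-1/2}H^{1/2}$. Then $\Tr(X^*Y)=\Tr\sqrt H$ and
\[
(\Tr\sqrt H)^2\le\Tr\Sigma\cdot\Tr(H^{1/2}\Sigma^{-1}H^{1/2})=\Tr\Sigma\cdot\Tr(H\Sigma^{-1}).
\]
I will present this direct one-line version as the proof.

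For attainment, take $\Sigma=t H^{1/2}$ with $t=\Tr\sqrt H/c$. Then $\Tr(H\Sigma^{-1})=t^{-1}\Tr H^{1/2}=c$, so the constraint holds with equality. Moreover $\Tr\Sigma=t\Tr\sqrt H=(\Tr\sqrt H)^2/c$. Thus the infimum equals the stated value and is attained at this $\Sigma$.

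The only delicate point is choosing the Cauchy--Schwarz factorization correctly, that is, matching $Y$ so that $Y^*Y$ reproduces $\Tr(H\Sigma^{-1})$. Note that the restriction to $2\times2$ matrices plays no role.
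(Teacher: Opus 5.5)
Your direct argument is correct and is the paper's proof: Cauchy--Schwarz for the Hilbert--Schmidt pairing with $X=\Sigma^{1/2}$, $Y=\Sigma^{-1/2}H^{1/2}$ is the paper's ``trace-norm Cauchy--Schwarz'' step $\Tr\Sigma\,\Tr(H\Sigma^{-1})\ge(\Tr\sqrt H)^2$, followed by the same explicit check at $\Sigma=(\Tr\sqrt H/c)H^{1/2}$. In the congruence detour you discard, the intermediate expression is garbled: $\Tr(H^{1/2}\Lambda^{-1}H^{1/2})$ equals $\Tr(H\Lambda^{-1})$, not $\Tr(H^{1/2}\Lambda^{-1})$. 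The final identity $\Tr(H\Sigma^{-1})=\Tr(H^{1/2}\Lambda^{-1})$ is nevertheless correct, so simply drop that line.
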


\begin{proof}
The trace-norm Cauchy--Schwarz inequality gives
\[
\Tr\Sigma\,\Tr(H\Sigma^{-1})\ge(\Tr\sqrt H)^2.
\]
The stated lower bound follows from the constraint.  For
$\Sigma=(\Tr\sqrt H/c)H^{1/2}$ one has
$\Tr(H\Sigma^{-1})=c$, and equality holds.
\end{proof}

\begin{lemma}[Zak quasimodes for the bridge frame]
\label{lem:zak-bridge}
With $\tau=\tau_M(u,v)$, there are normalized vectors $\psi_\varepsilon$ in the quadratic-form
domain of $G_M$ such that, as $\varepsilon\downarrow0$,
\begin{align}
\langle\psi_\varepsilon,(4-h^2)\psi_\varepsilon\rangle
&=\frac{\tau}{2}\varepsilon+O(\varepsilon^2),
\label{eq:zak-bridge-deficit}\\
\langle\psi_\varepsilon,G_M\psi_\varepsilon\rangle
&=\frac{\tau}{4\varepsilon}+O(1).
\label{eq:zak-bridge-energy}
\end{align}
\end{lemma}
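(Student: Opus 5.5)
The plan is to build the states explicitly in Zak coordinates adapted to the commuting pair $S_1=-W(z_1)$, $S_2=-W(z_2)$. Then I would read off both expectations from the squared-stabilizer identity, Proposition~\ref{prop:squared-stabilizer}, and from a second-moment computation.

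\textbf{Reduction and Zak picture.} First, a metaplectic transformation (as in Lemma~\ref{lem:symplectic-transfer}, now exactly in one mode) maps $z_1,z_2$ to a canonical translation/modulation pair with cell area $4\pi$. Under this map $G_M$ becomes $G_{M'}$ with $M'=S^TMS$, and $\tau$ is invariant, since $\Tr\sqrt{M^{1/2}FM^{1/2}}$ is a symplectic invariant of the pair $(M,F)$. In the Zak representation for that lattice, $S_1$ and $S_2$ act by multiplication by $e^{i\theta_1}$ and $e^{i\theta_2}$ (after fixing the $-1$ phases) on the quasi-momentum torus $\mathbb T^2$. I would take $\psi_\varepsilon$ to be a localized Zak Gaussian. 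Its Zak transform is concentrated near $\theta=0$ with covariance $\varepsilon\,\Gamma$ in $\theta$, and its envelope in the lattice direction is broad. Equivalently, $\psi_\varepsilon$ is a Gaussian-weighted lattice (GKP-type) state whose phase-space covariance matrix $\Sigma_\varepsilon$ scales like $\varepsilon^{-1}\Sigma$. Here $\Gamma$ and $\Sigma$ are dual via $\theta_j\approx z_j^T\Omega R \bmod 2\pi$.

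\textbf{Deficit expansion.} Expanding $1-\operatorname{Re}e^{i\phi}=\phi^2/2+O(\phi^4)$ in \eqref{eq:squared-stabilizer-identity}, with the frame \eqref{eq:bridge-frame} and weights $1,1,\tfrac12,\tfrac12$, gives
\[
\tfrac12\Var\theta_1+\tfrac12\Var\theta_2+\tfrac14\Var(\theta_1+\theta_2)+\tfrac14\Var(\theta_1-\theta_2)=\Var\theta_1+\Var\theta_2 .
\]
The cross terms cancel. This is exactly why the quadratic form $F_{u,v}=z_1z_1^T+z_2z_2^T$ appears. The periodization and Gaussian-tail corrections are exponentially small in $1/\varepsilon$, and the quartic terms give the $O(\varepsilon^2)$ error. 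So the deficit equals $\varepsilon\,\Tr(F\Sigma^{-1})$ up to normalization constants, where $\Sigma$ is the phase-space spread.

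\textbf{Energy and balance.} For the energy, the second moments of $\psi_\varepsilon$ are $\varepsilon^{-1}\Sigma$ plus $O(1)$ terms; the local Gaussian width contributes only bounded energy. This gives $\langle G_M\rangle=\tfrac{1}{2\varepsilon}\Tr(M\Sigma)+O(1)$. Now choose $\Sigma$ by Lemma~\ref{lem:trace-balance}, applied after conjugating by $M^{1/2}$ with $H=M^{1/2}FM^{1/2}$. Then $\Sigma\propto M^{-1/2}H^{1/2}M^{-1/2}$, and the two leading coefficients become $\tfrac{\tau}{2}\varepsilon$ and $\tfrac{\tau}{4\varepsilon}$ after normalizing the scale of $\Sigma$. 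Form-domain membership is clear, since these are Schwartz functions.

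\textbf{Main obstacle.} The hard part will be the bookkeeping that makes the constants come out exactly, namely the $\tfrac{\tau}{2}$ and $\tfrac{\tau}{4}$. That means verifying that the Zak width in $\theta$ and the envelope width in $R$ are exactly reciprocal, via the duality $\theta_j=z_j^T\Omega R$, rather than reciprocal up to a constant. It also means checking that the residual $O(1)$ energy from the finite local width does not enter at order $\varepsilon^{-1}$. I would first verify everything for $M=I_2$ with the square pair, where the target value $\tau=2\cdot 2\sqrt\pi\cdot$ matches the stated coefficient $\pi$ in Theorem~\ref{thm:intro-quadratic-bridge}. I would then transport the result by the metaplectic reduction.
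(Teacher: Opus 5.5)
Your outline follows the paper's route: a metaplectic reduction to the pair $Le_1,Le_2$ with $L=2\sqrt\pi$, Zak coordinates centred where both squared generators equal $-1$, a localized Gaussian in the chart $\theta$, the six-term expansion $\langle4-h^2\rangle=\Tr\langle\theta\theta^T\rangle+O(\varepsilon^2)$, and a shape taken from the equality case of Lemma~\ref{lem:trace-balance}. Your shape $\Sigma=\frac12M^{-1/2}(M^{1/2}FM^{1/2})^{1/2}M^{-1/2}$ is exactly the paper's $X_*=\frac12(B_0^TMB_0)^{1/2}$, with $B_0=(z_1\ z_2)$, rewritten in phase-space variables. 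To see this, use $Y(Y^TY)^{-1/2}Y^T=(YY^T)^{1/2}$ with $Y=M^{1/2}B_0$.

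The gap is that the step producing the constants $\tau/2$ and $\tau/4$ is left ``up to normalization constants'', and the reason you give cannot supply it. Rescaling $\Sigma$ or $\varepsilon$ can fix one of the two coefficients but not both. The product of the two leading coefficients is invariant, and it equals the lemma's $\tau^2/8$ only if the deficit is $\frac{\varepsilon}{4}\Tr(F\Sigma^{-1})+O(\varepsilon^2)$, with exactly the factor $\frac14$.

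The relation $\theta_j\approx z_j^T\Omega R \bmod 2\pi$ only identifies the spectral variable of $-W(z_j)$. It gives no relation between the $O(\varepsilon)$ covariance of $\theta$ and the $O(\varepsilon^{-1})$ covariance of $R$. What is needed is the conjugacy on the Zak chart: for functions supported inside the cell, $R=LD_\theta+a(\theta)$ with $D_\theta=-i\nabla_\theta$ and $a$ bounded. Equivalently, $R=B_0D_\theta+O(1)$ before the reduction. For a \emph{real} Gaussian profile with $\langle\theta\theta^T\rangle=\varepsilon\Gamma$, one has exactly $\langle D_\theta D_\theta^T\rangle=(4\varepsilon)^{-1}\Gamma^{-1}$, and the cross term with $a(\theta)$ vanishes by reality. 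The energy is then $(8\varepsilon)^{-1}\Tr(H\Gamma^{-1})+O(1)$ with $H=B_0^TMB_0$ (the paper's $H$, not yours), and the deficit is $\varepsilon\Tr\Gamma$. Taking $\Gamma=\frac12H^{1/2}$ gives both coefficients. A chirped profile keeps $\Gamma$ but strictly enlarges the $R$-covariance. So the missing ingredient is minimum uncertainty in the Zak chart, not bookkeeping.

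There are also two smaller problems. First, the claim that the local Gaussian width contributes only bounded energy is backwards in real space. Concentration in $\theta_2=L(x-x_0)$, i.e.\ narrowness of the comb teeth, is exactly what produces the order-$\varepsilon^{-1}$ momentum variance through $P=LD_{\theta_2}$. The only bounded piece is the multiplication term $a(\theta)$, which comes from the intra-cell position $x$ in $Q=x+i\partial_p$.

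Second, the Zak profile must respect the quasi-periodic twist $(\mathcal Z_L\psi)(x+L,p)=e^{iLp}(\mathcal Z_L\psi)(x,p)$. If ``periodization'' means making the $\theta$-Gaussian doubly periodic on $\mathbb T^2$, the preimage has exponentially small but nonzero jumps. It then lies outside $H^1(\mathbb R)$, so the energy is infinite rather than $O(1)$-close to the Gaussian value. The paper avoids this by multiplying the Gaussian by a smooth cutoff supported in a compact subset of the chart. That makes the form-level identity $R=LD_\theta+a(\theta)$ usable without boundary terms, at a cost of only $O(e^{-c/\varepsilon})$.
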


\begin{proof}
Write $B_0=(z_1\ z_2)$ and $L=2\sqrt\pi$.  Since
$\det B_0=z_1^T\Omega z_2=4\pi=L^2$, the matrix $S=B_0/L$ is symplectic.  Metaplectic covariance
reduces the calculation to the pair $Le_1,Le_2$ and replaces $M$ by $M'=S^TMS$.  Put
\begin{equation}
H:=L^2M'=B_0^TMB_0.
\label{eq:H-bridge}
\end{equation}
The nonzero eigenvalues of $H$ agree with those of
$M^{1/2}F_{u,v}M^{1/2}$, so
\begin{equation}
\Tr\sqrt H=\tau.
\label{eq:tau-H}
\end{equation}

Use the Zak transform with position period $L$,
\begin{equation}
(\mathcal Z_L\psi)(x,p)
=
\sqrt{\frac{L}{2\pi}}
\sum_{r\in\mathbb Z}e^{-irLp}\psi(x+rL),
\label{eq:zak-transform}
\end{equation}
on the cell $0\le x<L$, $0\le p<2\pi/L$; see
\cite{EnglertZak2006,PantaleoniZak2023}.  On smooth functions supported inside the cell,
\begin{equation}
\mathcal Z_LQ\mathcal Z_L^{-1}=x+i\partial_p,
\qquad
\mathcal Z_LP\mathcal Z_L^{-1}=-i\partial_x,
\label{eq:zak-QP}
\end{equation}
and
\begin{equation}
W(Le_1)\longmapsto e^{-iLp},
\qquad
W(Le_2)\longmapsto e^{iLx}.
\label{eq:zak-generators}
\end{equation}
Choose
\[
x_0=p_0=\frac{\pi}{L},
\]
which lies in the interior of this cell and makes both multipliers equal to $-1$.  Set
\[
\theta_1=-L(p-p_0),
\qquad
\theta_2=L(x-x_0).
\]
The canonical squared generators are then $-e^{i\theta_1}$ and $-e^{i\theta_2}$.  Since their
Weyl cocycle is trivial, the six terms in \eqref{eq:bridge-frame} have the prescribed phases
multiplied by $e^{ic^T\theta}$, where
\[
c\in\{e_1,e_1,e_2,e_2,e_1+e_2,e_1-e_2\}.
\]
In these coordinates,
\begin{equation}
R=L D_\theta+a(\theta),
\qquad
D_\theta=-i\nabla_\theta,
\qquad
a(\theta)=\begin{pmatrix}x_0+\theta_2/L\\0\end{pmatrix}.
\label{eq:zak-local-R}
\end{equation}

Let $\chi_0\in C_c^\infty(\mathbb R^2)$ be real, supported in a small neighborhood of the
origin whose closure remains inside the chosen chart, and equal to one near the origin.  Set
\[
X_*:=\frac12H^{1/2},
\qquad
f_\varepsilon(\theta)
=c_\varepsilon\chi_0(\theta)
\exp\!\left(-\frac{1}{4\varepsilon}\theta^TX_*^{-1}\theta\right),
\]
with $c_\varepsilon$ chosen so that $\|f_\varepsilon\|=1$.  These functions are smooth and
compactly supported in the chart, hence define vectors in the common form domain of $Q$ and $P$.
The cutoff and all its derivatives are separated from the concentration point, so their
contributions are exponentially small.  The Gaussian moments are therefore
\begin{align}
\langle\theta\theta^T\rangle_{f_\varepsilon}
&=\varepsilon X_*+O(e^{-c/\varepsilon}),
\label{eq:zak-theta-moment}\\
\langle D_\theta D_\theta^T\rangle_{f_\varepsilon}
&=\frac{1}{4\varepsilon}X_*^{-1}+O(e^{-c/\varepsilon}).
\label{eq:zak-D-moment}
\end{align}
The six integer vectors above have Gram sum $4I_2$.  Since their six phase deficits represent
$2(4-h^2)$, expansion at $\theta=0$ gives
\[
2\langle4-h^2\rangle_{f_\varepsilon}
=\frac12\Tr\!\left(4I_2\,\langle\theta\theta^T\rangle_{f_\varepsilon}\right)
+O(\varepsilon^2)
=2\varepsilon\Tr X_*+O(\varepsilon^2).
\]
Thus
\[
\langle4-h^2\rangle_{f_\varepsilon}
=\varepsilon\Tr X_*+O(\varepsilon^2)
=\frac{\tau}{2}\varepsilon+O(\varepsilon^2).
\]
For the energy, use \eqref{eq:zak-local-R} at the quadratic-form level.  Its leading derivative
term is
\[
\frac{1}{8\varepsilon}\Tr(HX_*^{-1}).
\]
The cross term is purely imaginary because $f_\varepsilon$ and $a(\theta)f_\varepsilon$ are real
while $D_\theta f_\varepsilon$ is purely imaginary.  As it is also the expectation of the
self-adjoint cross operator, it is real and therefore vanishes.  The multiplication term and the ground-energy
subtraction are $O(1)$.  Since $X_*^{-1}=2H^{-1/2}$,
\[
\langle G_{M'}\rangle_{f_\varepsilon}
=\frac{\tau}{4\varepsilon}+O(1).
\]
No Zak boundary term occurs: every function entering the integrations by parts is supported in a
compact subset of the chart.  Transforming back by the metaplectic unitary proves
\eqref{eq:zak-bridge-deficit}--\eqref{eq:zak-bridge-energy}.

The choice $X_*\propto H^{1/2}$ is the equality geometry of
Lemma~\ref{lem:trace-balance}.  At leading order the deficit is $\varepsilon\Tr X$ and the energy
is $(8\varepsilon)^{-1}\Tr(HX^{-1})$, so the same matrix balance that appears in the lower bound
also determines the shape of the quasimode.
\end{proof}

\begin{proposition}[Sharp phase-fixed bridge cost]
\label{prop:sharp-bridge-cost}
With the preceding notation,
\begin{equation}
q_M(E)
=\frac{\tau_M(u,v)^2}{8E}+O(E^{-2}),
\qquad E\to\infty.
\label{eq:qM-asymptotic}
\end{equation}
\end{proposition}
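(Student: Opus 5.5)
The plan is to prove matching upper and lower bounds for $q_M(E)$. The upper bound follows from the Zak quasimodes of Lemma~\ref{lem:zak-bridge}. Given $E$ large, we choose $\varepsilon$ so that the energy in \eqref{eq:zak-bridge-energy} is at most $E$. Concretely, we take $\varepsilon=\tau/(4E)+O(E^{-2})$, using the $O(1)$ remainder to shift $\varepsilon$ up by an $O(E^{-2})$ amount. Then \eqref{eq:zak-bridge-deficit} gives
\[
q_M(E)\le\frac{\tau}{2}\cdot\frac{\tau}{4E}+O(E^{-2})=\frac{\tau^2}{8E}+O(E^{-2}).
\]

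For the lower bound, we fix $\rho\in\mathfrak S_{G_M,E}$ with finite second moments; states with infinite energy are excluded by the constraint. We write $V=V_\rho$ and $m=m_\rho$. The energy satisfies
\[
\Tr(\rho G_M)=\tfrac12\Tr(MV)+\tfrac12 m^TMm-\tfrac12\nu,
\]
so $\Tr(MV)\le 2E+\nu$. By Proposition~\ref{prop:squared-stabilizer},
\[
2\Tr\rho(4-h^2)=\sum_{k=1}^6\bigl(1-\operatorname{Re}(\epsilon_k\Tr\rho W(w_k))\bigr),
\]
where $w_k$ runs over the frame \eqref{eq:bridge-frame} and $\epsilon_k$ are the target phases. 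Each term is at least $1-|\Tr\rho W(w_k)|$. Lemma~\ref{lem:covariance-form} then bounds each term below by $g(w_k^TV^{-1}w_k)$.

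Since $g(t)=t/8+O(t^2)$ and $g(t)\ge t/(8+2t)$, we obtain
\[
2\Tr\rho(4-h^2)\ge\frac18\sum_k w_k^TV^{-1}w_k-O\Bigl(\bigl(\textstyle\sum_k w_k^TV^{-1}w_k\bigr)^2\Bigr).
\]
Writing $w_k=B_0c_k$, the Gram identity $\sum_k c_kc_k^T=4I_2$ gives
\[
\sum_k w_k^TV^{-1}w_k=4\Tr(B_0^TV^{-1}B_0)=4\Tr(V^{-1}F_{u,v}).
\]
Hence
\[
\Tr\rho(4-h^2)\ge\tfrac14\Tr(V^{-1}F)-O\bigl(\Tr(V^{-1}F)^2\bigr).
\]
We now apply Lemma~\ref{lem:trace-balance} with $\Sigma=M^{1/2}VM^{1/2}$ and $H=M^{1/2}FM^{1/2}$. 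If $F$ is only positive semidefinite, we use continuity, or note that $F>0$ because $z_1,z_2$ are independent. The lemma gives
\[
\Tr(V^{-1}F)=\Tr(H\Sigma^{-1})\ge\frac{\tau^2}{\Tr\Sigma}\ge\frac{\tau^2}{2E+\nu}.
\]
The function $t\mapsto t/4-Ct^2$ is increasing for small $t$, and the whole bound $g$ is monotone in $t$. So we first use the exact monotone bound $\sum_k g(t_k)\ge\sum_k t_k/(8+2t_k)$, which is increasing in each $t_k$. The minimum of the right side under the constraint is then attained at small $t$, and the leading term is $\tau^2/(4(2E+\nu))=\tau^2/(8E)+O(E^{-2})$.

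The main obstacle is this last step: passing from the six individual nonlinear bounds $g(t_k)$ to the aggregated trace quantity while keeping an $O(E^{-2})$ error uniformly over all states. In particular, the argument must also cover states where some $t_k$ are large rather than small. For that case we use a dichotomy. If $\Tr(V^{-1}F)\ge\delta$ for a fixed $\delta>0$, then some $t_k\gtrsim\delta$, so the deficit is bounded below by a constant, which exceeds $\tau^2/(8E)$ for $E$ large. Otherwise all $t_k\le C\delta$, and on that range $g(t)\ge t/8-t^2/32$. Summing gives a deficit of at least $\tfrac14\Tr(V^{-1}F)-C'\Tr(V^{-1}F)^2$. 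This expression is increasing on $[0,\delta]$ for small $\delta$, so we may insert the trace-balance bound $\tau^2/(2E+\nu)$, which yields the claimed $O(E^{-2})$ remainder.
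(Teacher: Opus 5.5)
Your proposal is correct and follows the paper's proof. For the upper bound you use the Zak quasimodes with $\varepsilon\approx\tau/(4(E-C))$; for the lower bound you use the phase-to-modulus step, Lemma~\ref{lem:covariance-form} on the six frame vectors, the Gram identity $\sum_k w_kw_k^T=4F_{u,v}$, and trace Cauchy--Schwarz against $\Tr(MV)\le 2E+\nu$. The only difference is how you aggregate the six terms $g(t_k)$: your case split is valid, since $g(t)\ge t/8-t^2/32$ for all $t\ge0$, but the paper avoids it. There $g$ is concave and increasing with $g(0)=0$, so $\sum_k g(t_k)\ge g(\sum_k t_k)$, and this gives the explicit bound $q_M(E)\ge\tau^2/(8E+4\nu+4\tau^2)$ for every $E$.
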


\begin{proof}
Since $M>0$, finite $G_M$-energy implies finite second moments of $Q$ and $P$: if
$\lambda_{\min}(M)>0$ is the smallest eigenvalue of $M$, then
$R^TMR\ge\lambda_{\min}(M)(Q^2+P^2)$ as quadratic forms.  Let $\rho\in\mathfrak S_{G_M,E}$ and let $V$ be its covariance.  Since
\[
\Tr(G_M\rho)=\frac12\Tr(MV)+\frac12m_\rho^TMm_\rho-\frac\nu2
\]
and $M>0$,
\begin{equation}
\Tr(MV)=2\Tr(G_M\rho)+\nu-m_\rho^TMm_\rho\le 2E+\nu.
\label{eq:MV-energy-bound}
\end{equation}

For the multiset $\mathcal Z_h$ in \eqref{eq:bridge-frame}, set
\[
D_{\mathcal Z_h}(\rho)
:=\sum_{z\in\mathcal Z_h}\bigl(1-|\Tr(\rho W(z))|\bigr),
\qquad
T:=\Tr(A_hV^{-1}),
\quad
A_h:=\sum_{z\in\mathcal Z_h}zz^T.
\]
The phase-fixed identity gives
\begin{equation}
2\Tr\rho(4-h^2)\ge D_{\mathcal Z_h}(\rho).
\label{eq:phase-to-modulus}
\end{equation}
By Lemma~\ref{lem:covariance-form}, with $g(t)=t/[2(t+4)]$,
\begin{equation}
D_{\mathcal Z_h}(\rho)
\ge\sum_{z\in\mathcal Z_h}g(z^TV^{-1}z)
\ge\frac{T}{2(T+4)}.
\label{eq:D-lower-T}
\end{equation}
Here the last inequality follows from $0\le t_z\le T$ and the concavity of $g$ with $g(0)=0$.
The bridge frame satisfies
\begin{equation}
A_h=4F_{u,v}.
\label{eq:Ah-bridge}
\end{equation}
Moreover,
\begin{equation}
\Tr(A_hV^{-1})\Tr(MV)
\ge
\left[\Tr\sqrt{M^{1/2}A_hM^{1/2}}\right]^2
=4\tau_M(u,v)^2,
\label{eq:matrix-CS-bridge}
\end{equation}
by the trace-norm Cauchy--Schwarz inequality.  Using \eqref{eq:MV-energy-bound}, one obtains
\[
T\ge\frac{4\tau_M(u,v)^2}{2E+\nu}.
\]
Equations \eqref{eq:phase-to-modulus} and \eqref{eq:D-lower-T} therefore give
\begin{equation}
q_M(E)
\ge
\frac{\tau_M(u,v)^2}{8E+4\nu+4\tau_M(u,v)^2}
=
\frac{\tau_M(u,v)^2}{8E}+O(E^{-2}).
\label{eq:qM-lower}
\end{equation}

For the reverse inequality, Lemma~\ref{lem:zak-bridge} gives constants $C$ and
$\varepsilon_0>0$ such that, for $0<\varepsilon<\varepsilon_0$, the quasimode has energy at most
$\tau/(4\varepsilon)+C$.  Taking
\[
\varepsilon(E)=\frac{\tau}{4(E-C)}
\]
for large $E$ and using \eqref{eq:zak-bridge-deficit} yields
\[
q_M(E)
\le
\frac{\tau^2}{8(E-C)}+O(E^{-2})
=
\frac{\tau^2}{8E}+O(E^{-2}).
\]
Together with \eqref{eq:qM-lower}, this proves \eqref{eq:qM-asymptotic}.
\end{proof}

\begin{proof}[Proof of Theorem~\ref{thm:intro-quadratic-bridge}]
Let
\[
V_M(E):=\sup_{\rho\in\mathfrak S_{G_M,E}}\Var_\rho(h).
\]
Since $\Var_\rho(h)\le\Tr(\rho h^2)$,
\begin{equation}
V_M(E)\le4-q_M(E).
\label{eq:VM-upper-q}
\end{equation}
For the reverse inequality, set $V=W(v)$ and
\[
C_v=\frac14v^TMv.
\]
Choose a state $\rho$ arbitrarily close to the infimum defining $q_M(E-C_v)$ and set
\begin{equation}
\bar\rho
=\frac12\rho+\frac14V\rho V^*+\frac14V^*\rho V.
\label{eq:quadratic-symmetrization}
\end{equation}
Conjugation by $V$ sends $h$ to $-h$ and fixes $h^2$, so
\[
\Tr(\bar\rho h)=0,
\qquad
\Tr(\bar\rho h^2)=\Tr(\rho h^2).
\]
Using $W(v)^*RW(v)=R+v$, the two linear energy shifts in
\eqref{eq:quadratic-symmetrization} cancel and
\[
\Tr(G_M\bar\rho)=\Tr(G_M\rho)+C_v\le E.
\]
Hence
\begin{equation}
V_M(E)\ge4-q_M(E-C_v).
\label{eq:VM-lower-q}
\end{equation}
Proposition~\ref{prop:sharp-bridge-cost} now gives
\[
V_M(E)=4-\frac{\tau_M(u,v)^2}{8E}+O(E^{-2}).
\]
Since $\gamma_{G_M,E}(h)=2\sqrt{V_M(E)}$,
\[
4-\gamma_{G_M,E}(h)
=\frac{\tau_M(u,v)^2}{16E}+O(E^{-2}).
\]
Thus the coefficient in \eqref{eq:intro-quadratic-bridge} is $\tau_M(u,v)^2/16$.  To obtain the explicit form, write
$C=uu^T+vv^T$, so $F_{u,v}=4C$.  For a positive $2\times2$ matrix $B$,
$(\Tr\sqrt B)^2=\Tr B+2\sqrt{\det B}$.  Therefore
\[
\frac{\tau_M(u,v)^2}{16}
=\frac14\left(\Tr(MC)+2\sqrt{\det(MC)}\right).
\]
Now $\Tr(MC)=u^TMu+v^TMv$ and
$\det C=(u^T\Omega v)^2=\pi^2$, which gives \eqref{eq:intro-kappa}.
\end{proof}

\begin{corollary}[Square coefficient]
\label{cor:square-bridge}
For $M=I_2$, $u=(\sqrt\pi,0)^T$, and $v=(0,\sqrt\pi)^T$,
\begin{equation}
4-\gamma_{N,E}(h)=\frac{\pi}{E}+O(E^{-2}).
\label{eq:square-bridge}
\end{equation}
\end{corollary}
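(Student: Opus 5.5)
This is a direct specialization of Theorem~\ref{thm:intro-quadratic-bridge}, so the plan is to verify its hypotheses and evaluate $\kappa_M(u,v)$ from \eqref{eq:intro-kappa}.

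First, the energy. With $M=I_2$ one has $\nu=\sqrt{\det I_2}=1$, so
\[
G_{I_2}=\tfrac12(Q^2+P^2)-\tfrac12=N,
\]
which is the oscillator number operator of the compact-comb subsection. Hence $\gamma_{G_{I_2},E}=\gamma_{N,E}$.

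Second, the pairing. Using $\Omega=\begin{pmatrix}0&1\\-1&0\end{pmatrix}$, compute $\Omega v=(\sqrt\pi,0)^T$, so $u^T\Omega v=\sqrt\pi\cdot\sqrt\pi=\pi$. This is the hypothesis under which the trigonometric witness $h$ is defined, so the theorem applies.

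Third, the coefficient. We have $u^TMu=\pi$ and $v^TMv=\pi$, which gives
\[
\kappa_{I_2}(u,v)=\tfrac14(\pi+\pi+2\pi)=\pi.
\]
Substituting into \eqref{eq:intro-quadratic-bridge} yields \eqref{eq:square-bridge}.

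As a cross-check, $F_{u,v}=4\pi I_2$, so $\tau=\Tr\sqrt{4\pi I_2}=4\sqrt\pi$ and $\tau^2/16=\pi$, in agreement with the final step of the proof of Theorem~\ref{thm:intro-quadratic-bridge}.

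There is no real obstacle here. The only point needing care is the sign convention for $\Omega$ in computing $u^T\Omega v$, since the opposite orientation would give $-\pi$. Even in that case one could replace $v$ by $-v$: this flips $B$ and hence $h$, and leaves $\gamma$ unchanged.
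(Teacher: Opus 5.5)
Your proposal is correct and matches the paper's route. The paper gives no separate proof because the corollary is the direct specialization of Theorem~\ref{thm:intro-quadratic-bridge}, whose statement already records the coefficient $\pi$ for this square pair. Your checks that $G_{I_2}=N$, that $u^T\Omega v=\pi$, and that $\kappa_{I_2}(u,v)=\tau^2/16=\pi$ are exactly what is needed.
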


\section{General second-quantized energies: two-sided inverse-energy rate and zero modes}

For $G=d\Gamma(H_1)$, the exact coefficient of Section~\ref{sec:quadratic-bridge}
need not persist when the one-particle spectrum is infinite dimensional or singular at zero.
Lemma~\ref{lem:translation-variance} nevertheless supplies a coercive lower bound whenever the
energy controls a quadrature conjugate to one of the squared Weyl translations.  Together with
Theorem~\ref{thm:fixed-upper}, this gives the inverse-energy rate and isolates the zero-mode
exception.

\begin{proposition}[Coercive lower bound]
\label{prop:coercive-lower}
Let $h$ be the trigonometric witness in \eqref{eq:trig-witness} and set $S=-U^2$.  Suppose
there is a self-adjoint $X$ satisfying \eqref{eq:shift}, with
$\dom G^{1/2}\subset\dom X$, and such that
\begin{equation}
\|X\Psi\|^2\le a\|G^{1/2}\Psi\|^2+b\|\Psi\|^2,
\qquad \Psi\in\dom G^{1/2},
\label{eq:coercive}
\end{equation}
for some $a,b\ge0$.  Then for every $E>\inf\sigma(G)$,
\begin{equation}
4-\gamma_{G,E}(h)
\ge
\frac{\ell^2}{2\{\ell^2+4(aE+b)\}}.
\label{eq:coercive-lower}
\end{equation}
\end{proposition}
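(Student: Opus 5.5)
The plan is to bound $\Var_\rho(h)$ above by the expectation of $h^2$, and then bound $\Tr\rho\,h^2$ through the single squared translation $S=-U^2$. The translated-unitary estimate of Lemma~\ref{lem:translation-variance} then does the rest, with the variance of $X$ controlled by the energy through \eqref{eq:coercive}.

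\emph{Step 1: reduce to the single stabilizer $S$.} Fix $\rho\in\mathfrak S_{G,E}$ and write $z=\Tr(\rho S)$. By \eqref{eq:h-square}, $h^2=4A^2B^2$. Since $A^2$ and $B^2$ are commuting positive contractions, $A^2B^2\le A^2$. By \eqref{eq:A2B2}, $A^2=(1+\operatorname{Re}S)/2$. Hence
\[
\Var_\rho(h)\le\Tr(\rho h^2)\le 4\Tr(\rho A^2)=2\bigl(1+\operatorname{Re}z\bigr)=4-2(1-\operatorname{Re}z).
\]

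\emph{Step 2: apply the uncertainty estimate.} We need $\Tr(\rho X^2)<\infty$. Since $\Tr(G\rho)\le E<\infty$, a spectral decomposition $\rho=\sum_n p_n|\psi_n\rangle\langle\psi_n|$ has every $\psi_n$ with $p_n>0$ in $\dom G^{1/2}\subset\dom X$, and $\sum_np_n\|G^{1/2}\psi_n\|^2=\Tr(G\rho)$ in the extended-positive sense. Summing \eqref{eq:coercive} over $n$ with weights $p_n$ gives
\[
\Var_\rho(X)\le\Tr(\rho X^2)\le a\Tr(G\rho)+b\le aE+b.
\]
The hypothesis \eqref{eq:shift}, including $S\dom X=\dom X$, is assumed. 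So \eqref{eq:translation-gap} applies, and since its right-hand side is decreasing in $\Var_\rho(X)$,
\[
1-\operatorname{Re}z\ge\delta:=\frac{\ell^2}{2\{\ell^2+4(aE+b)\}}.
\]

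\emph{Step 3: pass to $\gamma$.} Steps 1 and 2 give $\Var_\rho(h)\le4-2\delta$ uniformly over $\mathfrak S_{G,E}$. Therefore
\[
\gamma_{G,E}(h)\le2\sqrt{4-2\delta}=4\sqrt{1-\delta/2}\le4(1-\delta/4)=4-\delta,
\]
using $\sqrt{1-x}\le1-x/2$. This is \eqref{eq:coercive-lower}.

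The only delicate point is the domain bookkeeping in Step 2. It must be checked that finite mean energy really places $\rho$ in the regime of Lemma~\ref{lem:translation-variance}, that is, that $\Tr(\rho X^2)<\infty$ and that the mixed-state version of \eqref{eq:coercive} holds. This is handled by the spectral decomposition and monotone summation above. Everything else is operator-order algebra already recorded in \eqref{eq:h-square} and \eqref{eq:A2B2}.
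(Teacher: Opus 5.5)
Your proposal is correct and follows essentially the same route as the paper. Both arguments bound $\Tr(\rho h^2)\le 4\Tr(\rho A^2)=2\bigl(1+\operatorname{Re}\Tr(\rho S)\bigr)$, control $\Var_\rho(X)\le aE+b$ through the spectral decomposition of $\rho$, apply the translation-gap estimate of Lemma~\ref{lem:translation-variance}, and finish with the same elementary square-root inequality. The only difference is that your Step 2 writes out the domain bookkeeping which the paper compresses into ``applied through the spectral decomposition of $\rho$.''
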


\begin{proof}
For $\rho\in\mathfrak S_{G,E}$, the quadratic-form inequality
\eqref{eq:coercive}, applied through the spectral decomposition of $\rho$, gives
$\Var_\rho(X)\le\Tr(\rho X^2)\le aE+b$.  Lemma~\ref{lem:translation-variance} therefore
implies
\[
1-\operatorname{Re}\Tr(\rho S)\ge d_E,
\qquad
d_E:=\frac{\ell^2}{2\{\ell^2+4(aE+b)\}}.
\]
Since $h^2=4A^2B^2\le4A^2$ and
$A^2=(1+\operatorname{Re}S)/2$,
\[
\Tr(\rho h^2)\le4-2d_E.
\]
Thus
\[
\gamma_{G,E}(h)
\le4\sqrt{1-\frac{d_E}{2}}.
\]
The inequality $1-\sqrt{1-y}\ge y/2$ yields
$4-\gamma_{G,E}(h)\ge d_E$.
\end{proof}

\begin{corollary}[Two-sided inverse-energy law]
\label{cor:sharp-fixed}
Under the assumptions of Theorem~\ref{thm:fixed-upper}, suppose
\eqref{eq:coercive} holds for a quadrature satisfying \eqref{eq:shift}.  Then there are
$c,C,E_0>0$ such that
\begin{equation}
\frac{c}{E+1}\le4-\gamma_{G,E}(h)\le\frac{C}{E},
\qquad E\ge E_0.
\label{eq:sharp-fixed}
\end{equation}
\end{corollary}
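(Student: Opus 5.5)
The corollary combines the two preceding results, so the plan is to invoke each for one side of \eqref{eq:sharp-fixed}. For the upper bound, Theorem~\ref{thm:fixed-upper} applies verbatim under the standing assumptions $u,v\in\dom H_1^{1/2}$ and $\omega(u,v)=\pi$. It gives constants $C,E_1<\infty$ with $4-\gamma_{G,E}(h)\le C/E$ for $E\ge E_1$. No further work is needed on that side.

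For the lower bound, I would apply Proposition~\ref{prop:coercive-lower} with the given quadrature $X$. This $X$ satisfies the shift relation \eqref{eq:shift} with respect to $S=-U^2$, for some fixed $\ell\ne0$. The scalar $-1$ does not affect covariance, since $(-U^2)^*X(-U^2)=U^{2*}XU^2$. The proposition yields
\[
4-\gamma_{G,E}(h)\ge\frac{\ell^2}{2\{\ell^2+4(aE+b)\}}.
\]
It remains to put this in the form $c/(E+1)$. The denominator satisfies $2\ell^2+8aE+8b\le K(E+1)$ with $K:=\max\{8a,\,2\ell^2+8b\}$. Here $K>0$ because $\ell\ne0$, and this holds even if $a=0$. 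Hence the right-hand side is at least $\ell^2/(K(E+1))$, so we may take $c:=\ell^2/K>0$. This bound holds for every $E>\inf\sigma(G)$, in particular for $E\ge E_1$.

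Setting $E_0:=\max\{E_1,1\}$, which exceeds $\inf\sigma(G)=0$, both inequalities hold simultaneously for $E\ge E_0$. This proves \eqref{eq:sharp-fixed}.

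There is no real obstacle here. The only points to check are that the hypotheses of Proposition~\ref{prop:coercive-lower} are exactly those assumed in the corollary, namely $\dom G^{1/2}\subset\dom X$ together with \eqref{eq:coercive}, and that the constant $c$ stays positive in the degenerate case $a=0$. Constructing such an $X$ from $u,v\notin\ker H_1$ is the substantive step, but it belongs to the later dichotomy theorem, not to this corollary.
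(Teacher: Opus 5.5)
Your proposal is correct and matches the argument the paper leaves implicit. The upper bound is Theorem~\ref{thm:fixed-upper}, the lower bound is Proposition~\ref{prop:coercive-lower}, and the elementary comparison $2\ell^2+8aE+8b\le K(E+1)$ that you carry out converts the latter into the form $c/(E+1)$.
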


\begin{lemma}[Field bound from inverse one-particle energy]
\label{lem:inverse-field-bound}
Let $H_1\ge0$ be self-adjoint on $\mathfrak h$, put $G=d\Gamma(H_1)$, and set
$\mathfrak h_+:=(\ker H_1)^\perp$.  Let $H_1^{-1/2}$ denote the inverse square root of the
restriction of $H_1$ to $\mathfrak h_+$.  If
$w\in\dom H_1^{-1/2}\subset\mathfrak h_+$, then
\[
\dom G^{1/2}\subset\dom\Phi(w)
\]
and, for every $\Psi\in\dom G^{1/2}$,
\begin{equation}
\|\Phi(w)\Psi\|^2
\le
2\|H_1^{-1/2}w\|^2\,\|G^{1/2}\Psi\|^2
+\|w\|^2\,\|\Psi\|^2.
\label{eq:inverse-field-bound}
\end{equation}
\end{lemma}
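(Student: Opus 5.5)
The plan is to write the Segal field as $\Phi(w)=\frac{1}{\sqrt2}\bigl(a(w)+a^*(w)\bigr)$, first on the finite-particle core built from vectors in $\dom H_1$, and to bound the annihilation and creation parts separately. The paper's normalization $W(tw)=e^{it\Phi(w)}$, together with the displacement modulus $|\langle e_j,H_1^{1/2}z\rangle|/\sqrt2$ used in Lemma~\ref{lem:weyl-energy}, fixes this $1/\sqrt2$ convention. The two ingredients are as follows.

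\textbf{Annihilation part.} I expect the bound
\[
\|a(w)\Psi\|\le\|H_1^{-1/2}w\|\,\|G^{1/2}\Psi\|.
\]
On the $n$-particle sector, $a(w)$ contracts one slot against $w=H_1^{1/2}(H_1^{-1/2}w)$. Writing $\xi=H_1^{-1/2}w$, one moves $H_1^{1/2}$ onto the contracted slot and applies Cauchy--Schwarz. This gives $\|a(w)\Psi_n\|^2\le n\|\xi\|^2\,\|(H_1^{1/2}\otimes1)\Psi_n\|^2$, and symmetry turns the factor $n\|(H_1^{1/2}\otimes 1)\Psi_n\|^2$ into $\langle\Psi_n,d\Gamma(H_1)\Psi_n\rangle$.

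Equivalently, if $(e_j)$ is an orthonormal basis of $\mathfrak h_+$ inside $\dom H_1$, then $a(w)=\sum_j\overline{\langle e_j,\xi\rangle}\,a(H_1^{1/2}e_j)$ (antilinearity is irrelevant for the norm). Cauchy--Schwarz combined with the identity $\|G^{1/2}\Psi\|^2=\sum_j\|a(H_1^{1/2}e_j)\Psi\|^2$ from the proof of Lemma~\ref{lem:weyl-energy} then gives the bound. Zero modes of $H_1$ do not contribute, because $w\in\mathfrak h_+$.

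\textbf{Creation part and combination.} The CCR give $\|a^*(w)\Psi\|^2=\|a(w)\Psi\|^2+\|w\|^2\|\Psi\|^2$, so
\[
\|\Phi(w)\Psi\|^2\le\tfrac12\bigl(\|a(w)\Psi\|+\|a^*(w)\Psi\|\bigr)^2\le\|a(w)\Psi\|^2+\|a^*(w)\Psi\|^2=2\|a(w)\Psi\|^2+\|w\|^2\|\Psi\|^2.
\]
The middle step is $(x+y)^2\le2(x^2+y^2)$. Inserting the annihilation bound yields exactly \eqref{eq:inverse-field-bound} on the core.

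\textbf{Domain extension (main obstacle).} This is the delicate step: showing that $\dom G^{1/2}\subset\dom\Phi(w)$ and that the inequality holds on all of $\dom G^{1/2}$. I would first take as core the finite-particle vectors with components in the algebraic tensor products of $\dom H_1$. This set is a core for $G^{1/2}$, since it is invariant under $e^{-itG}$ and dense, and it lies in $\dom\Phi(w)$. For $\Psi\in\dom G^{1/2}$, choose core vectors $\Psi_k\to\Psi$ in the $G^{1/2}$-graph norm. The bound shows that $\Phi(w)\Psi_k$ is Cauchy, and closedness of the self-adjoint $\Phi(w)$ then gives $\Psi\in\dom\Phi(w)$ together with the inequality in the limit.

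If the core property proves awkward, I would fall back on the regularization of Lemma~\ref{lem:weyl-energy}. There one replaces $H_1$ by $H_{1,\varepsilon}=H_1(1+\varepsilon H_1)^{-1}$ and uses $\|H_{1,\varepsilon}^{-1/2}w\|^2=\|H_1^{-1/2}w\|^2+\varepsilon\|w\|^2$. Because $H_{1,\varepsilon}$ is bounded, $\dom G_\varepsilon^{1/2}\supset\dom N^{1/2}$, and the standard number estimate puts $\dom N^{1/2}$ inside $\dom\Phi(w)$. Monotone convergence of the forms as $\varepsilon\downarrow0$ then gives the same conclusion.
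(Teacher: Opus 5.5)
Your proposal is correct and follows the paper's proof essentially step for step. Both arguments use the same four ingredients: the slotwise Cauchy--Schwarz bound $\|a(w)\Psi\|\le\|H_1^{-1/2}w\|\,\|G^{1/2}\Psi\|$ obtained from $w=H_1^{1/2}\xi$ and symmetry; the CCR identity $\|a^*(w)\Psi\|^2=\|a(w)\Psi\|^2+\|w\|^2\|\Psi\|^2$; the estimate $\|\Phi(w)\Psi\|^2\le\|a(w)\Psi\|^2+\|a^*(w)\Psi\|^2$; and extension from the finite-particle core via graph-norm density and closedness of $\Phi(w)$. The paper only asserts the core property, whereas you supply a reason. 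That reason should also note that the core lies in $\dom G$, so that it is a dense subspace of $\dom G$ invariant under $e^{-itG}=\Gamma(e^{-itH_1})$.
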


\begin{proof}
Put $\eta=H_1^{-1/2}w$, so that $\eta\in\dom H_1^{1/2}$ and
$w=H_1^{1/2}\eta$.  On the $n$-particle sector of the finite-particle form core,
Cauchy--Schwarz in the first particle variable gives
\[
 n\bigl\|(\langle w|\otimes1)\Psi_n\bigr\|^2
\le
\|\eta\|^2\,
 n\bigl\|(H_1^{1/2}\otimes1)\Psi_n\bigr\|^2.
\]
Because $\Psi_n$ is symmetric, the last factor equals the quadratic form of
$d\Gamma(H_1)$ on that sector.  Summing over $n$ yields the annihilation estimate
\[
\|a(w)\Psi\|
\le
\|H_1^{-1/2}w\|\,\|G^{1/2}\Psi\|.
\]
The canonical commutation relations also give
\[
\|a^*(w)\Psi\|^2
=
\|a(w)\Psi\|^2+\|w\|^2\|\Psi\|^2.
\]
With the normalization $\Phi(w)=(a(w)+a^*(w))/\sqrt2$, it follows that
\[
\|\Phi(w)\Psi\|^2
\le
\|a(w)\Psi\|^2+\|a^*(w)\Psi\|^2,
\]
which yields \eqref{eq:inverse-field-bound} on the finite-particle form core.  This core is dense
in $\dom G^{1/2}$ for the graph norm.  If $\Psi_n\to\Psi$ in that norm, the estimate shows that
$\Phi(w)\Psi_n$ is Cauchy.  Since $\Phi(w)$ is closed, $\Psi\in\dom\Phi(w)$, and the same
estimate follows by passage to the limit.
\end{proof}

\begin{corollary}[Zero-mode dichotomy for the fixed witness]
\label{cor:injective-energy}
Under the assumptions of Theorem~\ref{thm:fixed-upper}, exactly one of the following two
regimes occurs.
\begin{enumerate}
\item If $u\notin\ker H_1$ or $v\notin\ker H_1$, then
\begin{equation}
4-\gamma_{G,E}(h)=\Theta(E^{-1})
\qquad(E\to\infty).
\label{eq:injective-sharp}
\end{equation}
\item If $u,v\in\ker H_1$, then
\begin{equation}
\gamma_{G,E}(h)=4
\qquad\text{for every }E>0.
\label{eq:zero-mode-exact}
\end{equation}
\end{enumerate}
In particular, if $H_1$ is injective, only the first regime occurs.
\end{corollary}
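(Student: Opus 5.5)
The two regimes are mutually exclusive by definition, so the work is to establish each conclusion separately.

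\emph{First regime.} The upper bound $4-\gamma_{G,E}(h)\le C/E$ is Theorem~\ref{thm:fixed-upper}. For the lower bound we aim to apply Corollary~\ref{cor:sharp-fixed}, which requires a quadrature $X$ satisfying \eqref{eq:shift} for a squared translation together with the coercive bound \eqref{eq:coercive}. Suppose first that $u\notin\ker H_1$, and take $S=-U^2=-W(2u)$. We will choose $X=\Phi(w)$ for a suitable $w\in\dom H_1^{-1/2}$.

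The Weyl relations give $W(2u)^*\Phi(w)W(2u)=\Phi(w)+\ell$ with $\ell$ a nonzero multiple of $\omega(2u,w)$. We therefore need $w\in\dom H_1^{-1/2}$ with $\omega(u,w)\ne0$. Write $u=u_0+u_+$ with $u_0\in\ker H_1$ and $0\ne u_+\in\mathfrak h_+$. The candidate $w=iu_+$ gives $\omega(u,w)=\operatorname{Im}\langle u,iu_+\rangle=\|u_+\|^2\ne0$, using that $\mathfrak h_+$ is a complex subspace orthogonal to $u_0$. However, $iu_+$ need not lie in $\dom H_1^{-1/2}$. Since $\dom H_1^{-1/2}$ is the range of $H_1^{1/2}|_{\mathfrak h_+}$, it is dense in $\mathfrak h_+$, so we can pick $w\in\dom H_1^{-1/2}$ close enough to $iu_+$ that $\omega(u,w)$ is still nonzero, by continuity of $\omega$.

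Lemma~\ref{lem:inverse-field-bound} then supplies $\dom G^{1/2}\subset\dom\Phi(w)$ and \eqref{eq:coercive} with $a=2\|H_1^{-1/2}w\|^2$ and $b=\|w\|^2$. It remains to check the domain condition $S\dom X=\dom X$, which follows because the Weyl unitary shifts $\Phi(w)$ by a scalar. Proposition~\ref{prop:coercive-lower} then yields the $c/(E+1)$ lower bound, which gives \eqref{eq:injective-sharp}.

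If instead $u\in\ker H_1$ but $v\notin\ker H_1$, we exchange roles. We use $S=-V^2$ and the bound $h^2=4A^2B^2\le4B^2$, so the proof of Proposition~\ref{prop:coercive-lower} goes through verbatim with $B$ in place of $A$; alternatively, swap $u$ and $v$, since $h$ changes only by sign and $\omega(v,u)=-\pi$. I expect this density and approximation step to be the main point needing care, since $iu_+$ itself need not be admissible.

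\emph{Second regime.} Suppose $u,v\in\ker H_1$. For each $R$, Lemma~\ref{lem:symplectic-transfer} applied to $z_1=2u$, $z_2=2v$ gives vectors $\Psi_R\in\cF(\mathfrak k)$ with stabilizer error $O(R^{-1})$. Here $c_{\mathfrak k}=0$, so \eqref{eq:G-form-number} gives $q_G[\Psi_R]=0$; these are the vectors of Proposition~\ref{prop:physical-stabilizers}, embedded in the full Fock space by tensoring with the vacuum on $\mathfrak k^\perp$.

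We then symmetrize exactly as in the proof of Theorem~\ref{thm:fixed-upper}, forming $\rho_{\Psi_R}$ with $V$. Lemma~\ref{lem:weyl-energy} with $\|H_1^{1/2}v\|=0$ shows that $V\Psi_R$ still has zero energy. Hence $\rho_{\Psi_R}\in\mathfrak S_{G,E}$ for every $E>0$, its mean of $h$ vanishes, and $\Tr(\rho h^2)\ge4(1-C/R)$ by \eqref{eq:h2-lower}. Letting $R\to\infty$ gives $\gamma_{G,E}(h)\ge4$, and \eqref{eq:gamma-bound} gives the reverse inequality, proving \eqref{eq:zero-mode-exact}.

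Finally, if $H_1$ is injective then $\ker H_1=\{0\}$. Since $\omega(u,v)=\pi$ forces $u,v\ne0$, the second regime cannot occur, so only the first does.
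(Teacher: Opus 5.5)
Your proof is correct. The first regime follows the paper's argument essentially verbatim. You choose $w\in\dom H_1^{-1/2}$ close to $iP_+u$, using density of $\dom H_1^{-1/2}$ in $\mathfrak h_+$, and set $X=\Phi(w)$. You then combine Lemma~\ref{lem:inverse-field-bound} with Proposition~\ref{prop:coercive-lower} and Theorem~\ref{thm:fixed-upper}. Your ``shift by a scalar'' for the domain condition is what the paper makes precise via uniqueness of the generator of $t\mapsto W(2u)^*e^{itX}W(2u)=e^{it\ell}e^{itX}$. Your remark $h^2=4A^2B^2\le4B^2$ makes the $u\leftrightarrow v$ exchange explicit.

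The zero-mode regime is where you take a different route. The paper factorizes $\cF(\mathfrak h)\simeq\cF(\mathfrak k)\otimes\cF(\mathfrak k^\perp)$ with $G=1\otimes d\Gamma(H_1|_{\mathfrak k^\perp})$. It then invokes Lemma~\ref{lem:h-spectral-diameter} (Stone--von Neumann plus the Zak joint spectrum) to get $\diam\sigma(h)=4$. Finally it uses the general fact that the supremum of the variance over normal states is $\frac14(\diam\sigma(h))^2$, and tensors near-optimal states with the vacuum.

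You instead rerun the construction of Theorem~\ref{thm:fixed-upper} in its degenerate case. Lemma~\ref{lem:symplectic-transfer} with $c_{\mathfrak k}=0$ gives zero-energy comb vectors. Lemma~\ref{lem:weyl-energy} with $H_1^{1/2}v=0$ shows the $V$-symmetrization costs no energy. Hence $\Var_{\rho_{\Psi_R}}(h)\ge4(1-C/R)$ at zero energy, for every $E>0$.

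What your route buys: it avoids the spectral-diameter lemma and the variance--diameter fact entirely, and in fact reproves $\|h\|=2$ as a byproduct. It also yields explicit near-optimal states with a quantitative rate. What the paper's route buys: it is more structural, showing that in the zero-mode case every normal state on $\cF(\mathfrak k)$ is admissible at zero energy. So the constrained problem coincides exactly with the unconstrained one for this witness.
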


\begin{proof}
Assume first that $u\notin\ker H_1$ and let $P_+$ be the orthogonal projection onto
$\mathfrak h_+=(\ker H_1)^\perp$.  Then $P_+u\ne0$, so the continuous real-linear functional
\[
w\longmapsto\omega(2u,w)
\]
is nonzero on $\mathfrak h_+$; for example it is nonzero at $w=iP_+u$.  The domain $\dom H_1^{-1/2}$ is dense in $\mathfrak h_+$: indeed,
$\mathbf 1_{[1/n,n]}(H_1)\mathfrak h\subset\dom H_1^{-1/2}$ and these spectral projections
converge strongly to $P_+$.  Hence one can choose $w\in\dom H_1^{-1/2}$ with
\[
\ell:=\omega(2u,w)\ne0.
\]
Set $X=\Phi(w)$.  The Weyl relations give, for every $t\in\mathbb R$,
\[
W(2u)^*e^{itX}W(2u)=e^{it\ell}e^{itX}.
\]
By uniqueness of the self-adjoint generator,
\[
W(2u)^*XW(2u)=X+\ell1
\]
with equality of domains.  Thus $S=-W(2u)$ preserves $\dom X$ and satisfies
\eqref{eq:shift}.  By
Lemma~\ref{lem:inverse-field-bound}, \eqref{eq:coercive} holds with
\[
a=2\|H_1^{-1/2}w\|^2,
\qquad
b=\|w\|^2.
\]
Corollary~\ref{cor:sharp-fixed} therefore applies.  If instead
$v\notin\ker H_1$, the same argument is used with $S=-V^2$ and the roles of $u$ and $v$
interchanged.

It remains to consider $u,v\in\ker H_1$.  Put
$\mathfrak k=\operatorname{span}_{\mathbb C}\{u,v\}\subset\ker H_1$.  Under the canonical
factorization
\[
\cF(\mathfrak h)\simeq \cF(\mathfrak k)\otimes\cF(\mathfrak k^\perp),
\]
the witness $h$ acts on the first factor, while
\[
d\Gamma(H_1)=1\otimes d\Gamma(H_1|_{\mathfrak k^\perp})
\]
because $H_1$ vanishes on $\mathfrak k$.  Lemma~\ref{lem:h-spectral-diameter} gives $\diam\sigma(h)=4$.  The supremum of the variance of a bounded self-adjoint operator over
normal states is one quarter of the square of its spectral diameter: the upper bound follows by
centering at the spectral midpoint, and the reverse inequality follows from equal mixtures of
vector states in arbitrarily small spectral neighborhoods of the two endpoints.  Hence the
supremum of $\Var_\rho(h)$ on normal states of $\cF(\mathfrak k)$ is $4$.  Tensoring such states with the
vacuum on $\cF(\mathfrak k^\perp)$ leaves their energy equal to zero.  Therefore
$\gamma_{G,E}(h)=4$ for every $E>0$.  The injective case follows from the first regime.
\end{proof}

\begin{proof}[Proof of Theorem~\ref{thm:intro-regular}]
The inverse-energy upper bound is Theorem~\ref{thm:fixed-upper}.  The two alternatives, including
the matching lower rate in the positive-energy case and exact saturation in the zero-mode case, are
Corollary~\ref{cor:injective-energy}.
\end{proof}

\begin{remark}[Mass-gap dependence]
\label{rem:mass-gap-dependence}
If $H_1\ge m1$ for some $m>0$, then $\ker H_1=\{0\}$ and the first regime of
Corollary~\ref{cor:injective-energy} applies.  For an explicit lower constant, choose $w$ in the finite-dimensional real
symplectic space generated by $u,v$ with $\omega(2u,w)\ne0$, set $X=\Phi(w)$, and write
$C_w:=2\|w\|^2$.  Then $X^2\le C_w(N+1)$ as quadratic forms, while
$G\ge mN$.  Proposition~\ref{prop:coercive-lower}, applied with $a=C_w/m$ and $b=C_w$, gives
\[
4-\gamma_{G,E}(h)
\ge
\frac{\ell^2}
{2\left\{\ell^2+4C_w(E/m+1)\right\}},
\qquad
\ell=\omega(2u,w)\ne0.
\]
Hence
\[
\liminf_{E\to\infty}
E\bigl(4-\gamma_{G,E}(h)\bigr)
\ge
\frac{m\ell^2}{8C_w}.
\]
Thus, for fixed witness data, the explicit lower asymptotic prefactor is linear in the mass gap.
The upper constant depends additionally on the form-energy data of the chosen Weyl pair.
\end{remark}

\begin{remark}[Connection with the dual-net setting]
The bosonic dual-net framework of \cite{Nasreddine2026DualityArxiv} identifies its local
second-order channel-order coefficient with the Weyl quantity used here.  Corollary~\ref{cor:weyl-dichotomy}
therefore gives exact finite-energy saturation on every nonzero symplectic branch.  When the
translation-covariant time-smearing construction of \cite[Thm.~5.12]{Nasreddine2026DualityArxiv}
produces directions in $\dom H_1^{1/2}$, the fixed trigonometric witnesses also fall under
Theorem~\ref{thm:fixed-upper} and Corollary~\ref{cor:injective-energy}.
\end{remark}

\section{Conclusion}

A nonzero symplectic pairing has two different energy-constrained manifestations.  If the bounded
Weyl symmetry may depend on the state, the maximal commutator variance is attained at every
admissible energy threshold.  The phase selection is essential, and in the oscillator
representation the resulting Borel symmetry can leave the number form domain.  Fixing the
trigonometric witness instead keeps the generator in the natural energy graph space and exposes a
nontrivial energy cost.

The exact identity for $4-h^2$ reduces that cost to a phase-fixed problem for a commuting Weyl
frame.  For a positive one-mode quadratic energy the corresponding all-state converse and the Zak
quasimodes meet at the same matrix balance, giving the explicit $E^{-1}$ coefficient and an
$O(E^{-2})$ remainder.  For general $d\Gamma(H_1)$ the same witness has an inverse-energy deficit
unless both directions are zero modes, in which case the endpoint is already attained at zero
energy.  The one-mode coefficient concerns the fixed commutator/stabilizer diagnostic; it is not a
code-fidelity or logical-error parameter.

\section*{Data Availability Statement}
Data sharing is not applicable to this article as no new data were created or analyzed in this study.

\bibliographystyle{unsrtnat}
\setlength{\bibsep}{2pt}
\bibliography{refs}

\end{document}